\def\BibTeX{{\rm B\kern-.05em{\sc i\kern-.025em b}\kern-.08em
    T\kern-.1667em\lower.7ex\hbox{E}\kern-.125emX}}
\pgfplotsset{compat=newest}
\newlength\axisheight
\newlength\axiswidth
\definecolor{morange}{RGB}{255,104,25}
\definecolor{mpurple}{RGB}{245,0,228}
\definecolor{mblue}{RGB}{19,15,255}
\newtheorem{theorem}{Theorem}
\newtheorem{assumption}{Assumption}
\newtheorem{definition}{Definition}
\newtheorem{remark}{Remark}
\newtheorem{lemma}{Lemma}
\newtheorem{algorithm}{Algorithm}
\newtheorem{proof}{Proof}  
\newcommand{\mr}[1]{\mathrm{#1}}
\newcommand{\ci}{{\mr{c}, i}}
\DeclareMathOperator*{\argmin}{argmin}
\begin{document}
\title{Distributed MPC for Self-Organized Cooperation of Multiagent Systems - Extended Version}
\author{Matthias K\"ohler, Matthias A. M\"uller, and Frank Allg\"ower
\thanks{F. Allg\"ower and M. A. M\"uller are thankful that this work was funded by the Deutsche Forschungsgemeinschaft (DFG, German Research Foundation) -- AL 316/11-2 - 244600449.
F. Allgöwer is thankful that this work was funded by the Deutsche Forschungsgemeinschaft (DFG, German Research Foundation) under Germany’s Excellence Strategy -- EXC 2075 -- 390740016.}
\thanks{M. K\"ohler and F. Allg\"ower are with the University of Stuttgart, Institute for Systems Theory and Automatic Control, 70550 Stuttgart, Germany (e-mails: koehler@ist.uni-stuttgart.de, allgower@ist.uni-stuttgart.de).}
\thanks{M. A. M\"uller is with the Leibniz University Hannover, Institute of Automatic Control, 30167 Hanover, Germany (e-mail: mueller@irt.uni-hannover.de).}
}

\maketitle
\thispagestyle{firststyle}

\begin{abstract}
We present a sequential distributed model predictive control (MPC) scheme for cooperative control of multiagent systems with dynamically decoupled heterogeneous nonlinear agents subject to individual constraints.
In the scheme, we explore the idea of using tracking MPC with artificial references to let agents coordinate their cooperation without external guidance.
Each agent combines a tracking MPC with artificial references, the latter penalized by a suitable coupling cost.
They solve an individual optimization problem for this artificial reference and an input that tracks it, only communicating the former to its neighbors in a communication graph.
This puts the cooperative problem on a different layer than the handling of the dynamics and constraints, loosening the connection between the two.
We provide sufficient conditions on the formulation of the cooperative problem and the coupling cost for the closed-loop system to asymptotically achieve it.
Since the dynamics and the cooperative problem are only loosely connected, classical results from distributed optimization can be used to this end.
We illustrate the scheme's application to consensus and formation control.
\end{abstract}


\section{Introduction}
Distributed control of large-scale systems receives unabated attention in the scientific community due to further developments in communication technology and complexity of systems and objectives.
In particular, control of multiagent systems comprising independent agents that want to achieve a cooperative goal has various applications, including, e.g., coordination of vehicles~\cite{Smith2005,Dimarogonas2009,Ritz2013,Zheng2017,Fonseca2019,Klausen2020},
\cite{Ryan2004, Cao2013, Yu2021} (see the references therein) for diverse purposes, sensor networks~\cite{OlfatiSaber2008}, and air traffic management~\cite{Tomlin1998}.

Since most of these systems are nonlinear and subject to constraints, distributed model predictive control (MPC) is a capable choice as the control strategy.
Distributed MPC has been successfully applied to various systems architectures using a wide range of solution strategies, see, e.g.,~\cite{Scattolini2009, Mueller2017b}.

Many cooperative tasks can be formulated based on consensus, i.e., agents agreeing on a common value in some output~\cite{OlfatiSaber2007}.
For multiagent systems subject to input constraints that should reach a consensus, distributed MPC schemes have been developed for double-integrator dynamics~\cite{Cheng2015}
and for homogeneous linear dynamics~\cite{Li2015}.
In~\cite{Johansson2006}, consensus is achieved by tracking an optimal consensus point that has been previously computed by solving a centralized optimization problem.
A non-iterative distributed MPC scheme with local optimization problems is proposed in~\cite{Zheng2017} for heterogeneous vehicle platoons.
In the local open-loop optimal control problems, the distance to a communicated trajectory is penalized.
In addition, an average of the assumed neighbors' outputs is used as a terminal equality constraint, requiring that in the first few time steps the optimization problems are feasible, but allowing that the desired set point of each agent need not be known \emph{a priori}.
Self-organized output consensus through distributed MPC for heterogeneous linear agents subject to input and state constraints is achieved in~\cite{Hirche2020}.
The proposed scheme combines a tracking MPC formulation with a suitable cost such that the agents continuously move a tracked artificial reference towards consensus, eventually leading to convergence of the agents' outputs.
A more general framework for cooperative tasks for multiagent systems with nonlinear dynamics and coupling in constraints is presented in~\cite{Muller.2012}, where a sequential distributed MPC scheme is designed in order to stabilize a set instead of only a specific equilibrium.

A tracking MPC scheme using artificial references has been introduced in~\cite{Limon.2008, Limon2018}, where a steady state is concurrently computed and tracked by the system through the same optimization problem, with the goal to track an externally supplied steady state.
The main advantages of this scheme over a standard MPC formulation are a larger region of attraction and the independence of recursive feasibility, i.e., the continuous operation of the scheme, of the external reference.

Moreover, an extension to the distributed linear setting is presented in~\cite{Ferramosca2011}.
In~\cite{Conte2016}, also in the linear setting, an optimization problem with artificial references that is suitable for distributed optimization is formulated and the distributed synthesis of the used ingredients is presented.
Both approaches could be used to track an external reference that achieves a cooperative goal.
This idea is used in~\cite{Carron2020a} for nonlinear multiagent systems to solve the coverage problem.
In~\cite{Santana2022}, a similar iterative distributed MPC scheme as in~\cite{Ferramosca2011} is proposed for economic operation of input-coupled linear systems, also using artificial references to increase the domain of attraction, and a Taylor series approximation of the economic objective.

In this paper, we propose a sequential distributed MPC scheme for nonlinear multiagent systems.
Similarly to the works cited above, each agent uses a local tracking MPC formulation with an artificial reference.
However, no external references are supplied, and the eventual solution of the cooperative goal is not known \emph{a priori}.
Instead, the reference is implicitly provided by a suitably designed cost that encodes the cooperative goal and connects neighboring agents' references. 
Hence, tracking and cooperative coordination are combined in one scheme.
This enables the multiagent system to achieve the cooperative goal through self-organization without external inputs given, e.g., by a global reference governor. 
Nonetheless, due to the use of artificial references, the connection between the cooperative problem and the agents' dynamics is sufficiently loose to allow for an almost decoupled analysis.
We present sufficient conditions on this cost for cooperation and the associated cooperation set, to which the outputs of the agents should converge.
We show that a related set on state level is asymptotically stable for the closed loop, resulting in asymptotic achievement of the encoded cooperative goal.
Due to the use of artificial references, a large region of attraction is expected, while they are the only communicated data.

In~\cite{Hirche2020}, we used a similar idea to achieve self-organized consensus in the outputs of heterogeneous linear agents.
In contrast, in this paper, we consider heterogeneous nonlinear agents and provide a framework for general cooperative tasks, which includes consensus as a special case as illustrated in Section~\ref{ssec:consensus}, and allows agents more flexibility.

The following parts are structured as follows.
The setup and control goal is introduced in Section~\ref{sec:setup} and the distributed MPC scheme for self-organized cooperation is presented in Section~\ref{sec:scheme}.
In Section~\ref{sec:analysis}, we prove the asymptotic achievement of the control goal of the closed-loop system.
We apply the scheme to two examples, consensus and formation control, in Section~\ref{sec:example}, before providing concluding remarks in~\ref{sec:conclusion}.

\subsection{Notation}
The interior of a set $\mathcal{S}$ is denoted by $\mr{int}\,\mathcal{S}$.
The non-negative reals are denoted by $\mathbb{R}_{\ge 0}$. $\mathbb{N}_0$ denotes the natural numbers including 0.
The set of integers from $a$ to $b$, $a \le b$, is denoted by $\mathbb{I}_{a:b}$.
The identity matrix of dimension $n$ is $I_{n}$.
The maximum and minimum eigenvalues of a matrix $A=A^\top$ are denoted by $\lambda_{\max}(A)$ and $\lambda_{\min}(A)$.
Given a positive (semi-)definite matrix $A=A^\top$, the corresponding (semi-)norm is written as $\Vert x \Vert_A = \sqrt{x^\top A x}$.
Given a collection of $m$ vectors $v_i \in \mathbb{R}^{n_i}$, $i \in \mathbb{I}_{1:m}$, we denote the stacked vector by $v = \mathrm{col}_{i=1}^m(v_i) = [v_1^\top \dots v_m^\top]^\top$.
The Euclidean norm of $x \in \mathbb{R}^n$ is denoted by $\Vert x \Vert$ and the maximum norm by $\Vert x \Vert_\infty$.
The Euclidean distance of $x \in \mathbb{R}^n$ to a closed set $\mathcal{X} \subseteq \mathbb{R}^n$ is denoted by $\vert x \vert_\mathcal{X}$, i.e., $\vert x \vert_\mathcal{X} = \min_{\tilde{x} \in \mathcal{X}} \Vert x - \tilde{x} \Vert$.
$P_{\mathcal{X}}\left[\cdot\right]$ denotes the projection onto the set $\mathcal{X}$.
The block-diagonal matrix with $m$ block-diagonal elements $A_i$ is denoted by $\mathrm{diag}_{i=1}^m (A_i)$.
The convex hull of a set $\mathcal{A}$ is denoted by $\mathrm{co}(\mathcal{A})$.

\section{Nonlinear multiagent system}\label{sec:setup}
We consider a multiagent system with $m \in \mathbb{N}$ agents where the agents have heterogeneous, nonlinear discrete-time dynamics of the following form
\begin{subequations}
    \begin{align}
        x_i(t + 1) &= f_i(x_i(t), u_i(t))\label{eq:agent_dynamics}, \\
        y_i(t) &= h_i(x_i(t), u_i(t))\label{eq:output_relation}
    \end{align}
\end{subequations}
with state $x_i(t) \in \mathbb{X}_i \subseteq \mathbb{R}^{n_i}$, input $u_i(t) \in \mathbb{U}_i \subseteq \mathbb{R}^{q_i}$, and output $y_i(t) \in \mathbb{Y}_i \subseteq \mathbb{R}^{p}$ at time $t\in\mathbb{N}_0$, continuous $f_i: \mathbb{X}_i \times \mathbb{U}_i \to \mathbb{X}_i$, and continuous $h_i: \mathbb{X}_i \times \mathbb{U}_i \to \mathbb{Y}_i$.
We assume that the state and input should satisfy pointwise-in-time constraints $(x_i(t), u_i(t)) \in \mathcal{Z}_i \subset \mathbb{X}_i \times \mathbb{U}_i$ for all $t \in \mathbb{N}_0$, where $\mathcal{Z}_i$ is compact.
Furthermore, throughout the paper, we assume that state measurements are available.

The communication topology of the multiagent system is modeled by a graph $\mathcal{G} = (\mathcal{V}, \mathcal{E})$ with vertices $\mathcal{V}$ and edges $\mathcal{E}$.
We assign to each agent a vertex and connect agent $i$ and agent $j$ with a directed edge $e_{ij}$, pointing at vertex $j$, if the latter can receive information from the former.
The set of neighbors of agent $i$ then contains all agents from which it may receive information, i.e., $\mathcal{N}_i = \{j \in\mathcal{V} \mid e_{ji} \in \mathcal{E}\}$.
We assume that communication is bilateral, i.e., if $(i,j)\in\mathcal{E}$, so is $(j,i)\in\mathcal{E}$, and that the graph is connected.

The goal is to achieve cooperation between the agents, more precisely, their outputs should converge to some set that characterizes the cooperation goal.
For example, if this goal would be (output) consensus, then the outputs should converge to the set of equal output values.
In addition, we require that the constraints are satisfied at all times.
Finally, the agents should only rely on information provided by neighbors and decide themselves on how to attain the cooperation goal.
In order to achieve this goal, we propose a sequential distributed MPC scheme in the next section.

\section{Distributed MPC for cooperation}\label{sec:scheme}
The sequential distributed MPC scheme that we develop in this section comprises essentially two parts.
In a tracking part, the nonlinear dynamics and constraints are handled by using MPC for tracking~\cite{Limon.2008,Koehler2020b}.
Moreover, a suitably designed cost penalizes the deviation of artificial outputs, called \emph{cooperation outputs}, from the cooperative goal and plays a comparable role as the offset cost in~\cite{Limon.2008}.
In a sequential scheme, each agent computes a cooperation output and an input sequence to track it, by solving an optimization problem.

We will restrict the possible cooperation outputs to those which have a corresponding (cooperation) equilibrium of~\eqref{eq:agent_dynamics} strictly inside the constraint set.
\begin{assumption}\label{asm:reference}
    The cooperation output satisfies
    $
    y_{\mr{c},i} \in \mathcal{Y}_i \subseteq
    \{
    y_i
    \mid
    \exists r_i = (x_i, u_i) \in \mathcal{Z}_i^\mr{c},
    y_i = h_i(x_i, u_i)
    \}%
    $
    where
    $
    \mathcal{Z}_i^\mr{c} \subseteq
    \{
    r_i
    \mid
    r_i = (x_i, u_i) \in \mr{int} \, \mathcal{Z}_i, \,
    x_i = f_i(x_i, u_i)
    \}
    ,
    $
    $\mathcal{Y}_i$ is compact and convex, and $\mathcal{Z}_i^\mr{c}$ is compact.
\end{assumption}
A cooperation output $y_{\mr{c}, i}$ and corresponding cooperation equilibrium $r_{\mr{c},i} = (x_{\mr{c},i}, u_{\mr{c},i})$ that satisfy Assumption~\ref{asm:reference} are called \emph{admissible} in the following.

For simplicity, we use the following quadratic stage cost:
\begin{equation*}
        \ell_i(x_i, u_i, r_{\mr{c},i}) = \Vert x_i - x_{\mr{c},i} \Vert^2_{Q_i} + \Vert u_i - u_{\mr{c},i} \Vert^2_{R_i}
\end{equation*}
where $\ell_i: \mathbb{X}_i \times \mathbb{U}_i \times \mathcal{Z}_i^\mr{c} \to \mathbb{R}_{\ge 0}$, and $Q_i \in \mathbb{R}^{n_i \times n_i}$, $R_i \in \mathbb{R}^{q_i \times q_i}$ are positive-definite weighting matrices.
With this, we define the tracking cost
$
    J_i^\mr{tr}(x_i(\cdot \vert t), u_i(\cdot \vert t), r_{\mr{c},i}) = \sum_{k=0}^{N-1} \ell_i(x_i(k \vert t), u_i(k \vert t), r_{\mr{c},i})
    + V_i^\mr{f}(x_i(N \vert t), r_{\mr{c},i})
$
where $N \in \mathbb{N}_0$ is the prediction horizon, $x_i(k \vert t)$, $u_i(k \vert t)$ denote the $k$-step ahead predicted state and input trajectory from time $t$, $r_{\mr{c},i}$ is the cooperation equilibrium, and $V_i^\mr{f}(x_i(N \vert t), r_{\mr{c},i})$ is the momentarily specified terminal cost.

Before we define the necessary terminal ingredients, we introduce the following auxiliary optimization problem which simplifies the analysis of our scheme.
\begin{subequations}\label{eq:MPC_for_tracking}
    \begin{align}
        W_i(x_i(t), r_{\mr{c},i}) &= \min_{u_i(\cdot \vert t)} J_i^\mr{tr}(x_i(\cdot \vert t), u_i(\cdot \vert t), r_{\mr{c},i})
        \label{eq:MPC_for_tracking_cost}
        \\
        \intertext{subject to}
        x_i(0 \vert t) &= x_i(t), \label{eq:MPC_for_tracking_IC}
        \\
        x_i(k+1 \vert t) &= f_i(x_i(k \vert t), u_i(k \vert t)), \; k \in \mathbb{I}_{0:N-1}, 
        \label{eq:MPC_for_tracking_dynamics}
        \\
        (x_i(k \vert t), u_i(k \vert t)) &\in \mathcal{Z}_i, k \in \mathbb{I}_{0:N-1}, 
        \label{eq:MPC_for_tracking_constraints}
        \\
        x_i(N \vert t) &\in \mathcal{X}_i^\mr{f}(r_{\mr{c},i})
    \end{align}
\end{subequations}
with the momentarily specified terminal set $\mathcal{X}_i^\mr{f}(\cdot)$.

In order to achieve the tracking goal while satisfying the constraints, we need suitable terminal ingredients that satisfy the following assumption for the tracking MPC scheme.
These are standard assumptions in MPC for tracking (cf.~\cite[Assumption 3]{Limon2018},~\cite[Assumption 2]{Koehler2020b}).
\begin{assumption}\label{asm:stabilising_terminal_ingredients}
There exist a control law
        $
        k_{i}^\mathrm{f}: \mathbb{X}_i \times \mathcal{Z}_i^\mr{c} \to \mathbb{U}_i
        $, a continuous terminal cost
        $
        V_i^\mr{f}: \mathbb{X}_i \times \mathcal{Z}_i^\mr{c} \to \mathbb{R}_{\ge 0}
        $ and a compact terminal set
        $
        \mathcal{X}_i^\mr{f}(r_{\mr{c},i}) \subseteq \mathbb{X}_i
        $ such that for any $r_{\mr{c},i} \in \mathcal{Z}_i^\mr{c}$ and any $x_i \in \mathcal{X}_i^\mr{f}(r_{\mr{c},i})$
        \begin{subequations}
            \begin{align}
                V_i^\mr{f}(x_i^{+}, r_{\mr{c},i}) - V_i^\mr{f}(x_i, r_{\mr{c},i}) 
                &\le  - \ell_i(x_i, k^\mr{f}_i(x_i, r_{\mr{c},i}), r_{\mr{c},i}), \label{eq:terminal_cost_decrease}
                \\
                (x_i, k^\mr{f}_i(x_i, r_{\mr{c},i})) &\in \mathcal{Z}_i,
                \\
                x_i^{+} &\in \mathcal{X}_i^\mr{f}(r_{\mr{c},i})
            \end{align}
        \end{subequations}
        where $x_i^{+} = f_i(x_i, k^\mr{f}_i(x_i, r_{\mr{c},i}))$.

        Moreover, there exist constants $c_{\mr{u},i}, \epsilon_i > 0$ for any admissible (satisfying Assumption~\ref{asm:reference}) cooperation equilibrium $r_{\mr{c},i}$ such that for any $x_i \in \mathbb{X}_i$ with $\Vert x_i - x_{\mr{c},i} \Vert_{Q_i} \le \epsilon_i$, the MPC for tracking problem~\eqref{eq:MPC_for_tracking} is feasible and
        \begin{equation}\label{eq:tracking_value_function_upper_bound}
            W_i(x_i, r_{\mr{c},i}) \le c_{\mr{u},i} \Vert x_i - x_{\mr{c},i} \Vert_{Q_i}^2.
        \end{equation}
\end{assumption}
See, e.g., \cite[Lemma 5]{Koehler2020b} for a sufficient condition for Assumption~\ref{asm:stabilising_terminal_ingredients}, and~\cite{Kohler.2020} for ways to compute these generalized terminal ingredients offline using linear matrix inequalities.
If the system is locally uniformly finite time controllable, a terminal equality constraint, i.e., $\mathcal{X}_i^\mr{f}(r_{\mr{c},i}) = \{x_{\mr{c},i}\} $ and $V_i^\mr{f} \equiv 0$, is also sufficient~\cite{Koehler2020b}.

We now turn our attention to characterizing cooperative goals, which we do with a set and an associated cost used in the objective function of the MPC.
\begin{definition}\label{def:cooperation_set_and_cost}
    A non-empty set $\mathcal{Y}^{\mathrm{c}} \subseteq \prod_{i=1}^m \mathcal{Y}_i$ is called an \emph{output cooperation set} if it is compact and convex, and the cooperative goal is achieved whenever $y \in \mathcal{Y}^{\mathrm{c}}$.
    A continuously differentiable function $V^\mr{c}: \prod_{i=1}^m \mathcal{Y}_i \to \mathbb{R}_{\ge 0}$ is an associated \emph{cost for cooperation} if it has the following properties.
    First, there exist $\underline{\alpha}, \bar{\alpha} \in \mathcal{K}_\infty $ such that
    \begin{equation}\label{eq:cost_for_cooperation_set_indicator}
        \underline{\alpha}(\vert y \vert_{\mathcal{Y}^{\mathrm{c}}}) \le V^\mathrm{c}(y) \le \bar{\alpha}(\vert y \vert_{\mathcal{Y}^{\mathrm{c}}}),
    \end{equation}
    i.e., it gives a good indication of the distance of the global output to the cooperation set.
    Second, $V^\mr{c}$ is convex.
    Third, it should be separable such that it is compatible with the communication topology $\mathcal{G}$ of the multiagent system
    \begin{equation}\label{eq:cost_for_cooperation_distributable}
        V^\mathrm{c}(y) = \sum_{i=1}^{m} \sum_{j\in\mathcal{N}_i} V_{ij}^\mathrm{c}(y_i, y_{j}).
    \end{equation}
\end{definition}
In Section V, we illustrate how these design parameters may be chosen for the examples of consensus and formation control.
We assume the cost of cooperation $V^\mr{c}$ to be convex since it simplifies our analysis considerably.
As we will see later, the main property that we need is that at least one agent can reduce $V^\mr{c}$ by moving its cooperation output in a suitable direction.
In Section V.B, we demonstrate in a simulation that this assumption could possibly be weakened.

The main principle we employ to solve the control goal, i.e., to achieve $y(t) = \mathrm{col}_{i=1}^m(y_i(t)) \to \mathcal{Y}^{\mathrm{c}}$ as $t \to \infty$, is to add a cooperation output $y_{\mr{c},i}(t)$ as an additional decision variable in the MPC for tracking problem~\eqref{eq:MPC_for_tracking} and use the cost for cooperation as a suitable cost to penalize its distance to the cooperation set $\mathcal{Y}^{\mathrm{c}}$.
The individual tracking problems are thus coupled through the cost for cooperation, each agent decides which cooperation output to track and coordination between the agents is necessary.
Later, we will derive sufficient conditions on the cost for cooperation, such that the cooperation outputs $y_{\mr{c},i}(t)$ converge to the cooperation set and the outputs $y_i(t)$ follow due to the tracking part.

With all necessary assumptions outlined, we define the local objective function for agent $i$ as
\begin{align*}
    &J_i(x_i(t), u_i(\cdot \vert t), y_{\mr{c}, i}(t), \bar{y}_{\mathcal{N}_i}(t)) 
    =
    J_i^\mr{tr}(x_i(\cdot \vert t), u_i(\cdot \vert t), r_{\mr{c},i}(t))
    \\
    &\phantom{====}+ \sum_{j\in\mathcal{N}_i} V_{ij}^\mr{c}(y_{\mr{c}, i}(t), \bar{y}_{\mr{c}, j}(t))
    + V_{ji}^\mr{c}(\bar{y}_{\mr{c}, j}(t), y_{\mr{c}, i}(t))
\end{align*}
where $\bar{y}_{\mr{c}, j}(t)$ denotes the communicated cooperation output that was sent to agent $i$ by its neighbor $j$, and $y_{\mr{c}, i}(t) = h(r_{\mr{c},i}(t))$ (with a slight abuse of notation).
The terms $V_{ji}^\mr{c}(\bar{y}_{\mr{c}, j}(t), y_{\mr{c}, i}(t))$ are included so that agent $i$ does not increase its neighbors' cost more than it reduces its own, facilitating cooperation.

Then, agent $i$'s MPC problem is
\begin{subequations}\label{eq:MPC_for_cooperation}
    \begin{align}
        \min_{u_i(\cdot \vert t), y_{\mr{c}, i}(t)} &J_i(x_i(t), u_i(\cdot \vert t), y_{\mr{c}, i}(t), \bar{y}_{\mathcal{N}_i}(t))
        \label{eq:MPC_for_cooperation_objective_function}
        \\
        \intertext{subject to \eqref{eq:MPC_for_tracking_IC}, \eqref{eq:MPC_for_tracking_dynamics}, \eqref{eq:MPC_for_tracking_constraints}}
        x_i(N \vert t) &\in \mathcal{X}_i^\mr{f}(r_{\mr{c}, i}(t)),
        \\
        y_{\mr{c}, i}(t) &\in \mathcal{Y}_i. \label{eq:MPC_for_cooperation_cooperation_output_constraints}
    \end{align}
\end{subequations}
The solution of~\eqref{eq:MPC_for_cooperation} at time $t$ is denoted by $u_i^*(\cdot \vert t)$ and $y_{\mr{c}, i}^*(t)$ and depends on the current measurement $x_i(t)$ and the available communicated cooperation outputs $\bar{y}_{\mathcal{N}_i}(t)$ of the neighbors.
(If the solution is not unique, a static map needs to be used that selects a unique element of the solution set.)

As is proven momentarily, the cooperation goal is then achieved by the following algorithm.
\begin{algorithm}[Sequential DMPC scheme for cooperation]\label{alg:sequential_MPC}
    Initialization of agent $i$: Assume anything for $\bar{y}_{j}(0)$ for $j \in \mathcal{N}_i$, e.g., $\bar{y}_{j}(0) \in \mathcal{Y}_i$, and compute $u_i^*(\cdot \vert 0)$ and $y_{\mr{c}, i}^*(0)$. Share $y_{\mr{c}, i}^*(0)$ with neighbors.\newline
    At each time step $t \in \mathbb{N}$:
    \begin{enumerate}
        \item Sequentially, for $i = 1,\dots, m$:
        If agent $i$ has received $y_{\mr{c}, j}^*(t)$, set $\bar{y}_{j}(t) = y_{\mr{c}, j}^*(t)$, if not, set $\bar{y}_{j}(t) = y_{\mr{c}, j}^*(t-1)$.
        Then, agent $i$ solves~\eqref{eq:MPC_for_cooperation} and sends $y_{\mr{c}, i}^*(t)$ to its neighbors.
        \label{alg:sequential_MPC_step1}
        \item Each agent applies $u_{\mathrm{MPC},i}(t) = u_i^*(0 \vert t)$. Go to Step 1.
    \end{enumerate}
\end{algorithm}
\begin{remark}
    As far as recursive feasibility and stability of the cooperation set, i.e., fulfillment of the cooperation goal, is concerned, it is not necessary that all agents solve their optimization problem in the same time step.
    Instead, this could also be spread over many, e.g., $m$, time steps.
    The reason is that each agent can simply continue to track its currently optimal cooperation output until it is their turn in the sequence.
    Thus, agents may solve~\eqref{eq:MPC_for_cooperation} in parallel, if they hold their cooperation output constant, i.e., set $y_{\mathrm{c},i}^*(t) = y_{\mathrm{c},i}^*(t-1)$, when it is not their turn.
    See also~\cite{Muller.2014,Trodden.2013} for a comparable discussion.
    In addition, although in this scheme we assume instant and lossless communication, this suggests that communication delays or dropouts could be handled by the scheme, perhaps subject to minor adaptions.
    Furthermore, as also pointed out in \cite[Section 3.2.3]{Muller.2014}, it is simple to show that two non-neighboring systems can optimize in parallel, i.e., perform Step~\ref{alg:sequential_MPC_step1}) of Algorithm~\ref{alg:sequential_MPC} concurrently. 
    Scalability of the proposed distributed MPC scheme hence depends on the topology of the communication graph $\mathcal{G}$.
    A rigorous investigation of this is open to future work.
\end{remark}

\section{Analysis of the closed-loop system}\label{sec:analysis}
We prove in this section that the global closed-loop system asymptotically achieves the cooperative goal. 
It is given by
\begin{subequations}\label{eq:global_closed_loop}
    \begin{align}
        x(t + 1) &= f(x(t), u_\mathrm{MPC}(t)), & x(0) = x_0,
        \label{eq:global_closed_loop_a}
        \\
        y(t) &= h(x(t), u_\mathrm{MPC}(t)), &
    \end{align}
\end{subequations}
with some initial condition $x_0 \in \mathbb{X} = \prod_{i=1}^m \mathbb{X}_i$, $ f = \mathrm{col}_{i=1}^m(f_i)$, $h = \mathrm{col}_{i=1}^m(h_i)$, $x(t) = \mathrm{col}_{i=1}^m(x_i(t))$, and similar for $u_\mathrm{MPC}(t)$ and $y(t)$.
First, we show that the closed loop~\eqref{eq:global_closed_loop} satisfies the state and input constraints by proving recursive feasibility of~\eqref{eq:MPC_for_cooperation} in Algorithm~\ref{alg:sequential_MPC}.
\begin{theorem}\label{thm:recursive_feasibility}
    Let Assumptions~\ref{asm:reference} and \ref{asm:stabilising_terminal_ingredients} hold.
    Then, if Algorithm~\ref{alg:sequential_MPC} is applied for any initial condition $x_0$ for which~\eqref{eq:MPC_for_cooperation} is feasible, the optimization problem~\eqref{eq:MPC_for_cooperation} in Step 1 of Algorithm~\ref{alg:sequential_MPC} is recursively feasible for all agents $i \in \mathcal{V}$.
    Consequently, the closed-loop system~\eqref{eq:global_closed_loop} satisfies the constraints, i.e., $(x(t), u_\mathrm{MPC}(t)) \in \mathcal{Z} = \mathcal{Z}_1 \times \dots \times \mathcal{Z}_m$ for all $t \in \mathbb{N}_0$.
\end{theorem}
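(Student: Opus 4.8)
The plan is to reduce the claim to the standard recursive-feasibility argument for MPC for tracking, carried out agent by agent. The crucial observation is that the constraints of~\eqref{eq:MPC_for_cooperation}, namely~\eqref{eq:MPC_for_tracking_IC}, \eqref{eq:MPC_for_tracking_dynamics}, \eqref{eq:MPC_for_tracking_constraints}, the terminal constraint $x_i(N\vert t) \in \mathcal{X}_i^\mr{f}(r_{\mr{c},i}(t))$, and $y_{\mr{c},i}(t) \in \mathcal{Y}_i$, involve only Agent $i$'s own quantities: the communicated neighbor references $\bar{y}_{\mathcal{N}_i}(t)$ enter exclusively through the objective $J_i$. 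Hence the feasible set of Agent $i$'s problem is independent of its neighbors and of the sequential order in which the agents optimize, so feasibility can be established for each agent in isolation.

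First I would fix an agent $i$ and a time $t$ at which~\eqref{eq:MPC_for_cooperation} is feasible, and denote by $u_i^*(\cdot\vert t)$, $y_{\mr{c},i}^*(t)$ the optimizer, with associated cooperation equilibrium $r_{\mr{c},i}^*(t) = (x_{\mr{c},i}^*(t), u_{\mr{c},i}^*(t))$ and optimal predicted state trajectory $x_i^*(\cdot\vert t)$. Since the applied input is $u_{\mr{MPC},i}(t) = u_i^*(0\vert t)$, the measured successor satisfies $x_i(t+1) = x_i^*(1\vert t)$. Next I would build the usual shifted candidate for time $t+1$ that keeps the cooperation output unchanged, setting $\tilde{y}_{\mr{c},i}(t+1) = y_{\mr{c},i}^*(t)$ (hence the same equilibrium $r_{\mr{c},i}^*(t)$ and the same terminal set), together with
\[
\tilde{u}_i(k\vert t+1) =
\begin{cases}
u_i^*(k+1\vert t), & k \in \mathbb{I}_{0:N-2},\\
k_{\mr{f},i}(x_i^*(N\vert t), r_{\mr{c},i}^*(t)), & k = N-1.
\end{cases}
\]
Its feasibility follows termwise: the initial condition holds because $x_i(t+1) = x_i^*(1\vert t)$; the path constraints on the first $N-1$ steps hold because they held at time $t$; and on the last step, the pair $(x_i^*(N\vert t), k_{\mr{f},i}(x_i^*(N\vert t), r_{\mr{c},i}^*(t)))$ lies in $\mathcal{Z}_i$ while the resulting terminal state $x_i(N\vert t+1)$ again lies in $\mathcal{X}_i^\mr{f}(r_{\mr{c},i}^*(t))$, which is precisely what Assumption~\ref{asm:stabilising_terminal_ingredients} guarantees for $x_i^*(N\vert t) \in \mathcal{X}_i^\mr{f}(r_{\mr{c},i}^*(t))$. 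Finally $\tilde{y}_{\mr{c},i}(t+1) = y_{\mr{c},i}^*(t) \in \mathcal{Y}_i$ by feasibility at time $t$. This shows~\eqref{eq:MPC_for_cooperation} is feasible at $t+1$, and induction over $t$ yields recursive feasibility for every agent $i \in \mathcal{V}$.

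Constraint satisfaction of the closed loop then follows immediately: since $u_{\mr{MPC},i}(t) = u_i^*(0\vert t)$ is the first element of a feasible input, we have $(x_i(t), u_{\mr{MPC},i}(t)) \in \mathcal{Z}_i$, and stacking over $i$ gives $(x(t), u_{\mr{MPC}}(t)) \in \mathcal{Z}$ for all $t \in \mathbb{N}_0$.

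The argument is essentially routine, so I do not expect a genuine obstacle; the one point that needs care is making the decoupling of feasibility from the cooperative coupling explicit. Concretely, I must verify that holding $y_{\mr{c},i}$, and thus $r_{\mr{c},i}$, fixed across the time step is admissible, so that the terminal ingredients of Assumption~\ref{asm:stabilising_terminal_ingredients} remain valid for the same equilibrium, and that neither the neighbor terms $V_{ij}^\mr{c}$, $V_{ji}^\mr{c}$ in $J_i$ nor the sequential and possibly asynchronous update of $\bar{y}_{\mathcal{N}_i}$ in Algorithm~\ref{alg:sequential_MPC} alters the feasible set. Once this separation is spelled out, recursive feasibility is inherited directly from the single-agent tracking MPC.
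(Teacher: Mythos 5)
Your proposal is correct and takes essentially the same route as the paper's own proof: both rest on the observation that the neighbors' communicated outputs enter~\eqref{eq:MPC_for_cooperation} only through the objective, so feasibility decouples across agents, and both use the candidate consisting of the unchanged cooperation output $y_{\mr{c},i}^*(t)$ together with the shifted input sequence appended by the terminal control law from Assumption~\ref{asm:stabilising_terminal_ingredients}, with closed-loop constraint satisfaction following from the $k=0$ constraint in~\eqref{eq:MPC_for_tracking_constraints}. Your write-up merely spells out the termwise verification that the paper compresses into a citation of the standard MPC argument.
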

\begin{proof}
    Note that the neighbors' candidate cooperation outputs enter~\eqref{eq:MPC_for_cooperation} in the objective function~\eqref{eq:MPC_for_cooperation_objective_function} only.
    Therefore, feasibility of the problem is independent of the neighbors' decisions.
    Hence, the previously optimal cooperation output is again a feasible candidate, i.e., $\hat{y}^\mr{a}_{\mr{c},i}(t+1) \coloneqq y_{\mr{c},i}^*(t)$.
    A feasible input sequence can be generated from Assumption~\ref{asm:stabilising_terminal_ingredients} by shifting the previously optimal one and appending the terminal control law, i.e., $\hat{u}^\mr{a}_i(\cdot \vert t + 1) = (u_i^*(1 \vert t), \dots, u_i^*(N-1 \vert t), k_i^\mr{f}(x_i^*(N \vert t), r_{\mr{c}, i}^*(t)))$.
    This candidate input sequence is standard in MPC (see, e.g.,~\cite{Rawlings.2020}) and the candidate is feasible for all $i \in \mathcal{V}$.
    Constraint satisfaction of the closed loop then follows from the constraints in~\eqref{eq:MPC_for_tracking_constraints} for $i \in \mathcal{V}$ with $k=0$.
\end{proof}

Given a cooperation equilibrium $(x_{\mr{c}, i}, u_{\mr{c}, i})$, the output relation~\eqref{eq:output_relation} provides a corresponding cooperation output $y_{\mr{c}, i}$.
The following assumption, which is standard in MPC for tracking (cf. \cite[Assumption 6]{Koehler2020b}, \cite[Assumption 1]{Limon2018}) states that incrementally moving the cooperation output also continuously moves a corresponding cooperation equilibrium.
\begin{assumption}\label{asm:output_state_input_relation}
    There exist locally Lipschitz, injective functions $g_{x,i}: \mathcal{Y}_i \subset \mathbb{R}^p \to \mathbb{X}_i$ and $g_{u,i}: \mathcal{Y}_i \subset \mathbb{R}^p \to \mathbb{U}_i$, i.e.,
    $x_{\mr{c}, i} = g_{x,i}(y_{\mr{c}, i})$ and $u_{\mr{c}, i} = g_{u,i}(y_{\mr{c}, i})$ are unique for any $y_{\mr{c}, i}$.
\end{assumption}
See, e.g.,~\cite[Remark 1]{Limon2018} for a sufficient condition to satisfy Assumption~\ref{asm:output_state_input_relation} using the Jacobians of $f_i$ and $h_i$.
Given the output cooperation set $\mathcal{Y}^{\mathrm{c}}$, the state cooperation set is
$
    \mathcal{Y}^{\mathrm{c}}_x = 
    \{
        x_{\mr{c}} = g_{x}(y_{\mr{c}}) \mid y_{\mr{c}} \in \mathcal{Y}^{\mathrm{c}}
    \}
$
with $g_{x} = \mathrm{col}_{i=1}^m(g_{x,i})$.
We denote by $g_{x,i}^{-1}$ the left inverse of $g_{x,i}$, i.e., $y_{\mr{c},i} = g_{x,i}^{-1}(g_{x,i}(y_{\mr{c},i}))$, and similarly for $g_{u,i}$.

It is essential for the proposed distributed MPC scheme that the agents are able to decrease the cooperation cost.
This is possible without further coordination between agents, if a Gauss-Seidel method~\cite{Bertsekas.2016} (or block-coordinate descent) may reduce the cooperation cost in each iteration. The following assumption on Lipschitz continuity of the gradient of the decomposed parts of the cost for cooperation implies this. This assumption is standard in gradient-descent algorithms~\cite{Bertsekas.1997}.
\begin{assumption}\label{asm:cost_gradient_Lipschitz}
    For all agents $i\in\mathcal{V}$, define
    $\bar{V}_i^\mr{c}(y_i) = \sum_{j \in \mathcal{N}_i} V_{ij}^\mr{c}(y_i, \bar{y}_j) + V_{ji}^\mr{c}(\bar{y}_j, y_i)$ with $\bar{y}_j$ fixed for all $j \in \mathcal{N}_i$.
    The gradient of $\bar{V}_i^\mr{c}(y_i)$ is Lipschitz continuous on $\mathcal{Y}_i$ with Lipschitz constant $L_i$, i.e., 
    \begin{equation}\label{eq:cost_gradient_Lipschitz}
         \Vert \nabla\bar{V}_i^\mr{c}(y^1_i) - \nabla\bar{V}_i^\mr{c}(y^2_i) \Vert \le L_i \Vert y^1_i - y^2_i \Vert
    \end{equation}
    for all $y^1_i, y^2_i \in \mathcal{Y}_i$ and all $\bar{y}_j \in \mathcal{Y}_j$, $j \in \mathcal{N}_i$.
\end{assumption}
Note that Assumption~\ref{asm:cost_gradient_Lipschitz} is not needed in an implementation of Algorithm~\ref{alg:sequential_MPC}, but is only used in the following analysis.

The following lemma shows that there eventually exists an alternative feasible candidate, where the cooperation output is incrementally changed and the cost for cooperation is reduced.
Note that this candidate is not needed in the implementation of Algorithm~\ref{alg:sequential_MPC}.
\begin{lemma}\label{lm:cooperation_candidate}
    Let Assumptions~\ref{asm:reference}--\ref{asm:cost_gradient_Lipschitz} hold.
    Define the projected gradient-descent update
    $
        T(y_{\mr{c},i}^*(t)) = P_{\mathcal{Y}_i}\big[ y_{\mr{c},i}^*(t) - \tilde{\theta}_i\nabla\bar{V}_i^\mr{c}(y_{\mr{c},i}^*(t)) \big] 
    $
    with $\tilde{\theta}_i > 0$, and $\bar{V}_i^\mr{c}(\cdot)$ from Assumption~\ref{asm:cost_gradient_Lipschitz} with $\bar{y}_j(t)$ from Step~\ref{alg:sequential_MPC_step1}) of Algorithm~\ref{alg:sequential_MPC}.
    Moreover, define
    \begin{equation}\label{eq:incremental_candidate_set}
        \mathcal{V}_\mr{b}
        =
        \left\{
            i \mid \Vert x_i(t) - x_{\mr{c}, i}^*(t) \Vert_{Q_i}^2 \le \gamma_i \Vert T(y_{\mr{c},i}^*) - y_{\mr{c}, i}^* \Vert^2
        \right\}
    \end{equation}
    with some constant $\gamma_i > 0$ (further specified below in the proof of Theorem~\ref{thm:asymptotic_stability}) and $y_{\mr{c},i}^* = y_{\mr{c},i}^*(t)$.
    Let~\eqref{eq:MPC_for_cooperation} be feasible at time $t$ for agent $i \in \mathcal{V}_\mathrm{b}$.
    Then, the following is satisfied.

    1) there exist $\theta_i \in (0, 1]$ and $\tilde{\theta}_i \in (0, \frac{2}{\theta_i L_i})$, such that there exists a feasible candidate $(\hat{u}_i^\mr{b}(\cdot \vert t + 1), \hat{y}_{\mr{c},i}^\mr{b}(t+1))$ at time $t+1$, where $\hat{y}_{\mr{c},i}^\mr{b}(t+1) = y_{\mr{c},i}^*(t) + \theta_i ( T(y_{\mr{c},i}^*(t)) - y_{\mr{c},i}^*(t))$.
        
    2) Furthermore, with the corresponding predicted state sequence $\hat{x}^\mr{b}_i(\cdot \vert t+1)$ and cooperation equilibrium $\hat{r}_i^{\mathrm{b}}(t+1)$, the tracking cost satisfies
        \begin{align}\label{eq:candidate_tracking_cost_upper_bound}
            &J_i^\mr{tr}(\hat{x}^\mr{b}_i(\cdot \vert t+1), \hat{u}^\mr{b}_i(\cdot \vert t+1), \hat{r}^\mr{b}_i(t+1))
            \notag
            \\
            &\le
            2c_{\mr{u},i} \Vert x_i(t+1) - x^*_{\mr{c},i}(t) \Vert_{Q_i}^2
            \notag
            \\
            &\phantom{\le{}}
            + 2 L_{g_{x,i}}^2 \lambda_{\max}(Q_i)c_{\mr{u},i} \Vert \hat{y}^\mr{b}_{\mr{c},i}(t+1) - y^*_{\mr{c},i}(t)\Vert^2.
        \end{align}
        
    3) Moreover,
        \begin{align}\label{eq:cooperation_cost_decrease}
            &\bar{V}_i^\mr{c}(\hat{y}_{\mr{c},i}^\mr{b}(t+1)) - \bar{V}_i^\mr{c}(y_{\mr{c},i}^*)
            \le -\kappa_i \Vert T(y_{\mr{c},i}^*) - y_{\mr{c},i}^* \Vert^2
        \end{align}
        holds for 
        $\kappa_i = (2\theta_i - \tilde{\theta}_iL_i\theta_i^2)(2\tilde{\theta}_i)^{-1} > 0$ and $y_{\mr{c},i}^* = y_{\mr{c},i}^*(t)$.
        \label{lm:cooperation_candidate_statement3}
\end{lemma}

\begin{proof}
    \emph{Proof of~1)}:
    Note that, 
    \begin{equation}\label{eq:candidate_distance_bound}
        \Vert \hat{y}_{\mr{c},i}^\mr{b}(t+1) - y_{\mr{c},i}^*(t) \Vert = \theta_i \Vert T(y_{\mr{c},i}^*(t)) - y_{\mr{c},i}^*(t) \Vert.
    \end{equation}
    Because $\mathcal{Y}_i$ is compact, there exists $c_{\mathcal{Y}_i} > 0$ such that $\Vert T(y_{\mathrm{c},i}) - y_{\mathrm{c},i} \Vert^2 \le c_{\mathcal{Y}_i}$ for all $y_{\mathrm{c},i} \in \mathcal{Y}_i$.
    Then, if 
    $\gamma_i \le (\epsilon_i^2)(c_{\mathcal{Y}_i})^{-1}$ 
    with $\epsilon_i$ from Assumption~\ref{asm:stabilising_terminal_ingredients}, 
    \begin{equation}\label{eq:bound_current_target_distance}
        \Vert x_i(t) - x_{\mr{c},i}^*(t) \Vert_{Q_i}^2 \stackrel{\eqref{eq:incremental_candidate_set}}{\le} \gamma_i c_{\mathcal{Y}_i} \le \epsilon_i^2
    \end{equation}
    holds for all $i \in \mathcal{V}_{\mathrm{b}}$.
    From this, if also 
    $\gamma_i \le (\epsilon_i^2)(4 c_{\mr{u},i} c_{\mathcal{Y}_i})^{-1}$,
    \begin{align}\label{eq:bound_future_target_distance}
        &\Vert x_i^*(1 \vert t) - x_{\mr{c},i}^*(t) \Vert_{Q_i}^2 
        \le W_i(x_i(t), r_{\mr{c},i}^*(t))
        \notag
        \\
        &\le c_{\mr{u},i} \Vert x_i(t) - x_\ci^*(t) \Vert_{Q_i}^2 
        \stackrel{\eqref{eq:incremental_candidate_set}}{\le}
        c_{\mr{u},i} \gamma_i \Vert T(y_{\mr{c},i}^*(t)) - y_{\mr{c}, i}^*(t) \Vert^2
        \notag
        \\
        &\le c_{\mr{u},i} \gamma_i c_{\mathcal{Y}_i} \le 0.25\epsilon_i^2
    \end{align}
    where the first inequality follows from the definition of $W_i(x_i(t), r_{\mr{c},i}^*(t))$ in~\eqref{eq:MPC_for_tracking_cost} and the second inequality follows from Assumption~\ref{asm:stabilising_terminal_ingredients} and~\eqref{eq:bound_current_target_distance}.
    From Assumption~\ref{asm:output_state_input_relation}, there exists a cooperation state corresponding to the cooperation output, 
    $\hat{x}^\mr{b}_{\mr{c},i}(t+1) = g_{x,i}(\hat{y}^\mr{b}_{\mr{c},i}(t+1))$.
    Then, for all $i \in \mathcal{V}_\mr{b}$, Assumption~\ref{asm:output_state_input_relation} together with the fact that $x(t+1) = x^*(1 \vert t)$ by~\eqref{eq:global_closed_loop_a} leads to
    $
        \Vert x_i(t+1) - \hat{x}^\mr{b}_{\mr{c}, i}(t+1) \Vert_{Q_i} 
        %
        \le \Vert x_i(t+1) - x_{\mr{c}, i}^*(t) \Vert_{Q_i} 
        + \Vert x_{\mr{c}, i}^*(t) - \hat{x}^\mr{b}_{\mr{c}, i}(t+1) \Vert_{Q_i}
        %
        = \Vert x_i(t+1) - x_{\mr{c}, i}^*(t) \Vert_{Q_i} 
        + \Vert g_{x,i}(y_{\mr{c},i}^*(t)) - g_{x,i}(\hat{y}^\mr{b}_{\mr{c},i}(t+1)) \Vert_{Q_i}
        %
        \le \Vert x_i(t+1) - x_{\mr{c}, i}^*(t) \Vert_{Q_i} + L_{g_{x,i}} \Vert y_{\mr{c},i}^*(t) - \hat{y}^\mr{b}_{\mr{c},i}(t+1) \Vert_{Q_i}
        %
        \stackrel{\eqref{eq:candidate_distance_bound}, \eqref{eq:bound_future_target_distance}}{\le} 
        \frac{\epsilon_i}{2} + \theta_i L_{g_{x,i}} \sqrt{\lambda_{\max}(Q_i)} \Vert T(y_{\mr{c}, i}^*(t)) - y_{\mr{c}, i}^*(t) \Vert
        %
        \le \frac{\epsilon_i}{2} + \theta_i L_{g_{x,i}} \sqrt{\lambda_{\max}(Q_i)} c_{\mathcal{Y}_i}\le \epsilon_i
    $
    where the last inequality follows if $\theta_i \le (\epsilon_i)(2L_{g_{x,i}}\sqrt{\lambda_{\max}(Q_i)}c_{\mathcal{Y}_i})^{-1}$.
    Hence, Assumption~\ref{asm:stabilising_terminal_ingredients} ensures that a feasible candidate input sequence $\hat{u}^\mr{b}_i(\cdot \vert t+1)$ exists such that $\hat{u}^\mr{b}_i(\cdot \vert t+1)$ and $\hat{y}_{\mr{c},i}^\mr{b}(t+1)$ are feasible candidates in~\eqref{eq:MPC_for_cooperation}.

    \emph{Proof of~2)}:
    We provide a bound on the incurred increase in the tracking cost, if $\hat{y}_{\mr{c},i}^\mr{b}(t+1)$ is used.
    By Assumption~\ref{asm:stabilising_terminal_ingredients}, the tracking cost satisfies 
    $
        J_i^\mr{tr}(\hat{x}^\mr{b}_i(\cdot \vert t+1), \hat{u}^\mr{b}_i(\cdot \vert t+1), \hat{r}^\mr{b}_i(t+1))
        \stackrel{\eqref{eq:tracking_value_function_upper_bound}}{\le}
        c_{\mr{u},i} \Vert x_i(t+1) - \hat{x}^\mr{b}_{\mr{c},i}(t+1) \Vert_{Q_i}^2 
        \le 2c_{\mr{u},i} ( \Vert x_i(t+1) - x^*_{\mr{c},i}(t) \Vert_{Q_i}^2 
        +  \Vert x^*_{\mr{c},i}(t) - \hat{x}^\mr{b}_{\mr{c},i}(t+1) \Vert_{Q_i}^2 )
        \le
        2c_{\mr{u},i} \Vert x_i(t+1) - x^*_{\mr{c},i}(t) \Vert_{Q_i}^2
        + 2 L_{g_{x,i}}^2 \lambda_{\max}(Q_i)c_{\mr{u},i} \Vert \hat{y}^\mr{b}_{\mr{c},i}(t+1) - y^*_{\mr{c},i}(t)\Vert^2,
    $
    where the last inequality follows from Assumption~\ref{asm:output_state_input_relation}, showing~\eqref{eq:candidate_tracking_cost_upper_bound}.

    \emph{Proof of~3)}:
    As in Assumption~\ref{asm:cost_gradient_Lipschitz}, we consider again
    $
        \bar{V}_i^\mr{c}(y_{\mr{c}, i}(t)) 
        = 
        \sum_{j \in \mathcal{N}_i, j < i} V_{ij}^\mr{c}(y_{\mr{c}, i}(t), y_{\mr{c}, j}^*(t+1)) + 
        V_{ji}^\mr{c}(y_{\mr{c}, j}^*(t+1), y_{\mr{c}, i}(t))
        + 
        \sum_{j \in \mathcal{N}_i, j > i} V_{ij}^\mr{c}(y_{\mr{c}, i}(t), y_{\mr{c}, j}^*(t)) + V_{ji}^\mr{c}(y_{\mr{c}, j}^*(t), y_{\mr{c}, i}(t))
    $
    as a function of one variable.
    Due to the sequential nature of Step 1 in Algorithm~\ref{alg:sequential_MPC}, this is part of what agent $i$ minimizes, i.e.,
    neighbors with $j < i$ already computed and communicated $y_{\mr{c}, j}^*(t+1)$ and neighbors with $j > i$ did not yet compute an update to $y_{\mr{c}, j}^*(t)$.
    We may then show~\eqref{eq:cooperation_cost_decrease} as in~\cite[Proposition 3.3, p. 213]{Bertsekas.1997}. 
    For convenience, we will repeat and adapt the proof in~\cite{Bertsekas.1997} to our setting.
    Note that the projection $T(y_{\mr{c},i}^*(t))$ is the solution to 
    $\inf_{y_i \in \mathcal{Y}_i} \Vert y_i - y_{\mr{c},i}^*(t) + \tilde{\theta}_i \nabla \bar{V}_i^\mr{c}(y_{\mr{c},i}^*(t)) \Vert^2$.
    Thus, by optimality of $T(y_{\mr{c},i}^*(t))$,
    $
        (y_i - T(y_{\mr{c},i}^*(t)))^\top (T(y_{\mr{c},i}^*(t)) - y_{\mr{c},i}^*(t) + \tilde{\theta}_i \nabla \bar{V}_i^\mr{c}(y_{\mr{c},i}^*(t)) \ge 0
    $
    for all $y_i \in \mathcal{Y}_i$.
    The choice $y_i = y_{\mr{c}, i}^*(t) \in \mathcal{Y}_i$ implies
    \begin{align}\label{eq:proof_thm1_decrease_candidate_a_intermediate_1}
        (T(y_{\mr{c},i}^*) - y_{\mr{c}, i}^*)^\top \nabla \bar{V}_i^\mr{c}(y_{\mr{c},i}^*)
        \le -\tilde{\theta}_i^{-1} \Vert T(y_{\mr{c},i}^*) - y_{\mr{c}, i}^* \Vert^2
    \end{align}
    where $y_{\mr{c}, i}^* = y_{\mr{c}, i}^*(t)$.
    Note that $T(y_{\mr{c},i}^*(t)) = y_{\mr{c}, i}^*(t)$ if and only if $y_{\mr{c}, i}^*(t)$ minimizes $\bar{V}_i^\mr{c}$ over $\mathcal{Y}_i$~\cite[Proposition~3.3, p.~213]{Bertsekas.1997}.
    With Assumption~\ref{asm:cost_gradient_Lipschitz}, we can apply the Descent lemma~\cite[Lemma 2.1, p. 203]{Bertsekas.1997} to arrive at~\eqref{eq:cooperation_cost_decrease}:
    $
        \bar{V}_i^\mr{c}(\hat{y}_{\mr{c},i}^\mr{b}(t+1)) 
        \le 
        \bar{V}_i^\mr{c}(y_{\mr{c},i}^*(t)) + \theta_i(T(y_{\mr{c},i}^*(t)) - y_{\mr{c}, i}^*(t))^\top \nabla \bar{V}_i^\mr{c}(y_{\mr{c},i}^*(t)) 
        \phantom{\le{}} + 0.5L_i\theta_i^2 \Vert T(y_{\mr{c},i}^*(t)) - y_{\mr{c}, i}^*(t) \Vert^2
        \stackrel{\eqref{eq:proof_thm1_decrease_candidate_a_intermediate_1}}{\le} 
        \bar{V}_i^\mr{c}(y_{\mr{c},i}^*(t)) - (2\theta_i - \tilde{\theta}_iL_i\theta_i^2)(2\tilde{\theta}_i)^{-1} \Vert T(y_{\mr{c},i}^*(t)) - y_{\mr{c}, i}^*(t) \Vert^2.
    $
\end{proof}

In order to show that the global closed-loop output $y(t)$ converges to the output cooperation set $\mathcal{Y}^{\mathrm{c}}$, we prove that the closed loop~\eqref{eq:global_closed_loop} is asymptotically stable with respect to the corresponding state set $\mathcal{Y}^{\mathrm{c}}_x$.
We define the set of feasible states as $\mathcal{X}_N = \{ x \in \mathbb{X} \mid \eqref{eq:MPC_for_cooperation} \text{ is feasible } \forall i \in \mathcal{V} \}$.
Consider the solution of~\eqref{eq:MPC_for_cooperation} inserted into each local tracking cost and local cost for cooperation part summed up over all agents, i.e., 
$
    V(x) 
    = 
    \sum_{i=1}^m
    J_i^\mr{tr}(x_i^*, u_i^*, r_{\mr{c},i}^*)
    +
    V^\mr{c}(y_{\mr{c}}^*)
$, where $x_i^*$ and $u_i^*$ are the optimal sequences.

We first establish a lower bound on $V(x)$.
\begin{lemma}\label{lm:Lyapunov_function_lower_bound}
    Let Assumption~\ref{asm:output_state_input_relation} hold.
    Then, there exists $\alpha_\mr{lo}\in\mathcal{K}_\infty $ such that $V(x) \ge \alpha_\mr{lo}(\vert x \vert_{\mathcal{Y}^{\mathrm{c}}_x})$ for all $x \in \mathcal{X}_N$.
\end{lemma}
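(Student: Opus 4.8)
The plan is to lower-bound $V(x)$ by treating its two constituents—the summed tracking costs $\sum_i J_i^\mr{tr}$ and the cooperation cost $V^\mr{c}(y_\mr{c}^*)$—separately, and then to stitch them together with the triangle inequality routed through the optimal cooperation equilibrium $x_\mr{c}^* = g_x(y_\mr{c}^*)$, where $y_\mr{c}^* = \mathrm{col}_{i=1}^m(y_{\mr{c},i}^*)$ collects the optimizers of~\eqref{eq:MPC_for_cooperation}. The point-to-set distance obeys $\vert x \vert_{\mathcal{Y}^{\mathrm{c}}_x} \le \Vert x - x_\mr{c}^* \Vert + \vert x_\mr{c}^* \vert_{\mathcal{Y}^{\mathrm{c}}_x}$, so it suffices to control each term on the right by the corresponding part of $V(x)$.

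For the first term, I would use that the leading summand of each $J_i^\mr{tr}$ is the initial stage cost and that $x_i^*(0\vert t) = x_i$, whence $J_i^\mr{tr}(x_i^*, u_i^*, r_{\mr{c},i}^*) \ge \ell_i(x_i, u_i^*(0\vert t), r_{\mr{c},i}^*) \ge \Vert x_i - x_{\mr{c},i}^* \Vert_{Q_i}^2 \ge \lambda_{\min}(Q_i)\Vert x_i - x_{\mr{c},i}^* \Vert^2$, since $Q_i \succ 0$. Summing over $i$ and using $V^\mr{c} \ge 0$ gives $\Vert x - x_\mr{c}^* \Vert^2 \le V(x)/q_{\min}$ with $q_{\min} = \min_i \lambda_{\min}(Q_i) > 0$. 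For the second term, the lower bound~\eqref{eq:cost_for_cooperation_set_indicator} in Definition~\ref{def:cooperation_set_and_cost} yields $V^\mr{c}(y_\mr{c}^*) \ge \underline{\alpha}(\vert y_\mr{c}^* \vert_{\mathcal{Y}^{\mathrm{c}}})$, so $\vert y_\mr{c}^* \vert_{\mathcal{Y}^{\mathrm{c}}} \le \underline{\alpha}^{-1}(V(x))$, again using $J_i^\mr{tr} \ge 0$. Because each $g_{x,i}$ is locally Lipschitz on the compact set $\mathcal{Y}_i$ (Assumption~\ref{asm:output_state_input_relation} together with the compactness in Assumption~\ref{asm:reference}), it is Lipschitz there with a constant $L_{g_{x,i}}$; setting $L_{g_x} = \max_i L_{g_{x,i}}$ and using $y_\mr{c}^* \in \prod_i \mathcal{Y}_i$ (the constraint~\eqref{eq:MPC_for_cooperation_cooperation_output_constraints}), I would bound $\vert x_\mr{c}^* \vert_{\mathcal{Y}^{\mathrm{c}}_x} = \min_{y_\mr{c} \in \mathcal{Y}^{\mathrm{c}}} \Vert g_x(y_\mr{c}^*) - g_x(y_\mr{c}) \Vert \le L_{g_x}\vert y_\mr{c}^* \vert_{\mathcal{Y}^{\mathrm{c}}}$ by evaluating the minimand at the projection of $y_\mr{c}^*$ onto $\mathcal{Y}^{\mathrm{c}}$.

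Combining the two bounds via the triangle inequality gives $\vert x \vert_{\mathcal{Y}^{\mathrm{c}}_x} \le \sqrt{V(x)/q_{\min}} + L_{g_x}\,\underline{\alpha}^{-1}(V(x)) =: \sigma(V(x))$. The map $\sigma$ is the sum of the two $\mathcal{K}_\infty$ functions $s \mapsto \sqrt{s/q_{\min}}$ and $s \mapsto L_{g_x}\underline{\alpha}^{-1}(s)$, hence $\sigma \in \mathcal{K}_\infty$ and is invertible. Applying the increasing map $\sigma^{-1}$ to $\vert x \vert_{\mathcal{Y}^{\mathrm{c}}_x} \le \sigma(V(x))$ yields $V(x) \ge \sigma^{-1}(\vert x \vert_{\mathcal{Y}^{\mathrm{c}}_x})$, so $\alpha_\mr{lo} = \sigma^{-1} \in \mathcal{K}_\infty$ is the desired class-$\mathcal{K}_\infty$ lower bound, valid for all $x \in \mathcal{X}_N$ where the optimizer exists.

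I expect this argument to be essentially routine; the only steps needing care are the Lipschitz-through-$g_x$ estimate that transfers the output-level distance $\vert y_\mr{c}^* \vert_{\mathcal{Y}^{\mathrm{c}}}$ to the state-level distance $\vert x_\mr{c}^* \vert_{\mathcal{Y}^{\mathrm{c}}_x}$ (which relies on compactness of $\mathcal{Y}_i$ to upgrade local to global Lipschitz continuity), and the verification that the two distance contributions combine into a single invertible $\mathcal{K}_\infty$ function, so that the upper bound $\vert x \vert_{\mathcal{Y}^{\mathrm{c}}_x} \le \sigma(V(x))$ can be inverted to the claimed lower bound on $V$.
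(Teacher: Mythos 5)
Your proof is correct and follows essentially the same route as the paper's: lower-bound the tracking cost by its first stage cost to control $\Vert x - x_{\mr{c}}^*\Vert$, lower-bound $V^\mr{c}$ via~\eqref{eq:cost_for_cooperation_set_indicator}, transfer the output-level distance $\vert y_\mr{c}^*\vert_{\mathcal{Y}^{\mathrm{c}}}$ to the state level through the Lipschitz constant of $g_x$ on the compact output sets, and close with the triangle inequality through $x_\mr{c}^*$. The only difference is cosmetic comparison-function bookkeeping: the paper merges the two estimates into a direct lower bound $\tilde{\alpha}_\mr{lo}(s) $ using a $\min$ of the two $\mathcal{K}_\infty$ bounds together with the inequality $\alpha(a)+\alpha(b)\ge\alpha\big(\tfrac{a+b}{2}\big)$ for $\alpha\in\mathcal{K}$, whereas you upper-bound $\vert x\vert_{\mathcal{Y}^{\mathrm{c}}_x}$ by the invertible $\mathcal{K}_\infty$ function $\sigma(V(x))$ and apply $\sigma^{-1}$.
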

\begin{proof}
    Recall $\vert y_\mathrm{c}^* \vert_{\mathcal{Y}^{\mathrm{c}}} = \Vert y_\mathrm{c}^* - y_{\mathcal{Y}^{\mathrm{c}}}^* \Vert $ where $ y_{\mathcal{Y}^{\mathrm{c}}}^*= \arg\min_{\tilde{y}\in\mathcal{Y}^{\mathrm{c}}} \Vert y_\mathrm{c}^* - \tilde{y} \Vert $ and $x_{\mathcal{Y}^{\mathrm{c}}_x}^* = g_{x}(y_{\mathcal{Y}^{\mathrm{c}}}^*)$.
    We relate this to a state difference using Assumption~\ref{asm:output_state_input_relation}:
    $
    L_{g, x}\Vert y_\mathrm{c}^* - y_{\mathcal{Y}^{\mathrm{c}}}^* \Vert \ge \Vert g_{x}(y_\mathrm{c}^*) - g_{x}(y_{\mathcal{Y}^{\mathrm{c}}}^*) \Vert = \Vert x_\mathrm{c}^* - x_{\mathcal{Y}^{\mathrm{c}}_x}^* \Vert 
    $
    where $L_{g, x} > 0$ is the Lipschitz constant of $g_{x} = [g_{x,1}^\top \: \dots \: g_{x,m}^\top]^\top$ on the compact set $\mathcal{Y}^\mr{c}_x$.
    Thus, from~\eqref{eq:cost_for_cooperation_set_indicator}, $V^\mr{c}(y_\mr{c}^*) \ge \underline{\alpha}(\vert y_\mathrm{c}^* \vert_{\mathcal{Y}^{\mathrm{c}}}) \ge \underline{\alpha}(L_{g, x}^{-1} \Vert x_\mathrm{c}^* - x_{\mathcal{Y}^{\mathrm{c}}_x}^* \Vert)$.
    Hence,
    $
        V(x) \ge \sum_{i=1}^m \Vert x_i - x^*_{\mr{c},i}\Vert_{Q_i}^2
        + V^\mr{c}(y_\mr{c}^*)
        \ge \lambda_{\min}(Q_i)\Vert x - x^*_{\mr{c}}\Vert^2
        + \underline{\alpha}(\vert y_\mr{c}^* \vert_{\mathcal{Y}^{\mathrm{c}}})
        \ge \tilde{\alpha}_\mr{lo}(\Vert x - x^*_{\mr{c}}\Vert)
        + \tilde{\alpha}_\mr{lo}(\Vert x_\mr{c}^* - x_{\mathcal{Y}^{\mathrm{c}}_x}^* \Vert)
        \ge \tilde{\alpha}_\mr{lo}(0.5(\Vert x - x^*_{\mr{c}}\Vert
        + \Vert x_\mr{c}^* - x_{\mathcal{Y}^{\mathrm{c}}_x}^* \Vert))
        \ge \tilde{\alpha}_\mr{lo}(0.5\Vert x - x_{\mathcal{Y}^{\mathrm{c}}_x}^* \Vert)
        \eqqcolon
        \alpha_\mr{lo}(\Vert x - x_{\mathcal{Y}^{\mathrm{c}}_x}^* \Vert) \ge \alpha_\mr{lo}(\vert x \vert_{\mathcal{Y}^{\mathrm{c}}_x})
    $
    where $\tilde{\alpha}_\mr{lo}(s) = \min(\lambda_{\min}(Q_i)s^2, L_{g, x}^{-1}\underline{\alpha}(s)) \in \mathcal{K}_\infty$.
    The fifth inequality holds since for any $\alpha \in \mathcal{K}, a, b \ge 0$: $\alpha(a + b) \le \alpha(2a) + \alpha(2b)$~\cite{Sontag.1989}.
    Note that $\Vert x - x_{\mathcal{Y}^{\mathrm{c}}_x}^* \Vert \ge \min_{\tilde{x}\in{\mathcal{Y}^{\mathrm{c}}_x}} \Vert x - \tilde{x} \Vert = \vert x \vert_{\mathcal{Y}^{\mathrm{c}}_x}$.
\end{proof}
 
Next, we show an upper bound on $V(x(t))$.
\begin{lemma}\label{lm:Lyapunov_function_upper_bound}
    Let Assumptions~\ref{asm:reference}--\ref{asm:output_state_input_relation} hold.
    Then, there exists $\alpha_\mr{up}\in\mathcal{K}_\infty $ such that $V(x) \le \alpha_\mr{up}(\vert x \vert_{\mathcal{Y}^{\mathrm{c}}_x})$ for all $x \in \mathcal{X}_N$.
\end{lemma}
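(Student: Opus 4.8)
The plan is to bound $V$ separately close to and far from the state cooperation set $\mathcal{Y}^{\mathrm{c}}_x$ and then merge the two estimates into a single $\mathcal{K}_\infty$ function. Set $s \coloneqq \vert x \vert_{\mathcal{Y}^{\mathrm{c}}_x}$ and let $\bar{x} = P_{\mathcal{Y}^{\mathrm{c}}_x}[x]$ be the closest point of $\mathcal{Y}^{\mathrm{c}}_x$ to $x$, so that $\Vert x - \bar{x}\Vert = s$. By Assumption~\ref{asm:output_state_input_relation} and the definition of $\mathcal{Y}^{\mathrm{c}}_x$, this point corresponds to an output $\bar{y} = g_x^{-1}(\bar{x}) \in \mathcal{Y}^{\mathrm{c}} \subseteq \prod_{i=1}^m \mathcal{Y}_i$. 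I would use $\bar{y}_i \in \mathcal{Y}_i$ as a candidate cooperation output for each agent $i$, with associated equilibrium $\bar{r}_{\mr{c},i} = (g_{x,i}(\bar{y}_i), g_{u,i}(\bar{y}_i))$. This choice makes the cooperation cost of the candidate vanish, since $V^\mr{c}(\bar{y}) \le \bar{\alpha}(\vert \bar{y} \vert_{\mathcal{Y}^{\mathrm{c}}}) = \bar{\alpha}(0) = 0$ by~\eqref{eq:cost_for_cooperation_set_indicator}.

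For the near region, note first that the cooperation penalty of each agent depends only on its cooperation output and not on the predicted input sequence, so minimizing $J_i$ over the inputs for a fixed cooperation equilibrium reduces to the tracking problem~\eqref{eq:MPC_for_tracking}; in particular the candidate's inserted tracking cost is $W_i(x_i, \bar{r}_{\mr{c},i})$. Since $\Vert x_i - g_{x,i}(\bar{y}_i)\Vert_{Q_i}^2 \le \lambda_{\max}(Q_i)\Vert x - \bar{x}\Vert^2 = \lambda_{\max}(Q_i)s^2$, whenever $s \le \delta \coloneqq \min_{i\in\mathcal{V}} \epsilon_i / \sqrt{\lambda_{\max}(Q_i)}$ one has $\Vert x_i - g_{x,i}(\bar{y}_i)\Vert_{Q_i} \le \epsilon_i$ for every agent. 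Then Assumption~\ref{asm:stabilising_terminal_ingredients} simultaneously guarantees feasibility of~\eqref{eq:MPC_for_cooperation} for the candidate and the estimate $W_i(x_i, \bar{r}_{\mr{c},i}) \le c_{\mr{u},i}\lambda_{\max}(Q_i)s^2$ via~\eqref{eq:tracking_value_function_upper_bound}. Using optimality of the closed-loop solution together with the vanishing candidate cooperation cost then yields $V(x) \le c\,s^2$ with $c = \sum_{i=1}^m c_{\mr{u},i}\lambda_{\max}(Q_i)$.

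For the far region, $\mathcal{X}_N$ is compact because $\mathcal{Z}_i$ is compact, and every term entering $V$ is a continuous function of trajectories ranging over the compact sets $\mathcal{Z}_i$, $\mathcal{Z}_i^\mr{c}$, and $\mathcal{Y}_i$; hence $V$ is bounded on $\mathcal{X}_N$, say $V(x) \le M \coloneqq \sup_{x \in \mathcal{X}_N} V(x) < \infty$. Combining both regions, the function $\alpha_\mr{up}(s) = \max\{c,\, M\delta^{-2}\}\,s^2$ belongs to $\mathcal{K}_\infty$ and dominates $V$ everywhere: for $s \le \delta$ it exceeds $c\,s^2 \ge V(x)$, while for $s > \delta$ it exceeds $M\delta^{-2}\delta^2 = M \ge V(x)$.

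The main obstacle is the near-set estimate, and within it the step that relates the summed closed-loop cost $V(x)$ to the cost of the joint projection candidate. Feasibility itself is handled cleanly by Assumption~\ref{asm:stabilising_terminal_ingredients} once $s \le \delta$ places every agent inside the region where~\eqref{eq:tracking_value_function_upper_bound} holds, which is what fixes the threshold $\delta$. The more delicate point is that the scheme is sequential and each agent evaluates its cooperation penalty against \emph{communicated} neighbor outputs, so one must argue, using the per-agent optimality of~\eqref{eq:MPC_for_cooperation} together with the non-negativity and separability~\eqref{eq:cost_for_cooperation_distributable} of $V^\mr{c}$, that replacing every agent's decision by $\bar{y}_i$ cannot decrease the total cost $V(x)$. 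Once this comparison is established, the far-region bound and the final $\mathcal{K}_\infty$ merge are routine.
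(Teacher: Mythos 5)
Your proposal follows essentially the same route as the paper's proof: the candidate cooperation output is the projection of $x$ onto $\mathcal{Y}^{\mathrm{c}}_x$ pulled back through $g_x^{-1}$, its cooperation cost vanishes by~\eqref{eq:cost_for_cooperation_set_indicator}, Assumption~\ref{asm:stabilising_terminal_ingredients} yields the local quadratic bound $\sum_{i} c_{\mathrm{u},i}\Vert x_i - \tilde{x}_{\mathcal{Y}^{\mathrm{c}}_x,i}\Vert_{Q_i}^2$ once $x$ is close enough to the set, and compactness of $\mathcal{X}_N$ lifts this to a global $\mathcal{K}_\infty$ bound---your explicit merge $\alpha_{\mathrm{up}}(s)=\max\{c,\,M\delta^{-2}\}s^2$ is exactly what the paper's citation of \cite[Proposition~2.16]{Rawlings.2020} accomplishes. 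The one step you defer---bounding the sequentially computed value $V(x)$, whose coupling terms are evaluated at communicated neighbor outputs, by the joint candidate cost---is the same step the paper compresses into the single sentence that the first inequality ``follows from the cooperation cost vanishing on the cooperation set,'' so your write-up is no less complete than the paper's own proof on this point.
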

\begin{proof}
    Define $Q = \mr{diag}_{i=1}^m(Q_i)$ and $\epsilon = \min_{i\in\mathcal{V}}\epsilon_i$.
    Let 
    $x \in \mathcal{X}_N$ 
    with 
    $\Vert x - \tilde{x}_{\mathcal{Y}^{\mathrm{c}}_x} \Vert^2_Q \le \epsilon^2$ 
    where 
    $\tilde{x}_{\mathcal{Y}^{\mathrm{c}}_x} = \argmin_{\tilde{x}\in{\mathcal{Y}^{\mathrm{c}}_x}} \Vert x - \tilde{x} \Vert$ 
    and 
    $\tilde{y}_{\mathcal{Y}^{\mathrm{c}}} = g_x^{-1}(\tilde{x}_{\mathcal{Y}^{\mathrm{c}}_x})$.
    Then, from Assumption~\ref{asm:stabilising_terminal_ingredients}, for all $i \in \mathcal{V}$, the cooperation output $\tilde{y}_{\mathcal{Y}^{\mathrm{c}},i}$ together with a feasible input sequence $\tilde{u}_{i}$ that minimizes the tracking cost part is a feasible candidate in~\eqref{eq:MPC_for_cooperation}, and
    $
        V(x)
        \le \sum_{i=1}^{m} W_i(x_i, \tilde{r}_{\mathcal{Y}^{\mathrm{c}},i})
        \stackrel{\eqref{eq:tracking_value_function_upper_bound}}{\le}
        \sum_{i=1}^{m} c_{\mr{u},i} \Vert x_i - \tilde{x}_{\mathcal{Y}^{\mathrm{c}}_x,i} \Vert_{Q_i}^2
        \eqqcolon \tilde{\alpha}_\mr{up}(\vert x \vert_{\mathcal{Y}^{\mathrm{c}}_x}).
    $
    The first inequality follows from the cooperation cost vanishing on the cooperation set.
    The existence of the local upper bound with $\tilde{\alpha}_\mr{up} \in \mathcal{K}_\infty$ on a compact subset of $\mathcal{X}_{N}$, together with compactness of $\mathcal{X}_{N}$, establishes the claim (cf.~\cite[Proposition 2.16]{Rawlings.2020}).
\end{proof}

As an intermediate step, we prove that there exists a suitable upper bound on $V$ based on the candidates from Theorem~\ref{thm:recursive_feasibility} and Lemma~\ref{lm:cooperation_candidate}.
\begin{lemma}\label{lm:value_function_upper_bound}
    Let Assumptions~\ref{asm:reference}--\ref{asm:cost_gradient_Lipschitz} hold.
    Define
    \begin{equation}\label{eq:standard_candidate_set}
        \mathcal{V}_\mr{a}
        =
        \left\{
            i \mid \Vert x_i(t) - x_{\mr{c}, i}^*(t) \Vert_{Q_i}^2 \ge \gamma_i \Vert T(y_{\mr{c},i}^*(t)) - y_{\mr{c}, i}^*(t) \Vert^2
        \right\}
    \end{equation}
    and consider again the complementary set from~\eqref{eq:incremental_candidate_set},
    \begin{equation*}
        \mathcal{V}_\mr{b}
        =
        \left\{
            i \mid \Vert x_i(t) - x_{\mr{c}, i}^*(t) \Vert_{Q_i}^2 \le \gamma_i \Vert T(y_{\mr{c},i}^*(t)) - y_{\mr{c}, i}^*(t) \Vert^2
        \right\}.
    \end{equation*}
    Furthermore, let~\eqref{eq:MPC_for_cooperation} be feasible at time $t$.
    Take $(\hat{u}_i^\mr{a}(\cdot \vert t + 1), \hat{y}_{\mr{c},i}^\mr{a}(t+1))$ from Theorem~\ref{thm:recursive_feasibility} and $(\hat{u}_i^\mr{b}(\cdot \vert t + 1), \hat{y}_{\mr{c},i}^\mr{b}(t+1))$ from Lemma~\ref{lm:cooperation_candidate}.
    Define the shorthands
    \begin{align*}
        {J}_i^{\mr{tr},*}(t) &=  J_i^{\mr{tr}}(x^*_i(\cdot \vert t), u^*_i(\cdot \vert t), r^*_{\mr{c},i}(t)),
        \\
        \hat{J}_i^{\mr{tr,b}}(t+1) &=  J_i^\mr{tr}(\hat{x}^\mr{b}_i(\cdot \vert t+1), \hat{u}^\mr{b}_i(\cdot \vert t+1), \hat{r}^\mr{b}_i(t+1)).
    \end{align*}
    Then,
    \begin{align}\label{eq:candidate_upper_bound}
        &V(x(t+1)) - V(x(t))
        \notag
        \\
        &\le \sum_{i \in \mathcal{V}_\mr{b} } \bigl( \hat{J}_i^\mr{tr,b}(t+1) - {J}_i^{\mr{tr},*}(t) - \kappa_i \Vert T(y_{\mr{c},i}^*(t)) - y^*_{\mr{c},i}(t) \Vert^2 \bigr)
        \notag
        \\
        &\phantom{={}} - \sum_{i \in \mathcal{V}_\mr{a} } \ell_i(x_i(t), u_i^*(0 \vert t), r_{\mr{c},i}^*(t)). 
    \end{align}
\end{lemma}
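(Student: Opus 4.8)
The plan is to bound the optimal value $V(x(t+1))$ from above by inserting, for each agent, one of the two feasible candidates, and then to exploit the sequential (Gauss--Seidel) structure of Algorithm~\ref{alg:sequential_MPC} to telescope the change in the global cooperation cost into a sum of local decreases. The key observation is that each agent's \emph{local} optimality inequality already contains exactly the increment of $V^\mr{c}$ that is produced when that agent updates its cooperation output.

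First I would record the per-agent optimality inequality. At time $t+1$ agent $i$ minimizes precisely $J_i^\mr{tr} + \bar{V}_i^\mr{c}$, where $\bar{V}_i^\mr{c}$ is built from the communicated neighbor outputs $\bar{y}_j(t+1)$ of Step~\ref{alg:sequential_MPC_step1}), i.e.\ the already-updated values $y_{\mr{c},j}^*(t+1)$ for $j<i$ and the stale values $y_{\mr{c},j}^*(t)$ for $j>i$. Since feasibility of the cooperation output is independent of the neighbors (as in the proof of Theorem~\ref{thm:recursive_feasibility}), both the standard candidate and the cooperation candidate are feasible here, so optimality yields $J_i^{\mr{tr},*}(t+1) + \bar{V}_i^\mr{c}(y_{\mr{c},i}^*(t+1)) \le \hat{J}_i^\mr{tr}(t+1) + \bar{V}_i^\mr{c}(\hat{y}_{\mr{c},i}(t+1))$, where I take the cooperation candidate for $i \in \mathcal{V}_\mr{b}$ and the standard one for $i \in \mathcal{V}_\mr{a}$.

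Second I would telescope the global cooperation cost. Introducing the partial vectors $y^{(k)}$ in which agents $1,\dots,k$ carry their updated outputs $y_{\mr{c},i}^*(t+1)$ and agents $k+1,\dots,m$ carry $y_{\mr{c},i}^*(t)$, the separable form~\eqref{eq:cost_for_cooperation_distributable} together with bilaterality of $\mathcal{G}$ shows that the only terms of $V^\mr{c}$ affected when agent $k$ updates are exactly those collected in $\bar{V}_k^\mr{c}$; hence $V^\mr{c}(y^{(k)}) - V^\mr{c}(y^{(k-1)}) = \bar{V}_k^\mr{c}(y_{\mr{c},k}^*(t+1)) - \bar{V}_k^\mr{c}(y_{\mr{c},k}^*(t))$, and summing gives $V^\mr{c}(y_{\mr{c}}^*(t+1)) - V^\mr{c}(y_{\mr{c}}^*(t)) = \sum_{k} \bigl(\bar{V}_k^\mr{c}(y_{\mr{c},k}^*(t+1)) - \bar{V}_k^\mr{c}(y_{\mr{c},k}^*(t))\bigr)$. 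Now I would plug in the candidate estimates: for $i \in \mathcal{V}_\mr{a}$ the shifted-input candidate keeps $\hat{y}_{\mr{c},i}^\mr{a}(t+1) = y_{\mr{c},i}^*(t)$, so its cooperation increment vanishes and~\eqref{eq:terminal_cost_decrease} gives $\hat{J}_i^\mr{tr,a}(t+1) \le J_i^{\mr{tr},*}(t) - \ell_i(x_i(t),u_i^*(0\vert t),r_{\mr{c},i}^*(t))$; for $i \in \mathcal{V}_\mr{b}$, part~3) of Lemma~\ref{lm:cooperation_candidate} supplies $\bar{V}_i^\mr{c}(\hat{y}_{\mr{c},i}^\mr{b}(t+1)) - \bar{V}_i^\mr{c}(y_{\mr{c},i}^*(t)) \le -\kappa_i \Vert T(y_{\mr{c},i}^*(t)) - y_{\mr{c},i}^*(t) \Vert^2$, with the tracking part kept as $\hat{J}_i^\mr{tr,b}(t+1)$. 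Substituting these into the per-agent optimality inequalities, summing over all $i$, using the telescoping identity, recalling $V(x) = \sum_i J_i^{\mr{tr},*} + V^\mr{c}(y_{\mr{c}}^*)$, and cancelling the common $\sum_i J_i^{\mr{tr},*}(t)$ produces~\eqref{eq:candidate_upper_bound}.

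The step I expect to be the main obstacle is the telescoping bookkeeping: one must verify that agent $i$'s locally minimized $\bar{V}_i^\mr{c}$---assembled from the Gauss--Seidel mixture of updated and stale neighbor outputs---coincides term-by-term with the increment of the global $V^\mr{c}$ along the sequential update. This hinges on both the separable form~\eqref{eq:cost_for_cooperation_distributable} and the bilaterality of $\mathcal{G}$, which together guarantee that every edge term $V_{ij}^\mr{c}$ and its mirror $V_{ji}^\mr{c}$ is counted exactly once in the update of the corresponding vertex. The remaining algebra---splitting the $\mathcal{V}_\mr{a}$ and $\mathcal{V}_\mr{b}$ sums and cancelling the time-$t$ tracking costs---is routine.
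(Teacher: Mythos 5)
Your proposal is correct, and it rests on the same two pillars as the paper's proof: the per-agent optimality inequality (each agent's optimal cost at time $t+1$ is bounded by the cost of the candidate $\mathrm{a}$ from Theorem~\ref{thm:recursive_feasibility} for $i\in\mathcal{V}_\mathrm{a}$, respectively candidate $\mathrm{b}$ from Lemma~\ref{lm:cooperation_candidate} for $i\in\mathcal{V}_\mathrm{b}$, since feasibility is independent of neighbors' decisions), and the fact that the Gauss--Seidel mixture of updated and stale neighbor outputs appearing in each agent's objective is exactly what governs the evolution of the global cooperation cost. Where you genuinely depart from the paper is in the assembly: the paper performs a backward induction over agents $k=m,\dots,1$, introducing partial output vectors $z_\mathrm{c}^k(t+1)$ and at each step isolating agent $k$'s terms, substituting its candidate bound, and downgrading the arguments of all coupling terms not involving agent $k$ (which is what makes the paper's proof ``notationally extensive''). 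You instead sum all local optimality inequalities at once and invoke the exact telescoping identity $V^\mathrm{c}(y^{(k)}) - V^\mathrm{c}(y^{(k-1)}) = \bar{V}_k^\mathrm{c}(y_{\mathrm{c},k}^*(t+1)) - \bar{V}_k^\mathrm{c}(y_{\mathrm{c},k}^*(t))$, which follows from the separability~\eqref{eq:cost_for_cooperation_distributable} and bilaterality of $\mathcal{G}$ — precisely the structural facts the paper's induction exploits implicitly, one agent at a time. The two arguments are mathematically equivalent, but yours buys a cleaner global cancellation (the terms $\bar{V}_i^\mathrm{c}(y_{\mathrm{c},i}^*(t+1))$ drop out in one stroke and the common $\sum_i J_i^{\mathrm{tr},*}(t)$ cancels against $V(x(t))$), at the price of having to state and verify the telescoping identity carefully; you correctly flag this as the delicate step, and your verification sketch (every edge term $V_{ij}^\mathrm{c}$ and its mirror $V_{ji}^\mathrm{c}$ is counted exactly once at the updating vertex) is exactly the needed argument. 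The result is the claimed bound~\eqref{eq:candidate_upper_bound} with no gap.
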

The notationally extensive, but straightforward proof is given in the Appendix.

Finally, we prove that the state cooperation set is asymptotically stable for the global closed-loop system~\eqref{eq:global_closed_loop}, where the region of attraction is the set of feasible initial states of~\eqref{eq:MPC_for_cooperation}.
Hence, the global output converges to the output cooperation set, i.e., the cooperation goal is asymptotically achieved.
\begin{theorem}\label{thm:asymptotic_stability}
    Let Assumptions~\ref{asm:reference}--\ref{asm:cost_gradient_Lipschitz} hold.
    Then, if Algorithm~\ref{alg:sequential_MPC} is applied for any $x_0 \in \mathcal{X}_N$, the state cooperation set $\mathcal{Y}^{\mathrm{c}}_x$ is asymptotically stable for the closed-loop system~\eqref{eq:global_closed_loop} with region of attraction $\mathcal{X}_N$.
\end{theorem}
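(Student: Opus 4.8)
The plan is to use $V$ as a Lyapunov function for the set $\mathcal{Y}^{\mathrm{c}}_x$ and then invoke a standard discrete-time set-stability theorem (e.g.,~\cite{Rawlings.2020}), for which three ingredients are required: a $\mathcal{K}_\infty$ lower bound, a $\mathcal{K}_\infty$ upper bound, and a one-step decrease that is negative definite with respect to $\mathcal{Y}^{\mathrm{c}}_x$. The lower and upper bounds are supplied verbatim by Lemma~\ref{lm:Lyapunov_function_lower_bound} and Lemma~\ref{lm:Lyapunov_function_upper_bound}, and forward invariance of the region of attraction $\mathcal{X}_N$ follows from recursive feasibility (Theorem~\ref{thm:recursive_feasibility}). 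Hence the real work lies entirely in the decrease, starting from the composite bound of Lemma~\ref{lm:value_function_upper_bound}.

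First I would post-process that bound. For $i\in\mathcal{V}_\mr{b}$, I substitute the tracking-cost estimate~\eqref{eq:candidate_tracking_cost_upper_bound}, discard $-J_i^{\mr{tr},*}(t)\le 0$, and use the defining inequality of $\mathcal{V}_\mr{b}$ together with~\eqref{eq:bound_future_target_distance} and~\eqref{eq:candidate_distance_bound} to write $\Vert x_i(t+1)-x^*_{\mr{c},i}(t)\Vert_{Q_i}^2 \le c_{\mr{u},i}\gamma_i\Vert T(y_{\mr{c},i}^*)-y_{\mr{c},i}^*\Vert^2$ and $\Vert\hat{y}^\mr{b}_{\mr{c},i}(t+1)-y^*_{\mr{c},i}(t)\Vert^2=\theta_i^2\Vert T(y_{\mr{c},i}^*)-y_{\mr{c},i}^*\Vert^2$. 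The $\mathcal{V}_\mr{b}$ summand then collapses to $(2c_{\mr{u},i}^2\gamma_i + 2L_{g_{x,i}}^2\lambda_{\max}(Q_i)c_{\mr{u},i}\theta_i^2 - \kappa_i)\Vert T(y_{\mr{c},i}^*)-y_{\mr{c},i}^*\Vert^2$. Here the free constants are fixed: since $\kappa_i=\theta_i/\tilde\theta_i - L_i\theta_i^2/2$ is of order $\theta_i$ while the positive terms scale with $\theta_i^2$ and $\gamma_i$, I first pick $\theta_i$ small enough that the $\theta_i^2$ term is dominated by $\kappa_i$, and then $\gamma_i$ small enough (respecting the feasibility bounds on $\gamma_i$ and $\theta_i$ from the proof of Lemma~\ref{lm:cooperation_candidate}) that the whole coefficient becomes some $-\rho_i<0$. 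For $i\in\mathcal{V}_\mr{a}$, the summand $-\ell_i(x_i(t),u_i^*(0\vert t),r_{\mr{c},i}^*(t)) \le -\Vert x_i(t)-x^*_{\mr{c},i}(t)\Vert_{Q_i}^2$ is already a genuine decrease. Trading quantities across the defining inequalities of $\mathcal{V}_\mr{a}$ and $\mathcal{V}_\mr{b}$, I would then lower-bound the total decrease by $c\sum_{i=1}^m\bigl(\Vert x_i(t)-x^*_{\mr{c},i}(t)\Vert_{Q_i}^2 + \Vert T(y_{\mr{c},i}^*)-y_{\mr{c},i}^*\Vert^2\bigr)$ for some $c>0$, i.e., a decrease that simultaneously controls the tracking error and the projected-gradient residual.

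The remaining step, and the main obstacle, is to convert this into $-\alpha_\mr{de}(\vert x(t)\vert_{\mathcal{Y}^{\mathrm{c}}_x})$ with $\alpha_\mr{de}$ positive definite. Splitting the distance by the triangle inequality and the Lipschitz property of $g_x$ (Assumption~\ref{asm:output_state_input_relation}) gives $\vert x(t)\vert_{\mathcal{Y}^{\mathrm{c}}_x}\lesssim \Vert x(t)-x_{\mr{c}}^*(t)\Vert + \vert y_{\mr{c}}^*(t)\vert_{\mathcal{Y}^{\mathrm{c}}}$; the tracking term is already dominated by the decrease, so it remains to dominate the cooperation term $\vert y_{\mr{c}}^*(t)\vert_{\mathcal{Y}^{\mathrm{c}}}$ by the residual $\sum_i\Vert T(y_{\mr{c},i}^*)-y_{\mr{c},i}^*\Vert$. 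This is precisely where convexity of $V^\mr{c}$ is indispensable: over the convex product set $\prod_i\mathcal{Y}_i$, a point at which every block meets the projected-gradient fixed-point condition is a global minimizer, and since $V^\mr{c}$ attains its minimum value $0$ exactly on $\mathcal{Y}^{\mathrm{c}}$ by~\eqref{eq:cost_for_cooperation_set_indicator}, the residual vanishes precisely on $\mathcal{Y}^{\mathrm{c}}$. As the residual is continuous in $y_{\mr{c}}$ and $\prod_i\mathcal{Y}_i$ is compact, I would construct a $\mathcal{K}_\infty$ lower bound $\eta(\vert y_{\mr{c}}\vert_{\mathcal{Y}^{\mathrm{c}}})$ for it by the same compactness argument used in Lemma~\ref{lm:Lyapunov_function_upper_bound}.

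The delicate point I expect to wrestle with is that the residual appearing in the decrease is a Gauss--Seidel (mixed-time) quantity, since agent $i$'s update uses neighbors' values at different time indices, rather than the synchronous residual that cleanly characterizes minimizers of the convex $V^\mr{c}$. Reconciling the two, either by additionally showing that the increments $\Vert y_{\mr{c}}^*(t+1)-y_{\mr{c}}^*(t)\Vert$ vanish so that the two residuals asymptotically agree, or by formulating the error bound directly for the sequential update, is the crux of the argument. Once the composite decrease $V(x(t+1))-V(x(t))\le -\alpha_\mr{de}(\vert x(t)\vert_{\mathcal{Y}^{\mathrm{c}}_x})$ is established, the three Lyapunov conditions combine to yield asymptotic stability of $\mathcal{Y}^{\mathrm{c}}_x$ with region of attraction $\mathcal{X}_N$, and hence $y(t)\to\mathcal{Y}^{\mathrm{c}}$.
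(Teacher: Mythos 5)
Your derivation of the decrease inequality matches the paper's own: starting from Lemma~\ref{lm:value_function_upper_bound}, substituting \eqref{eq:candidate_tracking_cost_upper_bound}, \eqref{eq:candidate_distance_bound}, \eqref{eq:bound_future_target_distance}, and shrinking $\tilde{\theta}_i$ (resp.\ $\theta_i$) and $\gamma_i$ so that the coefficient of $\Vert T(y^*_{\mathrm{c},i})-y^*_{\mathrm{c},i}\Vert^2$ becomes negative, one arrives exactly at \eqref{eq:Lyapunov_function_decrease}; your ``trading across $\mathcal{V}_\mathrm{a}$, $\mathcal{V}_\mathrm{b}$'' is equivalent bookkeeping. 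The gap is in the second half. You want to upgrade \eqref{eq:Lyapunov_function_decrease} to $V(x(t+1))-V(x(t))\le-\alpha_\mathrm{de}(\vert x(t)\vert_{\mathcal{Y}^{\mathrm{c}}_x})$ by lower-bounding the residual with $\eta(\vert y^*_{\mathrm{c}}(t)\vert_{\mathcal{Y}^{\mathrm{c}}})$ via continuity and compactness, but that argument only applies to the \emph{synchronous} residual (all blocks evaluated at the same point $y^*_{\mathrm{c}}(t)$). The residual actually appearing in the decrease is the Gauss--Seidel one: agent $i$'s operator $T$ is built from $\bar{V}^{\mathrm{c}}_i$ in which neighbors $j<i$ already carry their time-$(t+1)$ outputs. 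This mixed-time quantity is not a function of $y^*_{\mathrm{c}}(t)$ (nor of $x(t)$) alone, so ``vanishes exactly on $\mathcal{Y}^{\mathrm{c}}$'' is not even a well-posed statement for it, and no compactness argument over $\prod_i\mathcal{Y}_i$ produces the desired $\eta$. You flag this yourself as ``the crux'' and leave it open, so the proposal is incomplete precisely where the real work lies.

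It is worth noting that the paper never establishes a positive-definite decrease at all; it closes the argument asymptotically. From \eqref{eq:Lyapunov_function_decrease} and boundedness below, $V(x(t))$ converges, hence $\Vert x_i(t)-x^*_{\mathrm{c},i}(t)\Vert\to 0$ and the (mixed-time) residuals vanish. Along a convergent subsequence, continuity of $T$ together with vanishing increments shows the limit is a blockwise fixed point of the synchronous operator, which by convexity of $V^{\mathrm{c}}$ is a global minimizer \cite{Bertsekas.1997}, i.e., lies in $\mathcal{Y}^{\mathrm{c}}$; then---and this is the step your plan has no substitute for---Lemma~\ref{lm:Lyapunov_function_upper_bound} is evaluated along that subsequence to get $V(x(t_k))\to 0$, monotonicity extends this to $V(x(t))\to 0$, and Lemma~\ref{lm:Lyapunov_function_lower_bound} yields attractivity, with stability from standard arguments. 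Your fallback option (a)---prove the increments $\Vert y^*_{\mathrm{c}}(t+1)-y^*_{\mathrm{c}}(t)\Vert$ vanish---is indeed what makes this accumulation-point argument legitimate (it follows from the vanishing tracking errors, \eqref{eq:tracking_value_function_upper_bound}, and continuity of $g_{x,i}$ and its left inverse on compact sets), but pursuing it means abandoning the strict-decrease plan and reproducing the paper's convergence proof rather than repairing yours.
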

\begin{proof}
    Recall again~\eqref{eq:candidate_upper_bound}, where we neglect some non-positive terms and arrive at
    \begin{align}\label{eq:Lyapunov_decrease_upper_bound_1}
        &V(x(t+1)) - V(x(t)) \le - \sum_{i \in \mathcal{V}} \Vert x_i(t) - x_{\mr{c},i}^*(t) \Vert_{Q_i}^2 
        \notag\\
        &\phantom{\le{}} + \sum_{i \in \mathcal{V}_\mr{b} } \big( \hat{J}_i^\mr{tr,b}(t+1) - \kappa_i \Vert T(y_{\mr{c},i}^*(t)) - y^*_{\mr{c},i}(t) \Vert^2  \big).
    \end{align}
    In the following, we abbreviate $\star^+ = \star(t+1)$ and $\star = \star(t)$, e.g. $x^+ = x(t+1)$, $x = x(t)$.
    With the upper bound on the tracking cost~\eqref{eq:candidate_tracking_cost_upper_bound},
    \begin{align*}
        &V(x^+) - V(x) \stackrel{\eqref{eq:candidate_tracking_cost_upper_bound}}{\le} - \sum_{i \in \mathcal{V}} \Vert x_i - x_{\mr{c},i}^* \Vert_{Q_i}^2 
        \\
        &\phantom{\le{}} + \sum_{i \in \mathcal{V}_\mr{b} } \big( -\kappa_i \Vert T(y_{\mr{c},i}^*) - y^*_{\mr{c},i} \Vert^2 + 2c_{\mr{u},i} \Vert x_i^+ - x^*_{\mr{c},i} \Vert_{Q_i}^2\big)
        \\
        &\phantom{\le{}} + \sum_{i \in \mathcal{V}_\mr{b} } 2L_{g_{x,i}}^2\lambda_{\max}(Q_i)c_{\mr{u},i} \Vert \hat{y}_{\mr{c},i}^{\mr{b},+} - y_{\mr{c},i}^* \Vert^2.
        \\
        &\stackrel{\eqref{eq:candidate_distance_bound},\eqref{eq:bound_future_target_distance}}{\le}
        - \sum_{i \in \mathcal{V}} \Vert x_i - x_{\mr{c},i}^* \Vert_{Q_i}^2 - \sum_{i \in \mathcal{V}_\mr{b} } \frac{c_{\theta_i}}{2} \Vert T(y_{\mr{c},i}^*) - y^*_{\mr{c},i} \Vert^2
        \\
        &\phantom{={}} + \sum_{i \in \mathcal{V}_\mr{b} }  (2c_{\mr{u},i}^2 \gamma_i - \frac{c_{\theta_i}}{2})  \Vert T(y_\ci^*) - y_\ci^* \Vert^2,
    \end{align*}
    where $c_{\theta_i} \coloneqq \kappa_i - \theta_i^2 2L_{g_{x,i}}^2\lambda_{\max}(Q_i)c_{\mr{u},i}$.
    Recall $\kappa_i = \theta_i(\tilde{\theta}_i)^{-1} - 0.5L_i\theta_i^2$.
    Then,
    $
        c_{\theta_i} = \theta_i( \tilde{\theta}_i^{-1} - \theta_i(0.5L_i + 2L_{g_{x,i}}^2\lambda_{\max}(Q_i)c_{\mr{u},i})) 
    $
    is positive if $\tilde{\theta}_i < 4(L_i + 4L_{g_{x,i}}^2\lambda_{\max}(Q_i)c_{\mr{u},i})^{-1}$.
    If $\gamma_i \le c_{\theta_i}(4c_{\mr{u},i}^2)^{-1}$, and since $-\Vert x_i - x_{\mr{c},i}^* \Vert_{Q_i}^2 \le - \gamma_i\Vert T(y_\ci^*) - y_\ci^* \Vert^2$ for all $i \in \mathcal{V}_\mr{a}$ by~\eqref{eq:standard_candidate_set}, we arrive at
    \begin{align}\label{eq:Lyapunov_function_decrease}
        &V(x^+) - V(x) \le - \sum_{i \in \mathcal{V}} \frac{1}{2}\Vert x_i - x_{\mr{c},i}^* \Vert_{Q_i}^2
        \notag\\
        &\phantom{\le{}} - \sum_{i \in \mathcal{V}_\mr{a} } \frac{1}{2} \gamma_i \Vert T(y_\ci^*) - y_\ci^* \Vert^2 
        - \sum_{i \in \mathcal{V}_\mr{b} } \frac{c_{\theta_i}}{2} \Vert T(y_{\mr{c},i}^*) - y^*_{\mr{c},i}\Vert^2
        \notag\\
        &\le - \frac{1}{2} \sum_{i \in \mathcal{V}} ( \Vert x_i - x_{\mr{c},i}^* \Vert_{Q_i}^2 + \tilde{\gamma} \Vert T(y_\ci^*) - y_\ci^* \Vert^2 )
    \end{align}
    with $\tilde{\gamma} = \min_{i\in\mathcal{V}}\{{c_{\theta_i}, \gamma_i}\}$.
    Now, recall Lemmas~\ref{lm:Lyapunov_function_lower_bound} and~\ref{lm:Lyapunov_function_upper_bound}, i.e., $V(x(t)) \ge \alpha_\mr{lo}(\vert x(t) \vert_{\mathcal{Y}^{\mathrm{c}}_x})$ and $V(x(t)) \le \alpha_\mr{up}(\vert x(t) \vert_{\mathcal{Y}^{\mathrm{c}}_x})$.
    Since $V(x(t))$ is non-increasing and bounded from below, it converges.
    Thus, $\lim_{t \to \infty} T(y_\ci^*(t)) - y_\ci^*(t) = 0$ for all $i \in \mathcal{V}$.
    Consider an accumulation point $y_{\mr{c},\infty}^*$ of $y_\mathrm{c}^*(t)$, i.e., there exists a subsequence $t_k$ such that $\lim_{k\to\infty} y_\mathrm{c}^*(t_k) = y_{\mr{c},\infty}^*$.
    By continuity of $T$, $y_{\mr{c},\infty,i}^* = \lim_{k\to\infty}T(y_{\mr{c},i}^*(t_k)) = T(y_{\mr{c},\infty,i}^*)$ (cf.~\cite[Proposition~3.8]{Bertsekas.1997}),
    which implies that $y_{\mr{c},\infty}^*$ minimizes $V^\mr{c}$ over $\prod_{i=1}^m\mathcal{Y}_i$~\cite[Proposition~3.3]{Bertsekas.1997} and $y_{\mr{c},\infty}^* \in \mathcal{Y}^\mr{c}$.
    Since $g_{x,i}$ is continuous for all $i\in\mathcal{V}$, $\lim_{k\to\infty} x_\mr{c}^*(t_k) = \lim_{k\to\infty} g_{x}(y_\mathrm{c}^*(t_k)) = g_{x}(y_{\mr{c},\infty}^*) \in \mathcal{Y}_x^\mr{c}$.
    In addition, again from~\eqref{eq:Lyapunov_function_decrease} it follows that $\lim_{t \to \infty} x_i(t) - x_{\mr{c},i}^*(t) = 0$ for all $i \in \mathcal{V}$.
    Thus, $\lim_{k\to\infty}x(t_k) \in \mathcal{Y}_x^\mr{c}$, and from Lemma~\ref{lm:Lyapunov_function_upper_bound} with continuity of $\alpha_{\mr{up}}$ and of the point-to-set distance, we have $\lim_{k\to\infty} V(x(t_k)) = 0$.
    Hence, since $V(x(t))$ is non-increasing, $\lim_{t\to\infty} V(x(t)) = 0$, and $\lim_{t\to\infty} \vert x(t) \vert_{\mathcal{Y}_x^\mathrm{c}} = 0$ as well as $\lim_{t\to\infty} \vert y(t) \vert_{\mathcal{Y}^\mathrm{c}} = 0$ from Lemma~\ref{lm:Lyapunov_function_lower_bound}.
    Standard Lyapunov arguments imply stability (cf.~\cite[Theorem.~B.13]{Rawlings.2020}).
\end{proof}

We want to stress that the second candidate $\hat{y}_{\mr{c},i}^\mr{b}(t+1)$ from Lemma~\ref{lm:cooperation_candidate} is only used for the analysis of the proposed scheme and its computation is not necessary, although useful as a warm start in the optimization.
The main idea in the analysis is that each agent is able to incrementally move its cooperation output while reducing the over-all cost for cooperation, offsetting an increase in the tracking cost part.

\section{Consensus and formation control}\label{sec:example}
In this section, we illustrate how output cooperation sets and associated costs for cooperation can be formulated for output consensus and formation control as two examples.

\subsection{Consensus}\label{ssec:consensus}
In this example, we demonstrate the formulation of the consensus problem in the proposed framework.
The control goal is to steer the outputs of the agents to a common value.
The output cooperation set is thus defined as 
$
    \mathcal{Y}^\mathrm{c} = \{y \in \prod_{i=1}^m \mathcal{Y}_i \mid \Vert y_i - y_j \Vert = 0 \; \forall (i,j) \in \mathcal{E} \}
$, where $\mathcal{Y}_i$ is a convex set of outputs that correspond to admissible equilibria of agent $i$. 
Hence, $\mathcal{Y}^\mathrm{c}$ is the set of all outputs, which are contained in local convex sets, correspond to equilibria satisfying the local constraints, and are equal.
A natural choice for the cost for cooperation is 
$
    V^\mathrm{c}(y) = \sum_{i=1}^m \sum_{j \in \mathcal{N}_i} \Vert y_i - y_j \Vert^2 = \Vert y \Vert_L^2
$
where $L = [l_{ij}] \in \mathbb{R}^{m\times m}$ is the symmetric Laplacian of the communication graph $\mathcal{G}$ with
$l_{ij} = -1$ if $j \in \mathcal{N}_i$, $l_{ij} = \vert \mathcal{N}_i \vert$ if $j=i$, and $l_{ij} = 0$ otherwise.
Clearly, the cost is separable as in~\eqref{eq:cost_for_cooperation_distributable} and convex.
In addition, since for all $i \in \mathcal{V}$ the cost $\bar{V}_i^\mathrm{c}(y_i) = \sum_{j\in\mathcal{N}_i} \Vert y_i -\bar{y}_j \Vert^2$, with $\bar{y}_j$ fixed, is smooth and $\mathcal{Y}_i$ is compact, Assumption~\ref{asm:cost_gradient_Lipschitz} holds.
To show the upper and lower bound in~\eqref{eq:cost_for_cooperation_set_indicator}, we define the projection $y_{\mathcal{Y}_\mr{c}} = \argmin_{\tilde{y} \in \mathcal{Y}_\mr{c}} \Vert y - \tilde{y} \Vert$, which is not needed in Algorithm 1 and is only introduced here for analysis.
In particular, $y_{\mathcal{Y}_\mr{c}} \in \ker L$ and thus $\Vert y \Vert_L = \Vert y - y_{\mathcal{Y}_\mr{c}} \Vert_L$.
Hence, $V^\mathrm{c}(y) = \Vert y \Vert_L^2 \le \Vert y - y_{\mathcal{Y}_\mr{c}} \Vert_L^2 \le \lambda_{\max}(L) \Vert y - y_{\mathcal{Y}_\mr{c}} \Vert^2 \eqqcolon \bar{\alpha}(\vert y \vert_{\mathcal{Y}^\mr{c}}) $ showing the right-hand side in~\eqref{eq:cost_for_cooperation_set_indicator}.
For the left-hand side, note that $\Vert y \Vert_L^2 \ge \lambda_2(L) \Vert y - y_{P}^\mr{c} \Vert^2$ (cf.~\cite[Theorem 3]{OlfatiSaber2007}) where $y_{P}^\mr{c}$ is the projection of $y$ onto the kernel of $L$ and $\lambda_2(L)$ is the second-smallest eigenvalue of $L$, which is positive if the graph is connected.
Since, by definition, $\Vert y - y_{\mathcal{Y}_\mr{c}} \Vert \le \Vert y - y_{P}^c \Vert$, the left-hand side in~\eqref{eq:cost_for_cooperation_set_indicator} follows with $\underline{\alpha}(\vert y \vert_{\mathcal{Y}^\mr{c}}) \coloneqq \lambda_2(L) \Vert y - y_{\mathcal{Y}_\mr{c}} \Vert^2$.

Hence, using this cost for cooperation, an implementation of the proposed scheme in Algorithm~\ref{alg:sequential_MPC} will asymptotically steer the system to consensus, if Assumptions~\ref{asm:reference}--\ref{asm:output_state_input_relation} are satisfied.
A numerical example can be found in the Appendix.

\subsection{Formation control}\label{ssec:formation}
In this example, we consider the problem of steering quadcopters into a configuration in which their positions have a specified distance while their altitude is the same.
The optimization problem~\eqref{eq:MPC_for_cooperation} is implemented with a terminal equality constraint using CasADi~\cite{Andersson2019} and IPOPT~\cite{Waechter2005}.

We apply the proposed scheme to a multiagent system comprising three $10$-state-quadcopters with dynamics
\begin{align*}
    \dot{x}_{i,1}(t) &= x_{i,4}(t),
    &
    \dot{x}_{i,6}(t) &= -g + 0.91 u_{i,3}(t) ,
    \\
    \dot{x}_{i,2}(t) &= x_{i,5}(t),
    &
    \dot{x}_{i,7}(t) &= -8 x_{i,7}(t) + x_{i,9}(t),
    \\
    \dot{x}_{i,3}(t) &= x_{i,6}(t),
    & 
    \dot{x}_{i,8}(t) &= -8 x_{i,8}(t) + x_{i,10}(t),
    \\
    \dot{x}_{i,4}(t) &= g \tan(x_{i,7}(t)),
    & 
    \dot{x}_{i,9}(t) &= 10 (-x_{i,7}(t) + u_{i,1}(t)),
    \\
    \dot{x}_{i,5}(t) &= g \tan(x_{i,8}(t)),
    &
    \dot{x}_{i,10}(t) &= 10 (-x_{i,8}(t) + u_{i,2}(t)),
\end{align*}
adapted from~\cite{Hu2018} and with $g = 9.81$.
The dynamics are discretized using the Euler method with a step-size of $h = 0.1$.
We choose as outputs the position of the quadcopters, i.e., $y_{i,1} = x_{i,1}$, $y_{i,2} = x_{i,2}$, and $y_{i,3} = x_{i,3}$, where $y_{i,3}$ is the quadcopter's altitude.
For simplicity, the state constraint sets of all quadcopters are set to $\mathbb{X}_i = \{ \Vert x_i \Vert_\infty \le 10 \}$ and
$
\mathcal{Y}_i = 
\{ 
    \begin{bsmallmatrix}
        -8 & -8 & 0
    \end{bsmallmatrix}^\top 
    \le
    y_i
    \le
    \begin{bsmallmatrix}
        8 & 8 & 8
    \end{bsmallmatrix}^\top
\}.
$
The input constraints are chosen as in~\cite{Hu2018} to 
$
\mathbb{U}_i = 
\{ 
    \begin{bsmallmatrix}
        -\frac{\pi}{2} & -\frac{\pi}{2} & 0
    \end{bsmallmatrix}^\top 
    \le
    u_i
    \le
    \begin{bsmallmatrix}
        \frac{\pi}{2} & \frac{\pi}{2} & 2g
    \end{bsmallmatrix}^\top
\}
$
for all agents $i \in \mathcal{V}$.

\begin{figure}[t]
    \setlength\axisheight{0.45\linewidth}
    \setlength\axiswidth{0.95\linewidth}
    \centering
\begin{tikzpicture}

\definecolor{grey}{RGB}{176,176,176}
\definecolor{orange}{RGB}{217,83,25}
\definecolor{green}{RGB}{119,172,48}
\definecolor{lightgrey}{RGB}{204,204,204}
\definecolor{blue}{RGB}{0,114,189}

\begin{axis}[
height=\axisheight,
legend cell align={left},
legend style={
  fill opacity=0.8,
  draw opacity=1,
  text opacity=1,
  at={(0.03,0.55)},
  anchor=north west,
  draw=lightgrey
},
minor tick num=2,
minor xtick={-2,2,4,6,8,12,14,16,18,22,24,26,28,32,34,36,38,42,44,46,48,52,54,56,58,62},
minor ytick={-0.55,-0.5,-0.45,-0.35,-0.3,-0.25,-0.15,-0.1,-0.0500000000000003,0.0499999999999998,0.0999999999999998,0.15,0.25,0.3,0.35},
tick align=outside,
tick pos=left,
width=\axiswidth,
x grid style={grey},
xmajorgrids,
xmin=-3, xmax=63,
xtick distance=10,
xtick style={color=black},
xtick={-10,0,10,20,30,40,50,60,70},
y grid style={grey},
ylabel={\(\displaystyle x_{i,1}\) in m},
ymajorgrids,
ymin=-0.608726101007325, ymax=0.446980528185849,
ytick distance=0.2,
ytick style={color=black},
ytick={-0.8,-0.6,-0.4,-0.2,0,0.2,0.4,0.6}
]
\addplot [thick, blue]
table {%
0 0
34.5999984741211 -0.000362753868103027
36.2000007629395 -0.00112855434417725
37.2000007629395 -0.00229370594024658
37.9000015258789 -0.0037682056427002
38.5 -0.00576627254486084
39 -0.00821828842163086
39.4000015258789 -0.0109097957611084
39.7000007629395 -0.0134897232055664
40 -0.016675591468811
40.2999992370605 -0.0206058025360107
40.5999984741211 -0.0254473686218262
40.7999992370605 -0.0292781591415405
41 -0.0336686372756958
41.2000007629395 -0.0386918783187866
41.4000015258789 -0.0444269180297852
41.5999984741211 -0.0509566068649292
41.7999992370605 -0.0583653450012207
42 -0.0667357444763184
42.2000007629395 -0.0761435031890869
42.4000015258789 -0.0866519212722778
42.5999984741211 -0.0983049869537354
42.7999992370605 -0.111121535301208
43 -0.125089645385742
43.2000007629395 -0.14016318321228
43.4000015258789 -0.156261444091797
43.5999984741211 -0.173271417617798
43.7999992370605 -0.191051959991455
44 -0.209441542625427
44.2999992370605 -0.237783432006836
45 -0.304352760314941
45.2999992370605 -0.331770420074463
45.5 -0.349349975585938
45.7000007629395 -0.366255283355713
45.9000015258789 -0.382408738136292
46.0999984741211 -0.397751331329346
46.2999992370605 -0.412241220474243
46.5 -0.425852417945862
46.7000007629395 -0.438573479652405
46.9000015258789 -0.450405359268188
47.0999984741211 -0.461360096931458
47.2999992370605 -0.471458911895752
47.5 -0.480730414390564
47.7000007629395 -0.489209294319153
47.9000015258789 -0.49693489074707
48.0999984741211 -0.503949403762817
48.2999992370605 -0.510297536849976
48.5 -0.516024351119995
48.7000007629395 -0.521175384521484
48.9000015258789 -0.525795817375183
49.2000007629395 -0.531826496124268
49.5 -0.536901473999023
49.7999992370605 -0.541150689125061
50.0999984741211 -0.544691801071167
50.5 -0.548492193222046
50.9000015258789 -0.551435947418213
51.2999992370605 -0.553702116012573
51.7999992370605 -0.5558021068573
52.4000015258789 -0.557536482810974
53.0999984741211 -0.55882465839386
54.0999984741211 -0.559837341308594
55.5 -0.560437560081482
58.2000007629395 -0.560712575912476
60 -0.560739517211914
};
\addlegendentry{\footnotesize Agent 1}
\addplot [thick, orange, dash pattern=on 1pt off 3pt on 3pt off 3pt]
table {%
0 0
36.5999984741211 0.000352263450622559
38.2000007629395 0.00114238262176514
39.0999984741211 0.00221192836761475
39.7999992370605 0.00369560718536377
40.4000015258789 0.00573444366455078
40.9000015258789 0.00826334953308105
41.2999992370605 0.0110585689544678
41.5999984741211 0.0137479305267334
41.9000015258789 0.0170725584030151
42.2000007629395 0.021167516708374
42.5 0.0261855125427246
42.7000007629395 0.0301250219345093
42.9000015258789 0.0345984697341919
43.0999984741211 0.0396550893783569
43.2999992370605 0.0453402996063232
43.5 0.0516939163208008
43.7000007629395 0.058746337890625
43.9000015258789 0.066516637802124
44.0999984741211 0.0750095844268799
44.2999992370605 0.084214448928833
44.5 0.0941028594970703
44.7000007629395 0.104629755020142
44.9000015258789 0.115733623504639
45.2000007629395 0.133301019668579
45.5 0.151687622070312
46.0999984741211 0.189540982246399
46.5 0.21447491645813
46.7999992370605 0.232502818107605
47.0999984741211 0.249685764312744
47.4000015258789 0.265844106674194
47.5999984741211 0.275984168052673
47.7999992370605 0.285590171813965
48 0.294649004936218
48.2000007629395 0.303155541419983
48.4000015258789 0.311112761497498
48.5999984741211 0.318529486656189
48.7999992370605 0.325420022010803
49 0.331802606582642
49.2999992370605 0.340471744537354
49.5999984741211 0.348128318786621
49.9000015258789 0.354859232902527
50.2000007629395 0.360753536224365
50.5 0.365898609161377
50.7999992370605 0.370377421379089
51.2000007629395 0.375446677207947
51.5999984741211 0.37963604927063
52 0.38309121131897
52.5 0.386566400527954
53 0.389291048049927
53.5999984741211 0.391795635223389
54.2999992370605 0.393930792808533
55.0999984741211 0.395634174346924
56.0999984741211 0.397026538848877
57.4000015258789 0.398088574409485
59.2999992370605 0.398843765258789
60 0.398993849754333
};
\addlegendentry{\footnotesize Agent 2}
\addplot [thick, green, dashed]
table {%
0 0
34.7999992370605 0.000342369079589844
36.5 0.00112831592559814
37.5 0.00227499008178711
38.2000007629395 0.00371646881103516
38.7999992370605 0.00565874576568604
39.2999992370605 0.00803053379058838
39.7000007629395 0.0106216669082642
40.0999984741211 0.0140399932861328
40.4000015258789 0.0172960758209229
40.7000007629395 0.0212876796722412
41 0.0261647701263428
41.2000007629395 0.0299899578094482
41.4000015258789 0.0343348979949951
41.5999984741211 0.0392513275146484
41.7999992370605 0.0447882413864136
42 0.0509874820709229
42.2000007629395 0.0578786134719849
42.4000015258789 0.0654735565185547
42.5999984741211 0.0737608671188354
42.7999992370605 0.0827004909515381
43 0.0922207832336426
43.2999992370605 0.107355117797852
44.2000007629395 0.154060006141663
44.4000015258789 0.163610577583313
44.5999984741211 0.172526121139526
44.7999992370605 0.18069314956665
45 0.188026905059814
45.2000007629395 0.194472193717957
45.4000015258789 0.200002670288086
45.5999984741211 0.204617857933044
45.7999992370605 0.208340167999268
46 0.21121084690094
46.2000007629395 0.213285446166992
46.4000015258789 0.214630246162415
46.5999984741211 0.215318083763123
46.7999992370605 0.215425133705139
47.0999984741211 0.214664220809937
47.4000015258789 0.213018536567688
47.7999992370605 0.209839701652527
48.2999992370605 0.204918503761292
49.9000015258789 0.188392877578735
50.5 0.183221459388733
51.0999984741211 0.178840756416321
51.7000007629395 0.175219774246216
52.2999992370605 0.172278761863708
53 0.16957676410675
53.7999992370605 0.16726291179657
54.7000007629395 0.165408372879028
55.7999992370605 0.163901686668396
57.2000007629395 0.162761330604553
59.0999984741211 0.161986827850342
60 0.161790490150452
};
\addlegendentry{\footnotesize Agent 3}
\end{axis}

\end{tikzpicture}
\begin{tikzpicture}

\definecolor{grey}{RGB}{176,176,176}
\definecolor{orange}{RGB}{217,83,25}
\definecolor{green}{RGB}{119,172,48}
\definecolor{lightgrey}{RGB}{204,204,204}
\definecolor{blue}{RGB}{0,114,189}

\begin{axis}[
height=\axisheight,
legend cell align={left},
legend style={
  fill opacity=0.8,
  draw opacity=1,
  text opacity=1,
  at={(0.03,0.5)},
  anchor=north west,
  draw=lightgrey
},
minor tick num=2,
minor xtick={-2,2,4,6,8,12,14,16,18,22,24,26,28,32,34,36,38,42,44,46,48,52,54,56,58,62},
minor ytick={-0.55,-0.5,-0.45,-0.35,-0.3,-0.25,-0.15,-0.1,-0.0500000000000003,0.0499999999999998,0.0999999999999998,0.15,0.25,0.3,0.35,0.45,0.5},
tick align=outside,
tick pos=left,
width=\axiswidth,
x grid style={grey},
xmajorgrids,
xmin=-3, xmax=63,
xtick distance=10,
xtick style={color=black},
xtick={-10,0,10,20,30,40,50,60,70},
y grid style={grey},
ylabel={\(\displaystyle x_{i,2}\) in m},
ymajorgrids,
ymin=-0.606642183368995, ymax=0.465399539329772,
ytick distance=0.2,
ytick style={color=black},
ytick={-0.8,-0.6,-0.4,-0.2,0,0.2,0.4,0.6}
]
\addplot [thick, blue]
table {%
0 0
35.0999984741211 0.000370979309082031
36.7000007629395 0.0011674165725708
37.7000007629395 0.00238955020904541
38.4000015258789 0.00394439697265625
39 0.00605952739715576
39.5 0.00866222381591797
39.9000015258789 0.0115227699279785
40.2000007629395 0.0142650604248047
40.5 0.0176477432250977
40.7999992370605 0.0218095779418945
41.0999984741211 0.0269111394882202
41.2999992370605 0.030921459197998
41.5 0.0354832410812378
41.7000007629395 0.0406506061553955
41.9000015258789 0.0464733839035034
42.0999984741211 0.052992582321167
42.2999992370605 0.0602344274520874
42.5 0.068204402923584
42.7000007629395 0.0768810510635376
42.9000015258789 0.0862109661102295
43.2000007629395 0.101227998733521
43.5999984741211 0.122436761856079
44 0.143712520599365
44.2999992370605 0.158799529075623
44.5 0.168113708496094
44.7000007629395 0.176663398742676
44.9000015258789 0.184336304664612
45.0999984741211 0.19105339050293
45.2999992370605 0.196768522262573
45.5 0.201466917991638
45.7000007629395 0.205162048339844
45.9000015258789 0.20789110660553
46.0999984741211 0.209711074829102
46.2999992370605 0.210693120956421
46.5 0.210918664932251
46.7000007629395 0.210474491119385
46.9000015258789 0.209449768066406
47.2000007629395 0.207016348838806
47.5 0.203756093978882
47.9000015258789 0.19855535030365
48.5999984741211 0.188428401947021
49.2999992370605 0.178475975990295
49.7999992370605 0.172022581100464
50.2999992370605 0.166292905807495
50.7999992370605 0.161324381828308
51.2999992370605 0.157090187072754
51.7999992370605 0.153528690338135
52.4000015258789 0.150033354759216
53 0.147256135940552
53.7000007629395 0.144750475883484
54.5 0.142638921737671
55.5 0.140819787979126
56.7000007629395 0.139445304870605
58.2999992370605 0.138431549072266
60 0.137903213500977
};
\addplot [thick, orange, dash pattern=on 1pt off 3pt on 3pt off 3pt]
table {%
0 0
35.4000015258789 0.000355720520019531
37.0999984741211 0.00115656852722168
38.0999984741211 0.00231325626373291
38.7999992370605 0.00375699996948242
39.4000015258789 0.0056917667388916
39.9000015258789 0.00804281234741211
40.2999992370605 0.0106004476547241
40.7000007629395 0.0139611959457397
41 0.0171509981155396
41.2999992370605 0.0210481882095337
41.5999984741211 0.0257930755615234
41.9000015258789 0.0315409898757935
42.0999984741211 0.0360110998153687
42.2999992370605 0.0410481691360474
42.5 0.0466976165771484
42.7000007629395 0.05299973487854
42.9000015258789 0.0599862337112427
43.0999984741211 0.0676783323287964
43.2999992370605 0.0760840177536011
43.5 0.0851967334747314
43.7000007629395 0.0949946641921997
43.9000015258789 0.105440616607666
44.0999984741211 0.116482973098755
44.4000015258789 0.134020209312439
44.7000007629395 0.152488589286804
45.0999984741211 0.178036689758301
45.9000015258789 0.229444742202759
46.2000007629395 0.247909188270569
46.5 0.265539407730103
46.7999992370605 0.282150983810425
47 0.29259204864502
47.2000007629395 0.302493691444397
47.4000015258789 0.31183910369873
47.5999984741211 0.320619583129883
47.7999992370605 0.328834891319275
48 0.336490988731384
48.2000007629395 0.343600034713745
48.4000015258789 0.350178718566895
48.7000007629395 0.359098196029663
49 0.366952180862427
49.2999992370605 0.373829483985901
49.5999984741211 0.379822731018066
49.9000015258789 0.385024309158325
50.2000007629395 0.389522790908813
50.5 0.393402338027954
50.9000015258789 0.397744417190552
51.2999992370605 0.401286125183105
51.7000007629395 0.404168725013733
52.2000007629395 0.407025218009949
52.7999992370605 0.409604072570801
53.5 0.411754965782166
54.2999992370605 0.413429498672485
55.2999992370605 0.414761543273926
56.7000007629395 0.415801644325256
58.7000007629395 0.41647207736969
60 0.416670322418213
};
\addplot [thick, green, dashed]
table {%
0 0
34.2000007629395 -0.000362396240234375
35.7999992370605 -0.00112295150756836
36.7999992370605 -0.00227677822113037
37.5 -0.00373375415802002
38.0999984741211 -0.00570487976074219
38.5999984741211 -0.00812125205993652
39 -0.0107712745666504
39.4000015258789 -0.0142829418182373
39.7000007629395 -0.0176455974578857
40 -0.0217932462692261
40.2999992370605 -0.0269033908843994
40.5 -0.0309485197067261
40.7000007629395 -0.0355877876281738
40.9000015258789 -0.0409011840820312
41.0999984741211 -0.0469760894775391
41.2999992370605 -0.0539063215255737
41.5 -0.0617897510528564
41.7000007629395 -0.0707252025604248
41.9000015258789 -0.080808162689209
42.0999984741211 -0.0921239852905273
42.2999992370605 -0.104740381240845
42.5 -0.118698596954346
42.7000007629395 -0.134005665779114
42.9000015258789 -0.150626063346863
43.0999984741211 -0.16847836971283
43.2999992370605 -0.187434315681458
43.5 -0.207322359085083
43.7000007629395 -0.227935671806335
44 -0.259703636169434
44.5 -0.3128741979599
44.7000007629395 -0.333544135093689
44.9000015258789 -0.353570222854614
45.0999984741211 -0.372790336608887
45.2999992370605 -0.391073822975159
45.5 -0.408321619033813
45.7000007629395 -0.42446506023407
45.9000015258789 -0.439463496208191
46.0999984741211 -0.453300833702087
46.2999992370605 -0.465982794761658
46.5 -0.477532386779785
46.7000007629395 -0.487987637519836
46.9000015258789 -0.497397184371948
47.0999984741211 -0.505818247795105
47.2999992370605 -0.513313293457031
47.5 -0.519948482513428
47.7000007629395 -0.525791168212891
47.9000015258789 -0.530908465385437
48.0999984741211 -0.535366415977478
48.2999992370605 -0.539228558540344
48.5 -0.542555451393127
48.7999992370605 -0.546665906906128
49.0999984741211 -0.549875736236572
49.4000015258789 -0.552340745925903
49.7999992370605 -0.554700255393982
50.2000007629395 -0.556243300437927
50.7000007629395 -0.557346343994141
51.2999992370605 -0.55786657333374
52.2000007629395 -0.557750225067139
54.2999992370605 -0.556351661682129
56.5999984741211 -0.555209755897522
59.5 -0.554603576660156
60 -0.554553031921387
};
\end{axis}

\end{tikzpicture}
\begin{tikzpicture}

\definecolor{grey}{RGB}{176,176,176}
\definecolor{orange}{RGB}{217,83,25}
\definecolor{green}{RGB}{119,172,48}
\definecolor{lightgrey}{RGB}{204,204,204}
\definecolor{blue}{RGB}{0,114,189}

\begin{axis}[
height=\axisheight,
legend cell align={left},
legend style={
    fill opacity=0.8, 
    draw opacity=1, 
    text opacity=1, 
    draw=lightgrey, 
    at={(0.70,0.99)},
    anchor=north west
    },
minor tick num=3,
minor xtick={-2,2,4,6,8,12,14,16,18,22,24,26,28,32,34,36,38,42,44,46,48,52,54,56,58,62},
tick align=outside,
tick pos=left,
width=\axiswidth,
x grid style={grey},
xlabel={time in s},
xmajorgrids,
xmin=-3, xmax=63,
xtick distance=10,
xtick style={color=black},
xtick={-10,0,10,20,30,40,50,60,70},
y grid style={grey},
ylabel={\(\displaystyle x_{i,3}\) in m},
ymajorgrids,
ymin=0.9, ymax=3.1,
ytick distance=0.2,
ytick style={color=black},
ytick={1,1.5,2,2.5,3}
]
\addplot [thick, blue]
table {%
0 1
0.100000023841858 1
0.200000047683716 1.00266444683075
0.299999952316284 1.00753903388977
0.399999976158142 1.01433002948761
0.5 1.02278172969818
0.600000023841858 1.03266632556915
0.700000047683716 1.04378092288971
0.799999952316284 1.05594432353973
1 1.0827910900116
1.20000004768372 1.11212027072906
1.39999997615814 1.14307522773743
1.79999995231628 1.2073210477829
2.29999995231628 1.28760290145874
2.59999990463257 1.33414173126221
2.90000009536743 1.37882161140442
3.20000004768372 1.42136013507843
3.40000009536743 1.44845283031464
3.59999990463257 1.47450757026672
3.79999995231628 1.4995197057724
4 1.52349555492401
4.30000019073486 1.55754995346069
4.59999990463257 1.58938109874725
4.90000009536743 1.61908292770386
5.19999980926514 1.64676141738892
5.5 1.67252790927887
5.80000019073486 1.6964955329895
6.09999990463257 1.7187762260437
6.40000009536743 1.73947882652283
6.69999980926514 1.75870788097382
7 1.77656328678131
7.30000019073486 1.79313921928406
7.69999980926514 1.81340336799622
8.10000038146973 1.83174669742584
8.5 1.84834825992584
8.89999961853027 1.86337172985077
9.30000019073486 1.87696599960327
9.69999980926514 1.88926613330841
10.1999998092651 1.90300738811493
10.6999998092651 1.91513228416443
11.1999998092651 1.92583060264587
11.8000001907349 1.93702006340027
12.3999996185303 1.946648478508
13.1000003814697 1.9561972618103
13.8000001907349 1.96421051025391
14.6000003814697 1.97180306911469
15.5 1.97870981693268
16.5 1.98476183414459
17.6000003814697 1.98988234996796
18.8999996185303 1.99436843395233
20.5 1.99821698665619
22.3999996185303 2.00117325782776
24.8999996185303 2.00343060493469
28.3999996185303 2.00494456291199
34.4000015258789 2.00578355789185
51.0999984741211 2.00602006912231
60 2.00602340698242
};
\addplot [thick, orange, dash pattern=on 1pt off 3pt on 3pt off 3pt]
table {%
0 2
0.600000023841858 2.00071001052856
3.5 2.00394320487976
6.30000019073486 2.00544548034668
10.8000001907349 2.00609493255615
29.2000007629395 2.0060350894928
60 2.006023645401
};
\addplot [thick, green, dashed]
table {%
0 3
0.100000023841858 3
0.200000047683716 2.99748969078064
0.299999952316284 2.99277591705322
0.399999976158142 2.98614501953125
0.5 2.97785258293152
0.600000023841858 2.96812701225281
0.700000047683716 2.95717120170593
0.799999952316284 2.94516658782959
1 2.9186372756958
1.20000004768372 2.88962531089783
1.39999997615814 2.8589870929718
1.79999995231628 2.79536986351013
2.29999995231628 2.71585536003113
2.59999990463257 2.66976261138916
2.90000009536743 2.62551522254944
3.20000004768372 2.58339405059814
3.5 2.54354524612427
3.70000004768372 2.51827025413513
3.90000009536743 2.49402832984924
4.19999980926514 2.45957684516907
4.5 2.42735886573792
4.80000019073486 2.39728617668152
5.09999990463257 2.36925625801086
5.40000009536743 2.34315967559814
5.69999980926514 2.31888389587402
6 2.29631686210632
6.30000019073486 2.27534961700439
6.59999990463257 2.25587630271912
6.90000009536743 2.23779606819153
7.19999980926514 2.22101354598999
7.5 2.2054386138916
7.90000009536743 2.18640446662903
8.30000019073486 2.16918087005615
8.69999980926514 2.15359783172607
9.10000038146973 2.13950037956238
9.5 2.12674784660339
10 2.11250519752502
10.5 2.09994173049927
11 2.08886003494263
11.6000003814697 2.07727360725403
12.1999998092651 2.06730771064758
12.8999996185303 2.05742812156677
13.6000003814697 2.04914116859436
14.3999996185303 2.04129338264465
15.3000001907349 2.03415822982788
16.2999992370605 2.02790999412537
17.3999996185303 2.02262711524963
18.7000007629395 2.01800227165222
20.2000007629395 2.01424241065979
22.1000003814697 2.01112365722656
24.5 2.00881505012512
27.8999996185303 2.00721216201782
33.5 2.00631499290466
47.5999984741211 2.00603222846985
60 2.00602412223816
};
\end{axis}

\end{tikzpicture}
    \caption{Evolution over time of the first three states (the outputs) of the quadcopters in Section~\ref{ssec:formation}.}
    \label{fig:quadcopter_altitude}
\end{figure}
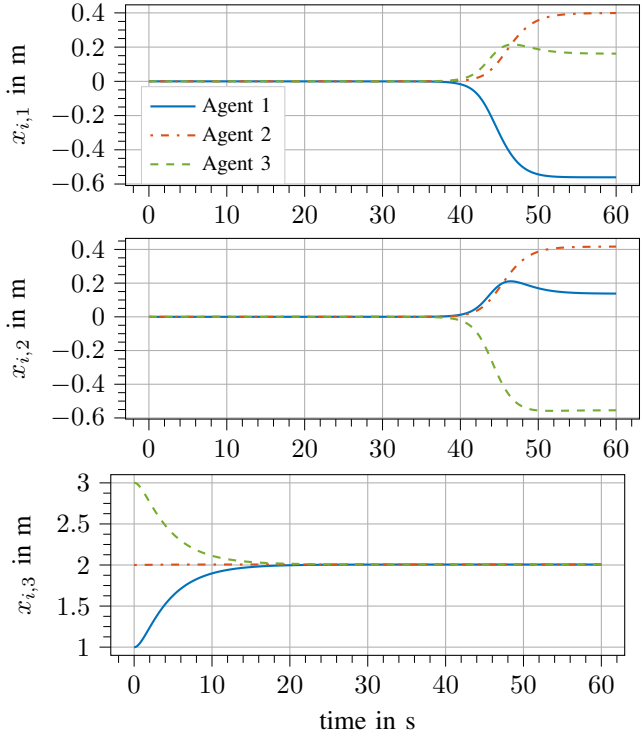
\begin{figure}[t]
    \centering
    \resizebox{0.8\linewidth}{!}  
    {
    \centerline{\input{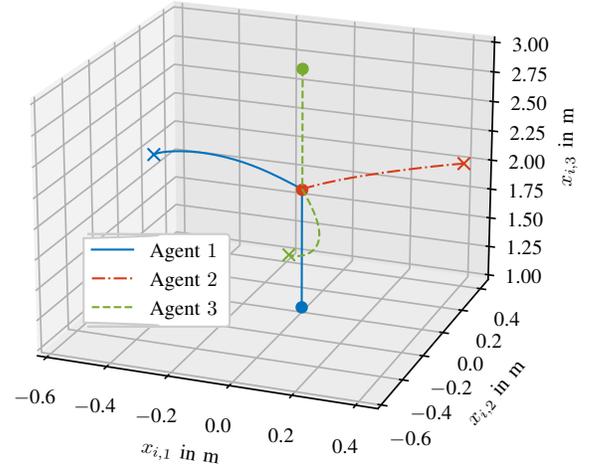}}
    }
    \caption{Evolution of the first three states of the quadcopters in Section~\ref{ssec:formation} into an equilateral triangle. The initial condition of each trajectory is marked by a point and the final value by an X.}
    \label{fig:quadcopter_outputs}
\end{figure}     

Since we want to enforce a specified distance $d_{ij}$ between the position of neighbors and want consensus in the altitude, we formulate the output cooperation set
\begin{align}\label{eq:formation_control_set}
    \mathcal{Y}^\mathrm{c} = \{y \in {\textstyle \prod_{i=1}^m }\mathcal{Y}_i \mid &\Vert\begin{bsmallmatrix}y_{i,1} \\ y_{i,2}\end{bsmallmatrix} - \begin{bsmallmatrix}y_{j,1} \\ y_{j,2}\end{bsmallmatrix} \Vert = d_{ij} \; \forall (i,j) \in \mathcal{E}, 
    \notag
    \\
    & \Vert y_{i,3} - y_{j,3} \Vert = 0 \; \forall (i,j) \in \mathcal{E} \}.
\end{align}
Directly derived from the constraint in~\eqref{eq:formation_control_set}, a possible choice for the cost for cooperation is given by
$
    V^\mathrm{c}(y) = \sum_{i=1}^m \sum_{j \in \mathcal{N}_i} (\Vert \begin{bsmallmatrix}y_{i,1} \\ y_{i,2}\end{bsmallmatrix} - \begin{bsmallmatrix}y_{j,1} \\ y_{j,2}\end{bsmallmatrix} \Vert^2 - d_{ij}^2)^2 + \Vert y_{i,3} - y_{j,3} \Vert^2.
$

We choose an all-to-all graph, i.e., all agents are neighbors of each other, and want to reach a distance of $d_{ij} = 1$ for all $i,j \in \mathcal{V}$ with $i\neq j$.
Hence, the agents should converge to a common altitude and form an equilateral triangle.

Note that $V^\mathrm{c}(y)$ is not convex for the chosen $\mathcal{Y}_i$ and the projected gradient-descent update, which was used in the analysis of the proposed scheme, vanishes if the outputs of all agents are the same (corresponding to a local maximum).
Nevertheless, the following simulations illustrate that the proposed scheme still works well in this case. 
This is because it is enough that an agent can reduce the cost for cooperation offsetting the increase in its tracking part by incrementally moving its cooperation output if the ones of the neighbors are fixed.
Moreover, all local minima of $V^\mathrm{c}$ (with domain $\prod_{i=1}^m \mathcal{Y}_i$) are zero.
A detailed study of such more general settings with weaker conditions on the cost for cooperation is an interesting topic for future research.

We consider as critical initial conditions,
$x_1^\mr{ic} = \begin{bsmallmatrix} 0 & 0 & 1 & 0 & 0 & 0 & 0 & 0 & 0 & 0 \end{bsmallmatrix}^\top$, $x_2^\mr{ic} = \begin{bsmallmatrix} 0 & 0 & 2 & 0 & 0 & 0 & 0 & 0 & 0 & 0 \end{bsmallmatrix}^\top$, and $x_3^\mr{ic} = \begin{bsmallmatrix} 0 & 0 & 3 & 0 & 0 & 0 & 0 & 0 & 0 & 0 \end{bsmallmatrix}^\top$,
which are such that $\nabla\bar{V}_i^\mr{c}(y_{\mr{c},i}^*(t))$ vanishes in the first two outputs when selecting $y_{\mathrm{c},i}^*(0) = [x_{i,1}^\mr{ic} \; x_{i,2}^\mr{ic} \; x_{i,3}^\mr{ic}]^\top$ as the initial cooperation output.
However, Algorithm~\ref{alg:sequential_MPC} is applied unchanged by providing a suitable warm start to the gradient-based numerical solver.
This warm start is the solution of the optimization problem~\eqref{eq:MPC_for_cooperation} with slightly perturbed cooperation outputs.
The simulation results are depicted in Figures~\ref{fig:quadcopter_altitude}--\ref{fig:quadcopter_outputs}.
The agents achieve the cooperative goal by first converging to a common altitude and then the final formation.
In the Appendix, we provide a simulation for slightly perturbed initial conditions.
\section{Conclusion}\label{sec:conclusion}
We propose a framework for cooperative multiagent systems based on a sequential distributed MPC scheme.
The cooperative task is formulated by a suitably chosen cost associated with the set of outputs that solve it.
This cost is combined with a tracking MPC formulation using artificial references.
Cooperation is achieved by agents sharing their currently optimal artificial reference with neighbors.
The local tracking of the artificial reference adds flexibility for each agent.
We show that the closed-loop system is asymptotically stable with respect to a set in the states that corresponds to cooperation in the outputs, thus solving the cooperation problem.
The use of tracking MPC with artificial references allows us to combine the advantages of MPC with well established results from distributed optimization.

Dynamic tasks such as platooning or synchronization, and inclusion of coupling constraints, is subject of future work.
First steps for periodic tasks have been obtained in~\cite{Koehler2023}.


\bibliographystyle{unsrt}
\bibliography{references_adapted}

\section{Appendix}
\subsection{Proof of Lemma~\ref{lm:value_function_upper_bound}}
\begin{proof}
    We denote with 
    $z^i_{\mr{c}}(t+1) = (y_{\mr{c}, 1}^*(t+1), \dots, y_{\mr{c}, i-1}^*(t+1), y_{\mr{c}, i}^*(t), \dots, y^*_{\mr{c}, m}(t))$ 
    the current cooperative output vector at the beginning of the $i$-th step of the sequence in Step 1 of Algorithm~\ref{alg:sequential_MPC}.
    That is, agents~1 to $i-1$ have updated their cooperative output, whereas agents~$i$ to $m$ have yet to optimize and their communicated previously optimal cooperation output from time $t-1$ is available.
    In addition, $z^{m+1}_{\mr{c}}(t+1) = (y_{\mr{c}, 1}^*(t+1), \dots, y^*_{\mr{c}, m}(t+1))$.
    This enables the following convenient abuse of notation in the argument of $V_{ij}^\mr{c}$, which, of course, only depends on the $i$-th and $j$-th components of $z^i_{\mr{c}}(\cdot)$.
    Since a feasible candidate is an upper bound on the optimal cost, both candidates $(\hat{u}_i^\mathrm{a}(\cdot \vert t+1), \hat{y}_\ci^\mathrm{a}(t+1))$ and $(\hat{u}_i^\mathrm{b}(\cdot \vert t+1), \hat{y}_\ci^\mathrm{b}(t+1))$ provide an upper bound on the optimal cost of agent~$i$ in~\eqref{eq:MPC_for_cooperation}.

    Define the additional shorthands
    \begin{align*}
        \bar{J}_i^{\mr{tr,a}}(t+1) &= {J}_i^{\mr{tr},*}(t) - \ell_i(x_i(t), u_i^*(0 \vert t), r_{\mr{c},i}^*(t)),
        \\
        \bar{J}_i^{\mr{tr,b}}(t+1) &= \hat{J}_i^{\mr{tr,b}}(t+1) - \kappa_i \Vert T(y_{\mr{c},i}^*(t)) - y_{\mr{c},i}^*(t) \Vert^2.
    \end{align*}
    If $(\hat{u}_i^\mathrm{a}(\cdot \vert t+1), \hat{y}_i^\mathrm{a}(t+1))$ is used, the upper bound is
    \begin{align}\label{eq:candidate_upper_bound_for_a}
        &\hspace*{-0.5em}J_i^{\mr{tr},*}(t+1) + \sum_{\mathclap{j \in \mathcal{N}_i}} \left( V_{ij}^\mr{c}(z_\mr{c}^{i+1}(t+1)) + V_{ji}^\mr{c}(z_\mr{c}^{i+1}(t+1)) \right)
        \notag
        \\
        &\hspace*{-0.5em}\le {J}_i^{\mr{tr},*}(t) - \ell_i(x_i(t), u_i^*(0 \vert t), r_{\mr{c},i}^*(t)) 
        \notag\\
        &\hspace*{-0.5em}\phantom{le{}} + \sum_{\mathclap{j \in \mathcal{N}_i}} \left( V_{ij}^\mr{c}(z_\mr{c}^{i}(t+1)) + V_{ji}^\mr{c}(z_\mr{c}^{i}(t+1)) \right)
        \notag\\
        &\hspace*{-0.5em}= \bar{J}_i^{\mr{tr,a}}(t+1) + \sum_{\mathclap{j \in \mathcal{N}_i}} \left( V_{ij}^\mr{c}(z_\mr{c}^{i}(t+1)) + V_{ji}^\mr{c}(z_\mr{c}^{i}(t+1)) \right)
    \end{align}
    where the inequality follows from standard arguments in MPC (cf.~\cite{Rawlings.2020}).

    If $(\hat{u}_i^\mathrm{b}(\cdot \vert t+1), \hat{y}_i^\mathrm{b}(t+1))$ is used, the upper bound is 
    \begin{align}\label{eq:candidate_upper_bound_for_b}
        &\hspace*{-0.6em}J_i^{\mr{tr},*}(t+1) + \sum_{\mathclap{j \in \mathcal{N}_i}} \left( V_{ij}^\mr{c}(z_\mr{c}^{i+1}(t+1)) + V_{ji}^\mr{c}(z_\mr{c}^{i+1}(t+1)) \right)
        \notag
        \\
        &\hspace*{-0.6em}\stackrel{\eqref{eq:cooperation_cost_decrease}}{\le} \hat{J}_i^{\mr{tr,b}}(t+1) 
        + \sum_{\mathclap{j \in \mathcal{N}_i}} 
        \left( V_{ij}^\mr{c}(z_\mr{c}^{i}(t+1)) + V_{ji}^\mr{c}(z_\mr{c}^{i}(t+1)) \right)
        \notag\\
        &\hspace*{-0.6em}\phantom{\le{}} - \kappa_i \Vert T(y_{\mr{c},i}^*(t)) - y_{\mr{c},i}^*(t) \Vert^2
        \notag
        \\
        &\hspace*{-0.6em}= \bar{J}_i^{\mr{tr,b}}(t+1) 
        + \sum_{\mathclap{j \in \mathcal{N}_i}} \big( V_{ij}^\mr{c}(z_\mr{c}^{i}(t+1)) + V_{ji}^\mr{c}(z_\mr{c}^{i}(t+1)) \big).
    \end{align}

    Next, as in~\cite{Muller.2012}, we split the value function into
    \begin{align}\label{eq:candidate_upper_bound_step_1}
        &V(x(t+1)) = \sum_{i=1}^m \big( J_i^{\mr{tr},*}(t+1) + \sum_{\mathclap{j \in \mathcal{N}_i}} V_{ij}^\mr{c}(z_\mr{c}^{m+1}(t+1))\big)
        \notag
        \\
        &= \sum_{{i \in \mathcal{V} \backslash \{ \{m\}\cup \mathcal{N}_m \}}} \big( J_i^{\mr{tr},*}(t+1) + \sum_{j \in \mathcal{N}_i} V_{ij}^\mr{c}(z_\mr{c}^{m}(t + 1)) \big)
        \notag
        \\
        &\phantom{={}} + \sum_{{i \in \mathcal{N}_m}} \big( J_i^{\mr{tr},*}(t+1) 
        + \sum_{j \in \mathcal{N}_i\backslash\{m\}} V_{ij}^\mr{c}(z_\mr{c}^{m}(t+1)) \big)
        \notag
        \\
        &\phantom{={}} + J_m^{\mr{tr},*}(t+1) + \sum_{j \in \mathcal{N}_m} V_{mj}^\mr{c}(z_\mr{c}^{m+1}(t+1)) 
        \notag
        \\
        &\phantom{={}} + \sum_{j \in \mathcal{N}_m} V_{jm}^\mr{c}(z_\mr{c}^{m+1}(t+1))
    \end{align} 
    where equality holds since the coupling costs $V_{ij}^\mr{c}$ with changed arguments do not depend on the cooperation output of agent~$m$.
    Then, either from~\eqref{eq:candidate_upper_bound_for_a} with $\bar{J}_m^{\mr{tr}}(t+1) = \bar{J}_m^{\mr{tr,a}}(t+1)$ if $m \in \mathcal{V}_\mr{a}$ or from~\eqref{eq:candidate_upper_bound_for_b} with $\bar{J}_m^{\mr{tr}}(t+1) = \bar{J}_m^{\mr{tr,b}}(t+1)$ if $m \in \mathcal{V}_\mr{b}$,
    \begin{align}\label{eq:candidate_upper_bound_step_2}
        &V(x(t+1))
        \notag
        \\
        &\le \sum_{{i \in \mathcal{V} \backslash \{\{m\}\cup \mathcal{N}_m\} }} 
        \big( 
            J_i^{\mr{tr},*}(t+1) + \sum_{j \in \mathcal{N}_i} V_{ij}^\mr{c}(z_\mr{c}^{m}(t + 1)) 
        \big)
        \notag
        \\
        &\phantom{={}} + \sum_{{i \in \mathcal{N}_m}} 
        \big( 
            J_i^{\mr{tr},*}(t+1) + \sum_{j \in \mathcal{N}_i} V_{ij}^\mr{c}(z_\mr{c}^{m}(t+1)) 
        \big)
        \notag
        \\
        &\phantom{={}} + \bar{J}_m^{\mr{tr}}(t+1) 
        + \sum_{j \in \mathcal{N}_m} V_{mj}^\mr{c}(z_\mr{c}^{m}(t+1)). 
    \end{align}
    In the following, we use a similar argument as in~\eqref{eq:candidate_upper_bound_step_2} to establish via induction the bound~\eqref{eq:candidate_upper_bound} when going from $k=m$ to $k=1$.
    For this, assume the following upper bound for some $k=2,\dots,m$
    \begin{align}\label{eq:candidate_upper_bound_step_k}
        &V(x(t+1)) 
        \notag
        \\
        &\le \sum_{{i \in \mathcal{V}\backslash \{\{k\}\cup \mathcal{N}_k\} }, i < k} \big( J_i^{\mr{tr},*}(t+1) + \sum_{j \in \mathcal{N}_i} V_{ij}^\mr{c}(z_\mr{c}^{k}(t + 1)) \big)
        \notag
        \\
        &\phantom{\le{}} + \sum_{{i \in \mathcal{V}\backslash \{\{k\}\cup \mathcal{N}_k\} }, i > k} 
        \big( \bar{J}_i^{\mr{tr}}(t+1) + \sum_{j \in \mathcal{N}_i} V_{ij}^\mr{c}(z_\mr{c}^{k}(t + 1)) \big)
        \notag
        \\
        &\phantom{\le{}} + \sum_{{i \in \mathcal{N}_k}, i < k} 
        \big( J_i^{\mr{tr},*}(t+1) 
        + \sum_{j \in \mathcal{N}_i\backslash\{k\}} V_{ij}^\mr{c}(z_\mr{c}^{k}(t+1)) \big)
        \notag
        \\
        &\phantom{\le{}} + \sum_{{i \in \mathcal{N}_k}, i > k} \big( \bar{J}_i^{\mr{tr}}(t+1) 
        + \sum_{j \in \mathcal{N}_i\backslash\{k\}} V_{ij}^\mr{c}(z_\mr{c}^{k}(t+1)) \big)
        \notag
        \\
        &\phantom{\le{}} + J_k^{\mr{tr},*}(t+1) + \sum_{j \in \mathcal{N}_k} V_{kj}^\mr{c}(z_\mr{c}^{k+1}(t+1))
        \notag
        \\
        &\phantom{\le{}} + \sum_{j \in \mathcal{N}_k} V_{jk}^\mr{c}(z_\mr{c}^{k+1}(t+1)).
    \end{align} 
    For the start of the induction, note that~\eqref{eq:candidate_upper_bound_step_k} equals~\eqref{eq:candidate_upper_bound_step_1} for $k = m$.
    We identify again the optimal cost in~\eqref{eq:MPC_for_cooperation} of agent $k$ in the last three terms of~\eqref{eq:candidate_upper_bound_step_k} and by the same arguments as before
    \begin{align*}
        &V(x(t+1)) 
        \notag
        \\
        &\le \sum_{{i \in \mathcal{V}\backslash \{\{k\}\cup \mathcal{N}_k\} }, i < k} \big( J_i^{\mr{tr},*}(t+1) + \sum_{j \in \mathcal{N}_i} V_{ij}^\mr{c}(z_\mr{c}^{k}(t + 1)) \big)
        \notag
        \\
        &\phantom{\le{}} + \sum_{{i \in \mathcal{V}\backslash \{\{k\}\cup \mathcal{N}_k\} }, i > k} 
        \big( \bar{J}_i^{\mr{tr}}(t+1) + \sum_{j \in \mathcal{N}_i} V_{ij}^\mr{c}(z_\mr{c}^{k}(t + 1)) \big)
        \notag
        \\
        &\phantom{\le{}} + \sum_{{i \in \mathcal{N}_k}, i < k} 
        \big( J_i^{\mr{tr},*}(t+1) 
        + \sum_{j \in \mathcal{N}_i} V_{ij}^\mr{c}(z_\mr{c}^{k}(t+1)) \big)
        \notag
        \\
        &\phantom{\le{}} + \sum_{{i \in \mathcal{N}_k}, i > k} \big( \bar{J}_i^{\mr{tr}}(t+1) 
        + \sum_{j \in \mathcal{N}_i} V_{ij}^\mr{c}(z_\mr{c}^{k}(t+1)) \big)
        \notag
        \\
        &\phantom{\le{}} + \bar{J}_k^{\mr{tr}}(t+1) 
        + \sum_{j \in \mathcal{N}_k} V_{kj}^\mr{c}(z_\mr{c}^{k}(t+1))
    \end{align*} 
    where again $\bar{J}_k^{\mr{tr}}(t+1) = \bar{J}_k^{\mr{tr,b}}(t+1)$ if $k \in \mathcal{V}_\mr{b}$ or $\bar{J}_k^{\mr{tr}}(t+1) = \bar{J}_k^{\mr{tr,a}}(t+1)$ if $k \in \mathcal{V}_\mr{a}$.

    Reordering this and highlighting $k-1$ instead of $k$ reveals
    \begin{align*}
        &V(x(t+1)) 
        \notag
        \\
        &\le 
        \sum_{\substack{{i \in \mathcal{V}\backslash \{\{k-1\}\cup \mathcal{N}_{k-1}\} } \\ i < k-1 }} \big( J_i^{\mr{tr},*}(t+1) + \sum_{j \in \mathcal{N}_i} V_{ij}^\mr{c}(z_\mr{c}^{k}(t + 1)) \big)
        \notag
        \\
        &\phantom{\le{}} 
        + \sum_{\substack{{i \in \mathcal{V}\backslash \{\{k-1\}\cup \mathcal{N}_{k-1}\} } \\ i > k-1 }}
        \big( \bar{J}_i^{\mr{tr}}(t+1) + \sum_{j \in \mathcal{N}_i} V_{ij}^\mr{c}(z_\mr{c}^{k}(t + 1)) \big)
        \notag
        \\
        &\phantom{\le{}} + \sum_{\substack{{i \in \mathcal{N}_{k-1}}\\ i < k-1}} 
        \big( J_i^{\mr{tr},*}(t+1) 
        + \sum_{j \in \mathcal{N}_i  \backslash \{k-1\}} V_{ij}^\mr{c}(z_\mr{c}^{k}(t+1)) \big)
        \notag
        \\
        &\phantom{\le{}} + \sum_{\substack{{i \in \mathcal{N}_{k-1}} \\ i > k-1}} 
        \big( \bar{J}_i^{\mr{tr}}(t+1) 
        + \sum_{j \in \mathcal{N}_i  \backslash \{k-1\}} V_{ij}^\mr{c}(z_\mr{c}^{k}(t+1)) \big)
        \notag
        \\
        &\phantom{\le{}} + {J}_{k-1}^{\mr{tr},*}(t+1) 
        + \sum_{j \in \mathcal{N}_{k-1}} V_{(k-1)j}^\mr{c}(z_\mr{c}^{k}(t+1))
        \\
        &\phantom{\le{}} 
        + \sum_{j \in \mathcal{N}_{k-1}} V_{j(k-1)}^\mr{c}(z_\mr{c}^{k}(t+1)).
    \end{align*} 
    Since again the first four appearing $V_{ij}^\mr{c}$ do not depend on the cooperation output of agent $k-1$, the bound is
    \begin{align*}
        &V(x(t+1)) 
        \notag
        \\
        &\le 
        \sum_{\substack{{i \in \mathcal{V}\backslash \{\{k-1\}\cup \mathcal{N}_{k-1}\} } \\ i < k-1 }} \big( J_i^{\mr{tr},*}(t+1) + \sum_{j \in \mathcal{N}_i} V_{ij}^\mr{c}(z_\mr{c}^{k-1}(t+1)) \big)
        \notag
        \\
        &\phantom{\le{}} 
        + \sum_{\substack{{i \in \mathcal{V}\backslash \{\{k-1\}\cup \mathcal{N}_{k-1}\} } \\ i > k-1 }}
        \big( \bar{J}_i^{\mr{tr}}(t+1) + \sum_{j \in \mathcal{N}_i} V_{ij}^\mr{c}(z_\mr{c}^{k-1}(t+1)) \big)
        \notag
        \\
        &\phantom{\le{}} + \sum_{\substack{{i \in \mathcal{N}_{k-1}}\\ i < k-1}} 
        \big( J_i^{\mr{tr},*}(t+1) 
        + \sum_{j \in \mathcal{N}_i  \backslash \{k-1\}} V_{ij}^\mr{c}(z_\mr{c}^{k-1}(t+1)) \big)
        \notag
        \\
        &\phantom{\le{}} + \sum_{\substack{{i \in \mathcal{N}_{k-1}} \\ i > k-1}} 
        \big( \bar{J}_i^{\mr{tr}}(t+1) 
        + \sum_{j \in \mathcal{N}_i  \backslash \{k-1\}} V_{ij}^\mr{c}(z_\mr{c}^{k-1}(t+1)) \big)
        \notag
        \\
        &\phantom{\le{}} + {J}_{k-1}^{\mr{tr},*}(t+1) 
        + \sum_{j \in \mathcal{N}_{k-1}} V_{(k-1)j}^\mr{c}(z_\mr{c}^{k}(t+1))
        \\
        &\phantom{\le{}} 
        + \sum_{j \in \mathcal{N}_{k-1}} V_{j(k-1)}^\mr{c}(z_\mr{c}^{k}(t+1)),
    \end{align*} 
    which is~\eqref{eq:candidate_upper_bound_step_k} with $k-1$ instead of $k$.
    Hence, by induction until $k=1$ and inserting~\eqref{eq:candidate_upper_bound_for_a} or~\eqref{eq:candidate_upper_bound_for_b} one last time, the candidates provide the following upper bound on the value function:
    $
        V(x(t+1)) \le \sum_{i\in\mathcal{V}} \bar{J}_i^{\mr{tr}}(t+1) 
        + \sum_{j \in \mathcal{N}_i} V_{ij}^\mr{c}(z_\mr{c}^{1}(t+1))
    $ 
    with $V_{ij}^\mr{c}(z_\mr{c}^{1}(t+1)) = V_{ij}^\mr{c}(y_{\mr{c}, i}^*(t), y_{\mr{c}, j}^*(t))$ for all $i,j \in \mathcal{V}$.
    
    Inserting $\bar{J}_i^{\mr{tr}}(t+1) = \bar{J}_i^{\mr{tr,a}}(t+1)$ if $i \in \mathcal{V}_\mr{a}$ or $\bar{J}_i^{\mr{tr}}(t+1) = \bar{J}_i^{\mr{tr,b}}(t+1)$ if $i \in \mathcal{V}_\mr{b}$ and the shorthands back in, yields the claimed upper bound
    \begin{align*}
        &V(x(t+1)) 
        \\
        &\le \sum_{i \in \mathcal{V}_\mr{a} } \big( \bar{J}_i^\mr{tr,a}(t+1) + \sum_{j \in \mathcal{N}_i} V_{ij}^\mr{c}(y_{\mr{c},i}^*(t), y_{\mr{c},j}^*(t)) \big)
        \\
        &\phantom{\le{}} +  \sum_{i \in \mathcal{V}_\mr{b} } \big( \bar{J}_i^\mr{tr,b}(t+1) + \sum_{j \in \mathcal{N}_i} V_{ij}^\mr{c}(y_{\mr{c},i}^*(t), y_{\mr{c},j}^*(t)) \big)
        \\
        &\le \sum_{i \in \mathcal{V}_\mr{a} } \big( {J}_i^{\mr{tr},*}(t)  + \sum_{j \in \mathcal{N}_i} V_{ij}^\mr{c}(y_{\mr{c},i}^*(t), y_{\mr{c},j}^*(t)) \big)
        \\
        &\phantom{\le{}} + \sum_{i \in \mathcal{V}_\mr{b}} \big(  \hat{J}_i^\mr{tr,b}(t+1) + \sum_{j \in \mathcal{N}_i} V_{ij}^\mr{c}(y_{\mr{c},i}^*(t), y_{\mr{c},j}^*(t)) \big)
        \\
        &\phantom{\le{}} - \sum_{i \in \mathcal{V}_\mr{b} } \kappa_i \Vert T(y_{\mr{c},i}^*(t)) - y^*_{\mr{c},i}(t) \Vert^2
        \\
        &\phantom{\le{}} - \sum_{i \in \mathcal{V}_\mr{a} } \ell_i(x_i(t), u_i^*(0 \vert t), r_{\mr{c},i}^*(t))
        \\
        &= V(x(t))  - \sum_{i \in \mathcal{V}_\mr{a} } \ell_i(x_i(t), u_i^*(0 \vert t), r_{\mr{c},i}^*(t))
        \\
        &\phantom{={}} + \sum_{i \in \mathcal{V}_\mr{b} } \hat{J}_i^\mr{tr,b}(t+1) - {J}_i^{\mr{tr},*}(t) 
        - \kappa_i \Vert T(y_{\mr{c},i}^*(t)) - y^*_{\mr{c},i}(t) \Vert^2.
    \end{align*}
\end{proof}

\subsection{Consensus: Numerical example}\label{ssec:consensus_appendix}
This section supplements Section~\ref{ssec:consensus} with a numerical example.
We consider a multiagent system consisting of double integrators 
\begin{align*}
    x_i(t+1) = \begin{bsmallmatrix} 1 & 0 & 1 & 0 \\ 0 & 1 & 0 & 1 \\ 0 & 0 & 1 & 0 \\ 0 & 0 & 0 & 1 \end{bsmallmatrix} x_i(t) + \begin{bsmallmatrix} 0 & 0 \\ 0 & 0 \\ 1 & 0\\ 0 & 1 \end{bsmallmatrix} u_i(t), \quad
    y_i(t) = \begin{bsmallmatrix}
        x_{i,1}(t) \\ x_{i,2}(t)
    \end{bsmallmatrix}.
\end{align*}
The initial conditions are set to 
$x_1^\mr{ic} = \begin{bsmallmatrix} -1 & 4 & 0 & 0 \end{bsmallmatrix}^\top$,
$x_2^\mr{ic} = \begin{bsmallmatrix} 2 & 1.8 & 0 & 0 \end{bsmallmatrix}^\top$,
$x_3^\mr{ic} = \begin{bsmallmatrix} 3 & -1.5 & 0 & 0 \end{bsmallmatrix}^\top$,
$x_4^\mr{ic} = \begin{bsmallmatrix} -2 & 0 & 0 & 0 \end{bsmallmatrix}^\top$,
and 
$x_5^\mr{ic} = \begin{bsmallmatrix} 0 & -2 & 0 & 0 \end{bsmallmatrix}^\top$.
The constraint sets are given by $\mathbb{X}_1 = \tilde{\mathbb{X}}_\mr{a} \cap \bar{\mathbb{X}}$, as well as $\mathbb{X}_2 = \mathbb{X}_3 = \tilde{\mathbb{X}}_\mr{b} \cap \bar{\mathbb{X}}$, and $\mathbb{X}_4 = \mathbb{X}_5 = \tilde{\mathbb{X}}_\mr{c} \cap \bar{\mathbb{X}}$ where 
\begin{align*}
    \tilde{\mathbb{X}}_\mr{a} &= \{ x_i \mid \begin{bsmallmatrix} x_{i,1} \\ x_{i,2} \end{bsmallmatrix} \in \mathrm{co}(\begin{bsmallmatrix} 1.1 \\ -2.1 \end{bsmallmatrix}, \begin{bsmallmatrix} 1.1 \\ 4.1 \end{bsmallmatrix}, \begin{bsmallmatrix} -1.1 \\ 4.1 \end{bsmallmatrix}, \begin{bsmallmatrix} -1.1 \\ -2.1 \end{bsmallmatrix}) \},
    \\
    \tilde{\mathbb{X}}_\mr{b} &= \{ x_i \mid \begin{bsmallmatrix} x_{i,1} \\ x_{i,2} \end{bsmallmatrix} \in \mathrm{co}(\begin{bsmallmatrix} 4.1 \\ -2.1 \end{bsmallmatrix}, \begin{bsmallmatrix} 4.1 \\ 2.1 \end{bsmallmatrix}, \begin{bsmallmatrix} -1.1 \\ 2.1 \end{bsmallmatrix}, \begin{bsmallmatrix} -1.1 \\ -2.1 \end{bsmallmatrix}) \},
    \\
    \tilde{\mathbb{X}}_\mr{c} &= \{ x_i \mid \begin{bsmallmatrix} x_{i,1} \\ x_{i,2} \end{bsmallmatrix} \in \mathrm{co}(\begin{bsmallmatrix} 3.1 \\ -0.1 \end{bsmallmatrix}, \begin{bsmallmatrix} -0.1 \\ 3.1 \end{bsmallmatrix}, \begin{bsmallmatrix} -3.1 \\ -0.1 \end{bsmallmatrix}, \begin{bsmallmatrix} -0.1 \\ -3.1 \end{bsmallmatrix}) \},
    \\
    \bar{\mathbb{X}} &= \{ x_i \mid \Vert \begin{bsmallmatrix} x_{i,3} \\ x_{i,4} \end{bsmallmatrix} \Vert_\infty \le 0.25 \},
\end{align*}
as well as
$ 
    \mathbb{U}_i = \{ u_i \in \mathbb{R}^2 \mid \Vert u_i \Vert_\infty \le 0.25 \}
$ 
for all $i \in \mathcal{V}$.
The output cooperation sets $\mathcal{Y}_i$ are tightened compared to $\mathbb{X}_i$ to satisfy Assumption~\ref{asm:reference} and are depicted in Figure~\ref{fig:consensus_positions}.

In the beginning, the multiagent system comprises $m = 4$ systems that are connected according to the graph on the left-hand side in Figure~\ref{fig:graphs}.
After $t = 19$ time steps and agent $4$ completed its optimization, an additional system joins the multiagent system and the graph changes to the one on the right-hand side of Figure~\ref{fig:graphs}, i.e., agent $5$ joins the system.
The additional agent $5$ initializes at $t = 19$ with ${y}_{\mr{c},5}^*(19) = \begin{bsmallmatrix} x_{5,1}^\mr{ic} & x_{5,2}^\mr{ic} \end{bsmallmatrix}^\top$ and sends this to its neighbors.
Thus, all agents consider the changed system in Step 1) of Algorithm~\ref{alg:sequential_MPC} at $t=20$.
The simulation results are depicted in Figures~\ref{fig:consensus_time_evolution} and~\ref{fig:consensus_positions}.
In addition, each agent is only allowed to track cooperation outputs in their output cooperation set.
\begin{figure}[t]
    \centerline{\begin{tikzpicture}[
    =1.5]
    %
    \node[circle, draw] (agent1) at (0.5, 1.5) {1};
    \node[circle, draw] (agent4) at (0.5, 0.2) {4};
    \draw (agent1) -- (agent4);
    \node[circle, draw] (agent2) at (2.5, 1.5) {2};
    \draw (agent2) -- (agent1);
    \node[circle, draw] (agent3) at (2.5, 0.2) {3};
    \draw (agent3) -- (agent4);
    \node[circle, draw] (agent11) at (4, 1.5) {1};
    \node[circle, draw] (agent44) at (4, 0.2) {4};
    \draw (agent11) -- (agent44);
    \node[circle, draw] (agent22) at (6, 1.5) {2};
    \node[circle, draw] (agent33) at (6, 0.2) {3};
    \draw (agent33) -- (agent44);
    
    \draw (agent22) -- (agent33);
    \node[circle, draw] (agent55) at (5, 1) {5};
    
    \draw (agent55) -- (agent44);
    \draw (agent33) -- (agent55);
\end{tikzpicture}}
    \caption{Communication topologies of the multiagent system in Section~\ref{ssec:consensus_appendix}. After some time, an agent joins the system and the topology changes from the one on the left-hand side to the other one.}
    \label{fig:graphs}
\end{figure}
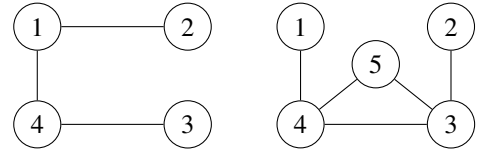

\begin{figure}[t]
    \setlength\axisheight{0.64\linewidth}
    \setlength\axiswidth{0.95\linewidth}
    \centering
\begin{tikzpicture}

\definecolor{crimson2143940}{RGB}{214,39,40}
\definecolor{darkgray176}{RGB}{176,176,176}
\definecolor{darkorange25512714}{RGB}{255,127,14}
\definecolor{forestgreen4416044}{RGB}{44,160,44}
\definecolor{lightgray204}{RGB}{204,204,204}
\definecolor{mediumpurple148103189}{RGB}{148,103,189}
\definecolor{steelblue31119180}{RGB}{31,119,180}

\begin{axis}[
height=\axisheight,
legend cell align={left},
legend style={fill opacity=0.8, draw opacity=1, text opacity=1, draw=lightgray204},
minor tick num=2,
minor xtick={-1,1,2,3,4,6,7,8,9,11,12,13,14,16,17,18,19,21,22,23,24,26,27,28,29,31,32,33,34,36,37,38,39,41,42},
minor ytick={-2.2,-1.8,-1.6,-1.4,-1.2,-0.8,-0.600000000000001,-0.4,-0.2,0.199999999999999,0.399999999999999,0.6,0.799999999999999,1.2,1.4,1.6,1.8,2.2,2.4,2.6,2.8,3.2},
tick align=outside,
tick pos=left,
width=\axiswidth,
x grid style={darkgray176},
xmajorgrids,
xmin=-2, xmax=42,
xtick distance=10,
xtick style={color=black},
xtick={-5,0,5,10,15,20,25,30,35,40,45},
y grid style={darkgray176},
ylabel={\(\displaystyle x_{i,1}\)},
ymajorgrids,
ymin=-2.25, ymax=3.25,
ytick distance=0.2,
ytick style={color=black},
ytick={-3,-2,-1,0,1,2,3,4}
]
\addplot [thick, steelblue31119180]
table {%
0 -1
1 -1
2 -0.755395779484107
3 -0.521629151612041
4 -0.32984821964809
5 -0.175494336756065
6 -0.0560050145068443
7 0.0328676422811684
8 0.0988662338663227
9 0.149477398977528
10 0.190660276477376
11 0.225410269011194
12 0.254753866319491
13 0.278973431372997
14 0.298458005443829
15 0.313831844161424
16 0.325816537652212
17 0.33510371848072
18 0.342291875202755
19 0.347869061701015
20 0.352219000111249
21 0.355636463791838
22 0.360091826642905
23 0.345830185428188
24 0.33416774238222
25 0.323902612565162
26 0.31497312931404
27 0.307521717089985
28 0.301505459298685
29 0.296754167555881
30 0.293058463546979
31 0.290215554732902
32 0.288047090341809
33 0.286404036724803
34 0.285165790884554
35 0.284236859306615
36 0.283542777266934
37 0.283026112030123
38 0.282642923468004
39 0.282359798746051
40 0.282151450520185
};
\addlegendentry{\footnotesize Agent 1}
\addplot [thick, darkorange25512714, dotted]
table {%
0 2
1 2
2 1.74999999300975
3 1.49999998496034
4 1.24999997638245
5 1.00000015208219
6 0.804104361924818
7 0.65925757369116
8 0.553880547213635
9 0.47829051180758
10 0.425392158878422
11 0.38956745561856
12 0.36627707275996
13 0.351911505461114
14 0.34374208183602
15 0.339773648272676
16 0.338575697315939
17 0.339139224552324
18 0.340764692386504
19 0.34297630119324
20 0.345457524310055
21 0.348003527580185
22 0.357143479029148
23 0.348630876152348
24 0.338184905420051
25 0.328504755604659
26 0.319680363699741
27 0.311990218964955
28 0.305561042298417
29 0.300335389963096
30 0.296165881335067
31 0.29288095397137
32 0.290315970356433
33 0.288325773509909
34 0.286788345147052
35 0.285604191187057
36 0.284693829228723
37 0.283994656745592
38 0.283457854827852
39 0.283045621379484
40 0.282728831444349
};
\addlegendentry{\footnotesize Agent 2}
\addplot [thick, forestgreen4416044, dash pattern=on 1pt off 3pt on 3pt off 3pt]
table {%
0 3
1 3
2 2.74999999045562
3 2.49999998087327
4 2.24999997149976
5 1.99999996167458
6 1.7499999531201
7 1.49999994399346
8 1.24999994019792
9 1.01513082692168
10 0.838509431856992
11 0.712415317724158
12 0.62371767559853
13 0.561188415486135
14 0.516639464680813
15 0.484443287035556
16 0.460799715235358
17 0.4431436845873
18 0.42973366076113
19 0.41937742602689
20 0.411250914244064
21 0.404778367062688
22 0.359490197532871
23 0.32374266546166
24 0.302914471758438
25 0.291314959271386
26 0.285051066665947
27 0.28188296638802
28 0.280457312776269
29 0.279961322717961
30 0.27992772233859
31 0.280100037477465
32 0.280343775609239
33 0.280592469641524
34 0.280816500540857
35 0.281005525775977
36 0.281158747217889
37 0.281279636003321
38 0.281373167891791
39 0.281444451744054
40 0.28149811383802
};
\addlegendentry{\footnotesize Agent 3}
\addplot [thick, crimson2143940, dashed]
table {%
0 -2
1 -2
2 -1.74999999026939
3 -1.49999998063923
4 -1.24999997074422
5 -0.999999961676214
6 -0.749999953068218
7 -0.499999945366214
8 -0.249999942498941
9 -0.0326777225641873
10 0.118443482813279
11 0.217008444225522
12 0.279928237471588
13 0.319658813444295
14 0.344494839015049
15 0.359812391191461
16 0.369065571459966
17 0.374468187917924
18 0.37743946891811
19 0.378890098836883
20 0.379404587538742
21 0.3793575774157
22 0.334478506232459
23 0.30701906601671
24 0.291971946966193
25 0.284000577671107
26 0.280103189332496
27 0.27850070702342
28 0.278108526335986
29 0.278294780503423
30 0.278714874850069
31 0.27919319881952
32 0.279647509530147
33 0.280044767422065
34 0.280376634870321
35 0.280646294612165
36 0.280861560809819
37 0.281031426512436
38 0.281164454168403
39 0.281268128634328
40 0.281348692061499
};
\addlegendentry{\footnotesize Agent 4}
\addplot [thick, mediumpurple148103189, dash pattern=on 3pt off 1pt on 1pt off 1pt on 1pt off 1pt]
table {%
0 0
1 0
2 0
3 0
4 0
5 0
6 0
7 0
8 0
9 0
10 0
11 0
12 0
13 0
14 0
15 0
16 0
17 0
18 0
19 0
20 0
21 2.08448641096549e-37
22 0.0702824323261579
23 0.134144818711088
24 0.180265588125721
25 0.211551943802292
26 0.23260425672023
27 0.24691064225779
28 0.256769071483239
29 0.263657053041329
30 0.268531347898642
31 0.2720199991278
32 0.274541460786155
33 0.276378905484745
34 0.277726961789156
35 0.278721390116492
36 0.279458164041008
37 0.28000592986808
38 0.280414280426967
39 0.280719343393222
40 0.280947617832453
};
\addlegendentry{\footnotesize Agent 5}
\end{axis}

\end{tikzpicture}
\begin{tikzpicture}

\definecolor{crimson2143940}{RGB}{214,39,40}
\definecolor{darkgray176}{RGB}{176,176,176}
\definecolor{darkorange25512714}{RGB}{255,127,14}
\definecolor{forestgreen4416044}{RGB}{44,160,44}
\definecolor{lightgray204}{RGB}{204,204,204}
\definecolor{mediumpurple148103189}{RGB}{148,103,189}
\definecolor{steelblue31119180}{RGB}{31,119,180}

\begin{axis}[
height=\axisheight,
legend cell align={left},
legend style={fill opacity=0.8, draw opacity=1, text opacity=1, draw=lightgray204},
minor tick num=2,
minor xtick={-1,1,2,3,4,6,7,8,9,11,12,13,14,16,17,18,19,21,22,23,24,26,27,28,29,31,32,33,34,36,37,38,39,41,42},
minor ytick={-2.2,-1.8,-1.6,-1.4,-1.2,-0.8,-0.600000000000001,-0.4,-0.2,0.199999999999999,0.399999999999999,0.6,0.799999999999999,1.2,1.4,1.6,1.8,2.2,2.4,2.6,2.8,3.2,3.4,3.6,3.8,4.2},
tick align=outside,
tick pos=left,
width=\axiswidth,
x grid style={darkgray176},
xlabel={\(\displaystyle t\) (time steps)},
xmajorgrids,
xmin=-2, xmax=42,
xtick distance=10,
xtick style={color=black},
xtick={-5,0,5,10,15,20,25,30,35,40,45},
y grid style={darkgray176},
ylabel={\(\displaystyle x_{i,2}\)},
ymajorgrids,
ymin=-2.3, ymax=4.3,
ytick distance=0.2,
ytick style={color=black},
ytick={-3,-2,-1,0,1,2,3,4,5}
]
\addplot [thick, steelblue31119180]
table {%
0 4
1 4
2 3.74999999042473
3 3.49999998061534
4 3.24999997131904
5 2.99999996226599
6 2.74999995279296
7 2.4999999437503
8 2.24999994784223
9 2.03875193755962
10 1.88395670145372
11 1.77252303788485
12 1.69122405198775
13 1.63064041232951
14 1.58444909677035
15 1.54844255493609
16 1.51980664280901
17 1.49663449688271
18 1.4776099868216
19 1.46180484138483
20 1.44854898859572
21 1.43734721215084
22 1.41495328201562
23 1.24968383463901
24 1.0984303061127
25 0.963990118994912
26 0.843520690674519
27 0.735622641131951
28 0.641308425536834
29 0.561150966835779
30 0.495121688843148
31 0.442112600349156
32 0.400365528346471
33 0.367890130630763
34 0.342842681566755
35 0.3236492540314
36 0.309015609219561
37 0.297901882726989
38 0.289486789238247
39 0.283130011709557
40 0.278337076824124
};
\addlegendentry{\footnotesize Agent 1}
\addplot [thick, darkorange25512714, dotted]
table {%
0 1.8
1 1.8
2 1.88500000438946
3 1.94875000755298
4 1.98168751170103
5 1.9952343858108
6 1.9996304812161
7 2.00055684097265
8 1.99371538743869
9 1.96918837278026
10 1.9304454638687
11 1.88351663395549
12 1.83359246744309
13 1.78414350025388
14 1.73719257783444
15 1.69378334682205
16 1.6543481288268
17 1.618953024326
18 1.58745386209719
19 1.55959375044671
20 1.53506324762382
21 1.51353677578989
22 1.46218377861693
23 1.30225907172149
24 1.13915244884653
25 0.995100183136813
26 0.868228862373277
27 0.756404814069108
28 0.659372043904482
29 0.577023055881943
30 0.508958883668848
31 0.454017169193515
32 0.410473856295288
33 0.376385964379935
34 0.349925614653111
35 0.329516932714439
36 0.313854011499549
37 0.301879231738672
38 0.292750514574149
39 0.285806245171978
40 0.280531833113138
};
\addlegendentry{\footnotesize Agent 2}
\addplot [thick, forestgreen4416044, dash pattern=on 1pt off 3pt on 3pt off 3pt]
table {%
0 -1.5
1 -1.5
2 -1.25000006005228
3 -1.00000005271286
4 -0.750000046223525
5 -0.500000038231121
6 -0.250002082926578
7 -0.0105677870198521
8 0.197087315150498
9 0.373842879934707
10 0.523239422916116
11 0.648885431118817
12 0.754453770498481
13 0.843309475942278
14 0.918310231159711
15 0.981817363271307
16 1.03576631096453
17 1.08173901730225
18 1.12102827138089
19 1.15469283656738
20 1.18360327037632
21 1.20847905530654
22 0.983354835403826
23 0.733354854103951
24 0.556255440832854
25 0.442556788658922
26 0.366254685131644
27 0.3133298035995
28 0.277493281372998
29 0.255668604991744
30 0.244680614137959
31 0.24103949059862
32 0.241583614968593
33 0.244079672903399
34 0.247217035201802
35 0.250323624406596
36 0.253100301598772
37 0.255447098491176
38 0.257363133530245
39 0.258891826094517
40 0.260091960553403
};
\addlegendentry{\footnotesize Agent 3}
\addplot [thick, crimson2143940, dashed]
table {%
0 0
1 4.01959415208098e-33
2 0.249999939886475
3 0.490330075142765
4 0.664779948131877
5 0.789475438726556
6 0.884385367719881
7 0.960136807255415
8 1.02199954198909
9 1.07215545616288
10 1.11216431424802
11 1.14448592699601
12 1.17147126124734
13 1.19472323696347
14 1.21517641237298
15 1.23335502114823
16 1.24956877528666
17 1.26402480265964
18 1.27688438005825
19 1.28828915001329
20 1.2983717175164
21 1.30725882633373
22 1.0661459267067
23 0.81614593892497
24 0.606657725863637
25 0.465504213142688
26 0.371870138201531
27 0.31020085216568
28 0.271065559541671
29 0.249004676233812
30 0.239077234962686
31 0.236788471908831
32 0.238512307063306
33 0.241904934799267
34 0.245681644236449
35 0.249228487289522
36 0.252302386867713
37 0.254848422405538
38 0.256898827751844
39 0.258519948054361
40 0.25978574472874
};
\addlegendentry{\footnotesize Agent 4}
\addplot [thick, mediumpurple148103189, dash pattern=on 3pt off 1pt on 1pt off 1pt on 1pt off 1pt]
table {%
0 -2
1 -2
2 -2
3 -2
4 -2
5 -2
6 -2
7 -2
8 -2
9 -2
10 -2
11 -2
12 -2
13 -2
14 -2
15 -2
16 -2
17 -2
18 -2
19 -2
20 -2
21 -2
22 -1.74999999070432
23 -1.49999998124439
24 -1.24999997212996
25 -0.999999963524316
26 -0.749999954398116
27 -0.499999951796539
28 -0.282282830731541
29 -0.124553528143585
30 -0.0141167617395606
31 0.0632910112750571
32 0.118024983185928
33 0.157106598270558
34 0.185272820695173
35 0.205742039083515
36 0.220723880995456
37 0.231754115127957
38 0.239913769016258
39 0.245972876116793
40 0.250485716483043
};
\addlegendentry{\footnotesize Agent 5}
\end{axis}

\end{tikzpicture}
    \caption{Evolution over time of the first two states (the outputs) of the multiagent system in Section~\ref{ssec:consensus_appendix}. The eventual consensus value changes after an additional agent joins the system.}
    \label{fig:consensus_time_evolution}
\end{figure}
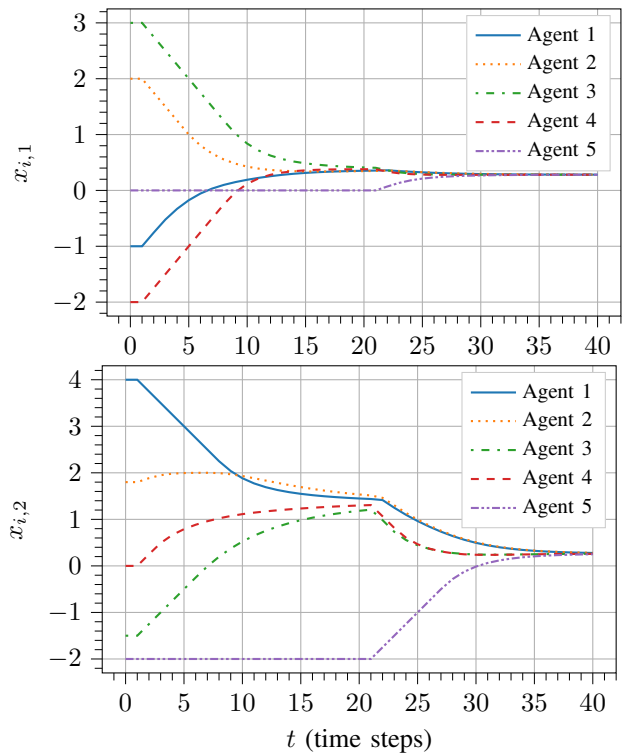

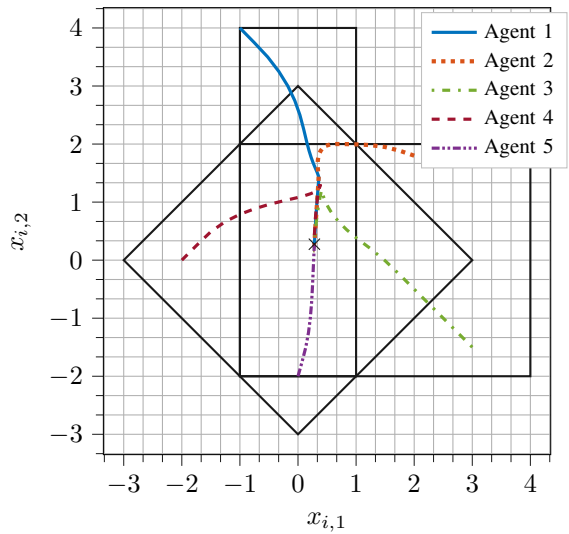
\begin{figure}[b]
    \setlength\axisheight{0.85\linewidth}
    \setlength\axiswidth{0.85\linewidth}
    \centering
    \begin{tikzpicture}

\definecolor{black}{RGB}{25,25,25}
\definecolor{red}{RGB}{159,20,47}
\definecolor{grey}{RGB}{176,176,176}
\definecolor{orange}{RGB}{217,83,25}
\definecolor{green}{RGB}{119,172,48}
\definecolor{lightgrey}{RGB}{204,204,204}
\definecolor{purple}{RGB}{126,47,142}
\definecolor{blue}{RGB}{0,114,189}

\begin{axis}[
height=\axisheight,
legend cell align={left},
legend style={
  fill opacity=0.8,
  draw opacity=1,
  text opacity=1,
  at={(0.71,0.99)},
  anchor=north west,
  draw=lightgray
},
minor tick num=2,
minor xtick={},
minor ytick={},
tick align=outside,
tick pos=left,
width=\axiswidth,
x grid style={grey},
xlabel={\(\displaystyle x_{i,1}\)},
xmajorgrids,
xmin=-3.35, xmax=4.35,
xminorgrids,
xtick distance=10,
xtick style={color=black},
xtick={-4,-3,-2,-1,0,1,2,3,4,5},
y grid style={grey},
ylabel={\(\displaystyle x_{i,2}\)},
ymajorgrids,
ymin=-3.35, ymax=4.35,
yminorgrids,
ytick distance=0.2,
ytick style={color=black},
ytick={-4,-3,-2,-1,0,1,2,3,4,5}
]
\path [draw=black, thick]
(axis cs:1,-2)
--(axis cs:1,4)
--(axis cs:-1,4)
--(axis cs:-1,-2)
--cycle;
\path [draw=black, thick]
(axis cs:4,-2)
--(axis cs:4,2)
--(axis cs:-1,2)
--(axis cs:-1,-2)
--cycle;
\path [draw=black, thick]
(axis cs:3,0)
--(axis cs:0,3)
--(axis cs:-3,0)
--(axis cs:0,-3)
--cycle;
\addplot [very thick, blue]
table {%
-1 4
-0.755395779484107 3.74999999042473
-0.521629151612041 3.49999998061534
-0.32984821964809 3.24999997131904
-0.175494336756065 2.99999996226599
-0.0560050145068443 2.74999995279296
0.0328676422811684 2.4999999437503
0.0988662338663227 2.24999994784223
0.149477398977528 2.03875193755962
0.190660276477376 1.88395670145372
0.225410269011194 1.77252303788485
0.254753866319491 1.69122405198775
0.278973431372997 1.63064041232951
0.298458005443829 1.58444909677035
0.313831844161424 1.54844255493609
0.325816537652212 1.51980664280901
0.33510371848072 1.49663449688271
0.342291875202755 1.4776099868216
0.347869061701015 1.46180484138483
0.352219000111249 1.44854898859572
0.355636463791838 1.43734721215084
0.360091826642905 1.41495328201562
0.345830185428188 1.24968383463901
0.33416774238222 1.0984303061127
0.323902612565162 0.963990118994912
0.31497312931404 0.843520690674519
0.307521717089985 0.735622641131951
0.301505459298685 0.641308425536834
0.296754167555881 0.561150966835779
0.293058463546979 0.495121688843148
0.290215554732902 0.442112600349156
0.288047090341809 0.400365528346471
0.286404036724803 0.367890130630763
0.285165790884554 0.342842681566755
0.284236859306615 0.3236492540314
0.283542777266934 0.309015609219561
0.283026112030123 0.297901882726989
0.282642923468004 0.289486789238247
0.282359798746051 0.283130011709557
0.282151450520185 0.278337076824124
0.281998813902132 0.274728801333499
};
\addlegendentry{\footnotesize Agent 1}
\addplot [black, mark=x, mark size=3, mark options={solid}, forget plot]
table {%
0.281998813902132 0.274728801333499
};
\addplot [ultra thick, orange, dotted]
table {%
2 1.8
1.74999999300975 1.88500000438946
1.49999998496034 1.94875000755298
1.24999997638245 1.98168751170103
1.00000015208219 1.9952343858108
0.804104361924818 1.9996304812161
0.65925757369116 2.00055684097265
0.553880547213635 1.99371538743869
0.47829051180758 1.96918837278026
0.425392158878422 1.9304454638687
0.38956745561856 1.88351663395549
0.36627707275996 1.83359246744309
0.351911505461114 1.78414350025388
0.34374208183602 1.73719257783444
0.339773648272676 1.69378334682205
0.338575697315939 1.6543481288268
0.339139224552324 1.618953024326
0.340764692386504 1.58745386209719
0.34297630119324 1.55959375044671
0.345457524310055 1.53506324762382
0.348003527580185 1.51353677578989
0.357143479029148 1.46218377861693
0.348630876152348 1.30225907172149
0.338184905420051 1.13915244884653
0.328504755604659 0.995100183136813
0.319680363699741 0.868228862373277
0.311990218964955 0.756404814069108
0.305561042298417 0.659372043904482
0.300335389963096 0.577023055881943
0.296165881335067 0.508958883668848
0.29288095397137 0.454017169193515
0.290315970356433 0.410473856295288
0.288325773509909 0.376385964379935
0.286788345147052 0.349925614653111
0.285604191187057 0.329516932714439
0.284693829228723 0.313854011499549
0.283994656745592 0.301879231738672
0.283457854827852 0.292750514574149
0.283045621379484 0.285806245171978
0.282728831444349 0.280531833113138
0.282485124326775 0.276530054400088
};
\addlegendentry{\footnotesize Agent 2}
\addplot [very thick, green, dash pattern=on 1pt off 3pt on 3pt off 3pt]
table {%
3 -1.5
2.74999999045562 -1.25000006005228
2.49999998087327 -1.00000005271286
2.24999997149976 -0.750000046223525
1.99999996167458 -0.500000038231121
1.7499999531201 -0.250002082926578
1.49999994399346 -0.0105677870198521
1.24999994019792 0.197087315150498
1.01513082692168 0.373842879934707
0.838509431856992 0.523239422916116
0.712415317724158 0.648885431118817
0.62371767559853 0.754453770498481
0.561188415486135 0.843309475942278
0.516639464680813 0.918310231159711
0.484443287035556 0.981817363271307
0.460799715235358 1.03576631096453
0.4431436845873 1.08173901730225
0.42973366076113 1.12102827138089
0.41937742602689 1.15469283656738
0.411250914244064 1.18360327037632
0.404778367062688 1.20847905530654
0.359490197532871 0.983354835403826
0.32374266546166 0.733354854103951
0.302914471758438 0.556255440832854
0.291314959271386 0.442556788658922
0.285051066665947 0.366254685131644
0.28188296638802 0.3133298035995
0.280457312776269 0.277493281372998
0.279961322717961 0.255668604991744
0.27992772233859 0.244680614137959
0.280100037477465 0.24103949059862
0.280343775609239 0.241583614968593
0.280592469641524 0.244079672903399
0.280816500540857 0.247217035201802
0.281005525775977 0.250323624406596
0.281158747217889 0.253100301598772
0.281279636003321 0.255447098491176
0.281373167891791 0.257363133530245
0.281444451744054 0.258891826094517
0.28149811383802 0.260091960553403
0.281538078593401 0.261023197188321
};
\addlegendentry{\footnotesize Agent 3}
\addplot [very thick, red, dashed]
table {%
-2 4.01959415208098e-33
-1.74999999026939 0.249999939886475
-1.49999998063923 0.490330075142765
-1.24999997074422 0.664779948131877
-0.999999961676214 0.789475438726556
-0.749999953068218 0.884385367719881
-0.499999945366214 0.960136807255415
-0.249999942498941 1.02199954198909
-0.0326777225641873 1.07215545616288
0.118443482813279 1.11216431424802
0.217008444225522 1.14448592699601
0.279928237471588 1.17147126124734
0.319658813444295 1.19472323696347
0.344494839015049 1.21517641237298
0.359812391191461 1.23335502114823
0.369065571459966 1.24956877528666
0.374468187917924 1.26402480265964
0.37743946891811 1.27688438005825
0.378890098836883 1.28828915001329
0.379404587538742 1.2983717175164
0.3793575774157 1.30725882633373
0.334478506232459 1.0661459267067
0.30701906601671 0.81614593892497
0.291971946966193 0.606657725863637
0.284000577671107 0.465504213142688
0.280103189332496 0.371870138201531
0.27850070702342 0.31020085216568
0.278108526335986 0.271065559541671
0.278294780503423 0.249004676233812
0.278714874850069 0.239077234962686
0.27919319881952 0.236788471908831
0.279647509530147 0.238512307063306
0.280044767422065 0.241904934799267
0.280376634870321 0.245681644236449
0.280646294612165 0.249228487289522
0.280861560809819 0.252302386867713
0.281031426512436 0.254848422405538
0.281164454168403 0.256898827751844
0.281268128634328 0.258519948054361
0.281348692061499 0.25978574472874
0.281411203159034 0.260765588018302
};
\addlegendentry{\footnotesize Agent 4}
\addplot [very thick, purple, dash pattern=on 3pt off 1pt on 1pt off 1pt on 1pt off 1pt]
table {%
0 -2
0 -2
0 -2
0 -2
0 -2
0 -2
0 -2
0 -2
0 -2
0 -2
0 -2
0 -2
0 -2
0 -2
0 -2
0 -2
0 -2
0 -2
0 -2
0 -2
2.08448641096549e-37 -2
0.0702824323261579 -1.74999999070432
0.134144818711088 -1.49999998124439
0.180265588125721 -1.24999997212996
0.211551943802292 -0.999999963524316
0.23260425672023 -0.749999954398116
0.24691064225779 -0.499999951796539
0.256769071483239 -0.282282830731541
0.263657053041329 -0.124553528143585
0.268531347898642 -0.0141167617395606
0.2720199991278 0.0632910112750571
0.274541460786155 0.118024983185928
0.276378905484745 0.157106598270558
0.277726961789156 0.185272820695173
0.278721390116492 0.205742039083515
0.279458164041008 0.220723880995456
0.28000592986808 0.231754115127957
0.280414280426967 0.239913769016258
0.280719343393222 0.245972876116793
0.280947617832453 0.250485716483043
0.281118648162559 0.253854828950455
};
\addlegendentry{\footnotesize Agent 5}
\end{axis}

\end{tikzpicture}
    \caption{Evolution of the first two states (the outputs) of the multiagent system in Section~\ref{ssec:consensus_appendix}. The eventual consensus value (black X) changes after an additional agent joins the system.
    In addition, the output constraint sets are depicted.}
    \label{fig:consensus_positions}
\end{figure}%

\subsection{Formation control with perturbed initial conditions}\label{ssec:formation_appendix}
We revisit the simulation example in Section~\ref{ssec:formation} to highlight that the deliberate choice of initial conditions led to difficulties for the solver.
Here, we choose the slightly perturbed initial conditions
\begin{align*}
    x_1^\mr{ic} &= \begin{bsmallmatrix} 10^{-5} & 0 & 1 & 0 & 0 & 0 & 0 & 0 & 0 & 0 \end{bsmallmatrix}^\top,
    \\
    x_2^\mr{ic} &= \begin{bsmallmatrix} -10^{-5} & 10^{-5} & 2 & 0 & 0 & 0 & 0 & 0 & 0 & 0 \end{bsmallmatrix}^\top,
    \\
    x_3^\mr{ic} &= \begin{bsmallmatrix} -10^{-5} & -10^{-5} & 3 & 0 & 0 & 0 & 0 & 0 & 0 & 0 \end{bsmallmatrix}^\top.
\end{align*}
The simulation results are depicted in Figures~\ref{fig:quadcopter_altitude2}--\ref{fig:quadcopter_outputs2}.
Note that the final configuration is achieved much quicker than in Section~\ref{ssec:formation} (cf. Figure~\ref{fig:quadcopter_altitude}).
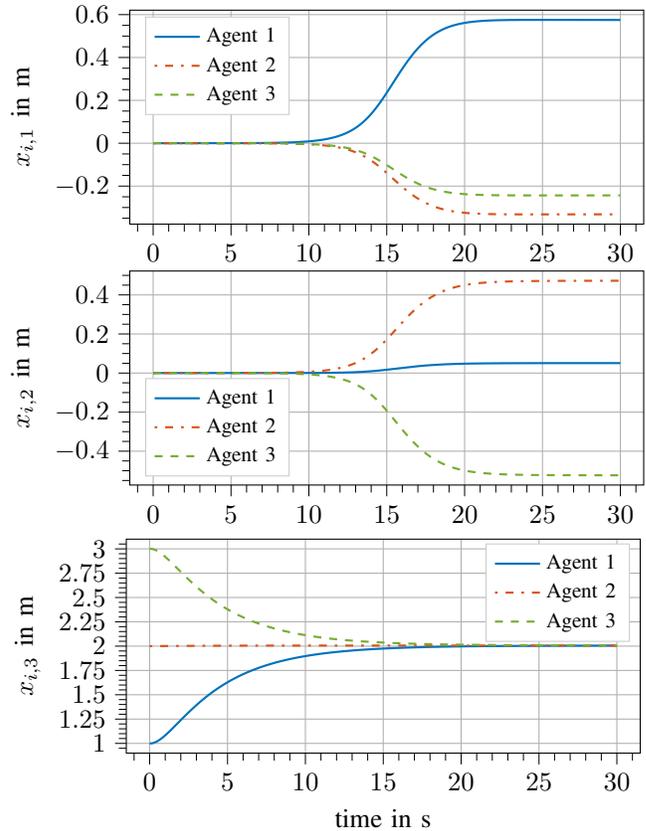
\begin{figure}[t]
    \setlength\axisheight{0.5\linewidth}
    \setlength\axiswidth{0.95\linewidth}
    \centerline{
\begin{tikzpicture}

\definecolor{grey}{RGB}{176,176,176}
\definecolor{orange}{RGB}{217,83,25}
\definecolor{green}{RGB}{119,172,48}
\definecolor{lightgrey}{RGB}{204,204,204}
\definecolor{blue}{RGB}{0,114,189}

\begin{axis}[
height=\axisheight,
legend cell align={left},
legend style={
  fill opacity=0.8,
  draw opacity=1,
  text opacity=1,
  at={(0.03,0.97)},
  anchor=north west,
  draw=lightgrey
},
minor tick num=2,
minor xtick={-1,1,2,3,4,6,7,8,9,11,12,13,14,16,17,18,19,21,22,23,24,26,27,28,29,31},
minor ytick={-0.35,-0.3,-0.25,-0.15,-0.1,-0.05,0.05,0.1,0.15,0.25,0.3,0.35,0.45,0.5,0.55},
tick align=outside,
tick pos=left,
width=\axiswidth,
x grid style={grey},
xmajorgrids,
xmin=-1.5, xmax=31.5,
xtick distance=10,
xtick style={color=black},
xtick={-5,0,5,10,15,20,25,30,35},
y grid style={grey},
ylabel={\(\displaystyle x_{i,1}\) in m},
ymajorgrids,
ymin=-0.377411557256483, ymax=0.620916492215227,
ytick distance=0.2,
ytick style={color=black},
ytick={-0.4,-0.2,0,0.2,0.4,0.6,0.8}
]
\addplot [thick, blue]
table {%
0 1.00135803222656e-05
5.40000009536743 0.000336647033691406
7 0.00105011463165283
8 0.00213265419006348
8.69999980926514 0.00350010395050049
9.30000019073486 0.00535047054290771
9.80000019073486 0.00761926174163818
10.1999998092651 0.0101082324981689
10.6000003814697 0.0134077072143555
10.8999996185303 0.0165685415267944
11.1999998092651 0.0204699039459229
11.5 0.0252811908721924
11.6999998092651 0.0290937423706055
11.8999996185303 0.0334712266921997
12.1000003814697 0.0384924411773682
12.3000001907349 0.0442442893981934
12.5 0.0508221387863159
12.6999998092651 0.0583277940750122
12.8999996185303 0.0668684244155884
13.1000003814697 0.0765517950057983
13.3000001907349 0.0874824523925781
13.5 0.0997534990310669
13.6000003814697 0.106415629386902
13.6999998092651 0.113438248634338
13.8000001907349 0.120825886726379
13.8999996185303 0.128580331802368
14 0.136700868606567
14.1000003814697 0.145183444023132
14.3000001907349 0.163202524185181
14.5 0.182538390159607
14.6999998092651 0.203036546707153
14.8999996185303 0.224492073059082
15.1000003814697 0.246658205986023
15.5 0.292009472846985
15.8000001907349 0.325788974761963
16 0.347694873809814
16.2000007629395 0.368841171264648
16.3999996185303 0.389039754867554
16.6000003814697 0.408142685890198
16.7999992370605 0.426043033599854
17 0.442672610282898
17.2000007629395 0.457998156547546
17.3999996185303 0.472017288208008
17.6000003814697 0.484752655029297
17.7999992370605 0.496247887611389
18 0.506561636924744
18.2000007629395 0.515763759613037
18.3999996185303 0.523931503295898
18.6000003814697 0.531145572662354
18.7999992370605 0.537488222122192
19 0.543040752410889
19.2000007629395 0.547881364822388
19.3999996185303 0.552085041999817
19.6000003814697 0.555721879005432
19.8999996185303 0.56025505065918
20.2000007629395 0.563854932785034
20.5 0.566691637039185
20.7999992370605 0.568909406661987
21.2000007629395 0.571110486984253
21.6000003814697 0.572655320167542
22.1000003814697 0.573928833007812
22.7999992370605 0.574913501739502
23.7000007629395 0.575422406196594
25.2999992370605 0.575515389442444
30 0.575181007385254
};
\addlegendentry{\footnotesize Agent 1}
\addplot [thick, orange, dash pattern=on 1pt off 3pt on 3pt off 3pt]
table {%
0 -1.00135803222656e-05
6.19999980926514 -0.000332951545715332
7.80000019073486 -0.00104212760925293
8.80000019073486 -0.00213134288787842
9.5 -0.00351798534393311
10.1000003814697 -0.00540578365325928
10.6000003814697 -0.00773108005523682
11 -0.0102901458740234
11.3000001907349 -0.0127474069595337
11.6000003814697 -0.015784740447998
11.8999996185303 -0.0195332765579224
12.1999998092651 -0.0241485834121704
12.3999996185303 -0.0277953147888184
12.6000003814697 -0.0319662094116211
12.8000001907349 -0.0367238521575928
13 -0.0421321392059326
13.1999998092651 -0.0482537746429443
13.3999996185303 -0.0551462173461914
13.6000003814697 -0.0628564357757568
13.8000001907349 -0.0714156627655029
14 -0.080832839012146
14.1999998092651 -0.091089129447937
14.3999996185303 -0.102134585380554
14.6000003814697 -0.11388635635376
14.8000001907349 -0.126230835914612
15.1000003814697 -0.145547389984131
15.8999996185303 -0.197845697402954
16.1000003814697 -0.210242033004761
16.2999992370605 -0.222106099128723
16.5 -0.233346343040466
16.7000007629395 -0.243894934654236
16.8999996185303 -0.25370717048645
17.1000003814697 -0.262759923934937
17.2999992370605 -0.271048069000244
17.5 -0.278582453727722
17.7000007629395 -0.285386562347412
17.8999996185303 -0.29149317741394
18.1000003814697 -0.296942472457886
18.2999992370605 -0.30177891254425
18.5 -0.306049823760986
18.7000007629395 -0.309803485870361
19 -0.314568042755127
19.2999992370605 -0.318430542945862
19.6000003814697 -0.321533679962158
19.8999996185303 -0.324005246162415
20.2999992370605 -0.326509714126587
20.7000007629395 -0.328311085700989
21.2000007629395 -0.329839944839478
21.7999992370605 -0.330948829650879
22.6000003814697 -0.33168363571167
23.7999992370605 -0.332015514373779
26.7999992370605 -0.331885814666748
30 -0.331745505332947
};
\addlegendentry{\footnotesize Agent 2}
\addplot [thick, green, dashed]
table {%
0 -1.00135803222656e-05
6.40000009536743 -0.000335812568664551
8.10000038146973 -0.00108671188354492
9.10000038146973 -0.00217294692993164
9.80000019073486 -0.00353026390075684
10.3999996185303 -0.00535058975219727
10.8999996185303 -0.00756430625915527
11.3000001907349 -0.00997400283813477
11.6999998092651 -0.0131421089172363
12 -0.0161502361297607
12.3000001907349 -0.0198267698287964
12.6000003814697 -0.0243033170700073
12.8999996185303 -0.0297245979309082
13.1000003814697 -0.0339373350143433
13.3000001907349 -0.0386782884597778
13.5 -0.043984055519104
13.6999998092651 -0.0498825311660767
13.8999996185303 -0.0563883781433105
14.1000003814697 -0.063498854637146
14.3000001907349 -0.0711908340454102
14.5 -0.0794185400009155
14.8000001907349 -0.0926095247268677
15.1000003814697 -0.106534957885742
16 -0.148855566978455
16.2999992370605 -0.161919116973877
16.5 -0.170092582702637
16.7000007629395 -0.17777681350708
16.8999996185303 -0.184938311576843
17.1000003814697 -0.191558599472046
17.2999992370605 -0.197632431983948
17.5 -0.203166127204895
17.7000007629395 -0.20817506313324
17.8999996185303 -0.212681531906128
18.2000007629395 -0.218560934066772
18.5 -0.223479628562927
18.7999992370605 -0.227551937103271
19.1000003814697 -0.230891466140747
19.3999996185303 -0.233606338500977
19.7999992370605 -0.236424207687378
20.2000007629395 -0.238511443138123
20.7000007629395 -0.240348815917969
21.2999992370605 -0.241755604743958
22 -0.242690563201904
23 -0.243298292160034
24.7000007629395 -0.243540048599243
30 -0.243445634841919
};
\addlegendentry{\footnotesize Agent 3}
\end{axis}

\end{tikzpicture}}
    \centerline{
\begin{tikzpicture}

\definecolor{grey}{RGB}{176,176,176}
\definecolor{orange}{RGB}{217,83,25}
\definecolor{green}{RGB}{119,172,48}
\definecolor{lightgrey}{RGB}{204,204,204}
\definecolor{blue}{RGB}{0,114,189}

\begin{axis}[
height=\axisheight,
legend cell align={left},
legend style={
  fill opacity=0.8,
  draw opacity=1,
  text opacity=1,
  at={(0.03,0.5)},
  anchor=north west,
  draw=lightgrey
},
minor tick num=2,
minor xtick={-1,1,2,3,4,6,7,8,9,11,12,13,14,16,17,18,19,21,22,23,24,26,27,28,29,31},
minor ytick={-0.55,-0.5,-0.45,-0.35,-0.3,-0.25,-0.15,-0.1,-0.05,0.05,0.1,0.15,0.25,0.3,0.35,0.45,0.5},
tick align=outside,
tick pos=left,
width=\axiswidth,
x grid style={grey},
xmajorgrids,
xmin=-1.5, xmax=31.5,
xtick distance=10,
xtick style={color=black},
xtick={-5,0,5,10,15,20,25,30,35},
y grid style={grey},
ylabel={\(\displaystyle x_{i,2}\) in m},
ymajorgrids,
ymin=-0.573234230777272, ymax=0.522280559393578,
ytick distance=0.2,
ytick style={color=black},
ytick={-0.6,-0.4,-0.2,0,0.2,0.4,0.6}
]
\addplot [thick, blue]
table {%
0 0
9.89999961853027 0.000375032424926758
11.3000001907349 0.0011744499206543
12.1999998092651 0.00240767002105713
12.8999996185303 0.0041501522064209
13.5 0.00650036334991455
14 0.00924551486968994
14.5 0.0127707719802856
15.1000003814697 0.0178946256637573
17 0.0350829362869263
17.6000003814697 0.0392204523086548
18.2000007629395 0.0424686670303345
18.7999992370605 0.0449120998382568
19.5 0.0469367504119873
20.2999992370605 0.0484499931335449
21.2999992370605 0.0495624542236328
22.7999992370605 0.0503607988357544
25.2999992370605 0.0507966279983521
30 0.0509417057037354
};
\addlegendentry{\footnotesize Agent 1}
\addplot [thick, orange, dash pattern=on 1pt off 3pt on 3pt off 3pt]
table {%
0 1.00135803222656e-05
6 0.000383257865905762
7.59999990463257 0.00118064880371094
8.60000038146973 0.00238502025604248
9.30000019073486 0.00390148162841797
9.89999961853027 0.00594818592071533
10.3999996185303 0.00845170021057129
10.8000001907349 0.0111920833587646
11.1999998092651 0.0148164033889771
11.5 0.0182796716690063
11.8000001907349 0.0225424766540527
12.1000003814697 0.0277804136276245
12.3000001907349 0.0319151878356934
12.5 0.0366439819335938
12.6999998092651 0.0420417785644531
12.8999996185303 0.0481877326965332
13.1000003814697 0.0551636219024658
13.3000001907349 0.0630500316619873
13.5 0.0719225406646729
13.6999998092651 0.0818454027175903
13.8999996185303 0.092864990234375
14.1000003814697 0.105002403259277
14.3000001907349 0.118246912956238
14.5 0.132551074028015
14.6999998092651 0.147827744483948
14.8999996185303 0.163951635360718
15.1000003814697 0.180763483047485
15.3999996185303 0.206860899925232
16 0.259689688682556
16.2999992370605 0.285101413726807
16.5 0.301294326782227
16.7000007629395 0.316744804382324
16.8999996185303 0.331361532211304
17.1000003814697 0.345080018043518
17.2999992370605 0.357861757278442
17.5 0.369690775871277
17.7000007629395 0.380570411682129
17.8999996185303 0.390520095825195
18.1000003814697 0.399572491645813
18.2999992370605 0.407769322395325
18.5 0.415159463882446
18.7000007629395 0.421796441078186
18.8999996185303 0.427735686302185
19.1000003814697 0.433033585548401
19.2999992370605 0.437745809555054
19.6000003814697 0.443833231925964
19.8999996185303 0.448894500732422
20.2000007629395 0.45308518409729
20.5 0.45654296875
20.8999996185303 0.460219264030457
21.2999992370605 0.46304452419281
21.7999992370605 0.465668439865112
22.3999996185303 0.46785581111908
23.1000003814697 0.469517707824707
24 0.470790982246399
25.2000007629395 0.471671581268311
27.2000007629395 0.472254872322083
30 0.472484469413757
};
\addlegendentry{\footnotesize Agent 2}
\addplot [thick, green, dashed]
table {%
0 -1.00135803222656e-05
5.90000009536743 -0.000377893447875977
7.5 -0.00117850303649902
8.5 -0.00239861011505127
9.19999980926514 -0.0039442777633667
9.80000019073486 -0.00604057312011719
10.3000001907349 -0.00861525535583496
10.6999998092651 -0.0114431381225586
11 -0.0141552686691284
11.3000001907349 -0.0175061225891113
11.6000003814697 -0.0216422080993652
11.8999996185303 -0.0267406702041626
12.1000003814697 -0.0307772159576416
12.3000001907349 -0.0354061126708984
12.5 -0.0407058000564575
12.6999998092651 -0.0467610359191895
12.8999996185303 -0.0536613464355469
13.1000003814697 -0.0614984035491943
13.3000001907349 -0.0703624486923218
13.5 -0.0803366899490356
13.6999998092651 -0.0914907455444336
13.8999996185303 -0.103873014450073
14.1000003814697 -0.117502212524414
14.3000001907349 -0.132360339164734
14.5 -0.148387432098389
14.6999998092651 -0.165479302406311
14.8999996185303 -0.183489799499512
15.1000003814697 -0.20223593711853
15.3999996185303 -0.231269955635071
16 -0.289807558059692
16.2000007629395 -0.30866265296936
16.3999996185303 -0.32688045501709
16.6000003814697 -0.344317436218262
16.7999992370605 -0.360861897468567
17 -0.376432418823242
17.2000007629395 -0.390977144241333
17.3999996185303 -0.404469966888428
17.6000003814697 -0.416907787322998
17.7999992370605 -0.428306579589844
18 -0.438697695732117
18.2000007629395 -0.448124408721924
18.3999996185303 -0.45663845539093
18.6000003814697 -0.464297294616699
18.7999992370605 -0.471161842346191
19 -0.477294325828552
19.2000007629395 -0.482756614685059
19.3999996185303 -0.487609148025513
19.6000003814697 -0.491909623146057
19.8999996185303 -0.497444629669189
20.2000007629395 -0.50203013420105
20.5 -0.50581693649292
20.7999992370605 -0.508935928344727
21.2000007629395 -0.512248754501343
21.6000003814697 -0.514794945716858
22.1000003814697 -0.517163276672363
22.7000007629395 -0.519144892692566
23.3999996185303 -0.520660400390625
24.2999992370605 -0.521833896636963
25.6000003814697 -0.522705078125
27.7000007629395 -0.523250341415405
30 -0.523438096046448
};
\addlegendentry{\footnotesize Agent 3}
\end{axis}

\end{tikzpicture}}
    \centerline{
\begin{tikzpicture}

\definecolor{grey}{RGB}{176,176,176}
\definecolor{orange}{RGB}{217,83,25}
\definecolor{green}{RGB}{119,172,48}
\definecolor{lightgrey}{RGB}{204,204,204}
\definecolor{blue}{RGB}{0,114,189}

\begin{axis}[
height=\axisheight,
legend cell align={left},
legend style={fill opacity=0.8, draw opacity=1, text opacity=1, draw=lightgrey},
minor tick num=2,
minor xtick={-1,1,2,3,4,6,7,8,9,11,12,13,14,16,17,18,19,21,22,23,24,26,27,28,29,31},
minor ytick={0.95,1.05,1.1,1.15,1.2,1.3,1.35,1.4,1.45,1.55,1.6,1.65,1.7,1.8,1.85,1.9,1.95,2.05,2.1,2.15,2.2,2.3,2.35,2.4,2.45,2.55,2.6,2.65,2.7,2.8,2.85,2.9,2.95,3.05},
tick align=outside,
tick pos=left,
width=\axiswidth,
x grid style={grey},
xlabel={time in s},
xmajorgrids,
xmin=-1.5, xmax=31.5,
xtick distance=10,
xtick style={color=black},
xtick={-5,0,5,10,15,20,25,30,35},
y grid style={grey},
ylabel={\(\displaystyle x_{i,3}\) in m},
ymajorgrids,
ymin=0.9, ymax=3.1,
ytick distance=0.2,
ytick style={color=black},
ytick={0.75,1,1.25,1.5,1.75,2,2.25,2.5,2.75,3,3.25}
]
\addplot [thick, blue]
table {%
0 1
0.100000023841858 1
0.200000047683716 1.00266444683075
0.299999952316284 1.00753903388977
0.399999976158142 1.01433002948761
0.5 1.02278172969818
0.600000023841858 1.03266632556915
0.700000047683716 1.04378092288971
0.799999952316284 1.05594432353973
1 1.0827910900116
1.20000004768372 1.11212027072906
1.39999997615814 1.14307522773743
1.79999995231628 1.2073210477829
2.29999995231628 1.28760290145874
2.59999990463257 1.33414173126221
2.90000009536743 1.37882161140442
3.20000004768372 1.42136013507843
3.40000009536743 1.44845283031464
3.59999990463257 1.47450757026672
3.79999995231628 1.4995197057724
4 1.52349555492401
4.30000019073486 1.55754995346069
4.59999990463257 1.58938109874725
4.90000009536743 1.61908292770386
5.19999980926514 1.64676141738892
5.5 1.67252790927887
5.80000019073486 1.6964955329895
6.09999990463257 1.7187762260437
6.40000009536743 1.73947882652283
6.69999980926514 1.75870788097382
7 1.77656328678131
7.30000019073486 1.79313921928406
7.69999980926514 1.81340336799622
8.10000038146973 1.83174669742584
8.5 1.84834825992584
8.89999961853027 1.86337172985077
9.30000019073486 1.87696599960327
9.69999980926514 1.88926613330841
10.1999998092651 1.90300738811493
10.6999998092651 1.91513228416443
11.1999998092651 1.92583060264587
11.8000001907349 1.93702006340027
12.3999996185303 1.946648478508
13.1000003814697 1.9561972618103
13.8000001907349 1.96421051025391
14.6000003814697 1.97180306911469
15.5 1.97870981693268
16.5 1.98476183414459
17.6000003814697 1.98988234996796
18.8999996185303 1.99436843395233
20.5 1.99821698665619
22.3999996185303 2.00117325782776
24.8999996185303 2.00343060493469
28.3999996185303 2.00494456291199
30 2.00530099868774
};
\addlegendentry{\footnotesize Agent 1}
\addplot [thick, orange, dash pattern=on 1pt off 3pt on 3pt off 3pt]
table {%
0 2
0.600000023841858 2.00071001052856
3.5 2.00394320487976
6.30000019073486 2.00544548034668
10.8000001907349 2.00609493255615
29.2000007629395 2.0060350894928
30 2.00603342056274
};
\addlegendentry{\footnotesize Agent 2}
\addplot [thick, green, dashed]
table {%
0 3
0.100000023841858 3
0.200000047683716 2.99748969078064
0.299999952316284 2.99277591705322
0.399999976158142 2.98614501953125
0.5 2.97785258293152
0.600000023841858 2.96812701225281
0.700000047683716 2.95717120170593
0.799999952316284 2.94516658782959
1 2.9186372756958
1.20000004768372 2.88962531089783
1.39999997615814 2.8589870929718
1.79999995231628 2.79536986351013
2.29999995231628 2.71585536003113
2.59999990463257 2.66976261138916
2.90000009536743 2.62551522254944
3.20000004768372 2.58339405059814
3.5 2.54354524612427
3.70000004768372 2.51827025413513
3.90000009536743 2.49402832984924
4.19999980926514 2.45957684516907
4.5 2.42735886573792
4.80000019073486 2.39728617668152
5.09999990463257 2.36925625801086
5.40000009536743 2.34315967559814
5.69999980926514 2.31888389587402
6 2.29631686210632
6.30000019073486 2.27534961700439
6.59999990463257 2.25587630271912
6.90000009536743 2.23779606819153
7.19999980926514 2.22101354598999
7.5 2.2054386138916
7.90000009536743 2.18640446662903
8.30000019073486 2.16918087005615
8.69999980926514 2.15359783172607
9.10000038146973 2.13950037956238
9.5 2.12674784660339
10 2.11250519752502
10.5 2.09994173049927
11 2.08886003494263
11.6000003814697 2.07727360725403
12.1999998092651 2.06730771064758
12.8999996185303 2.05742812156677
13.6000003814697 2.04914116859436
14.3999996185303 2.04129338264465
15.3000001907349 2.03415822982788
16.2999992370605 2.02790999412537
17.3999996185303 2.02262711524963
18.7000007629395 2.01800227165222
20.2000007629395 2.01424241065979
22.1000003814697 2.01112365722656
24.5 2.00881505012512
27.8999996185303 2.00721216201782
30 2.00672507286072
};
\addlegendentry{\footnotesize Agent 3}
\end{axis}

\end{tikzpicture}}
    \caption{Evolution over time of the first three states (the outputs) of the quadcopters in Section~\ref{ssec:formation_appendix} with the perturbed choice of initial conditions.}
    \label{fig:quadcopter_altitude2}
\end{figure}
\begin{figure}[t]
    \resizebox{0.85\linewidth}{!}
    {
    \centerline{\input{alt_outputs.pgf}}
    }
    \caption{Evolution of the first three states of the quadcopters in Section~\ref{ssec:formation_appendix} into an equilateral triangle. The initial condition of each trajectory is marked by a point and the final value by an X.}
    \label{fig:quadcopter_outputs2}
\end{figure}

\end{document}